\documentclass[11pt,a4paper,oneside]{article}
\usepackage[top=3cm, bottom=3cm, left=2.5cm, right=2.5cm]{geometry}
\linespread{1.4}
\usepackage[english]{babel}
\usepackage[utf8x]{inputenc}
\usepackage[T1]{fontenc}

\usepackage{amsthm,amsmath,amssymb,mathrsfs,dsfont,mathtools}
\usepackage{bm}
\usepackage{graphicx}

\usepackage[colorinlistoftodos]{todonotes}
\usepackage{booktabs,subcaption}
\usepackage[colorlinks=true, allcolors=blue, hypertexnames=false]{hyperref}
\usepackage[all]{nowidow}
\usepackage{setspace}
\usepackage{caption}
\usepackage{lscape}

\usepackage{algorithm}
\usepackage{algpseudocode}

\usepackage{float}
\usepackage{tikz} 
\usepackage{lipsum}
\usepackage{adjustbox}
\usepackage{lineno} 
\usepackage{macros}
\usepackage{verbatim,amssymb,epsfig}  
\usepackage{amsmath, amsfonts, amstext,amsthm}
\usepackage{mathtools}
\usepackage{enumerate}
\usepackage{enumitem}
\usepackage{natbib}
\usepackage{graphicx}
\usepackage[title]{appendix}
\usepackage{subcaption}
\usepackage{booktabs} 
\usepackage{hyperref}
\usepackage{dsfont}
\hypersetup{
colorlinks = true,
allcolors = blue
}
\usepackage[font=small]{caption}
\usepackage{url} 
\usepackage{authblk}

\usepackage{multirow, makecell}
\definecolor{revision_color}{HTML}{008080}

\usepackage{authblk}
\title{Scalable and robust regression models for continuous proportional data}
\date{October 2025}
\author[1]{Changwoo J. Lee\thanks{changwoo.lee@duke.edu}}
\author[1]{Benjamin K. Dahl}
\author[2]{Otso Ovaskainen}
\author[1]{David B. Dunson} 
\affil[1]{Department of Statistical Science, Duke University}
\affil[2]{Department of Biological and Environmental Science, University of Jyväskylä}
\newtheorem{theorem}{Theorem}

\newtheorem{lemma}{Lemma}
\newtheorem{proposition}{Proposition}

\theoremstyle{definition}
\newtheorem{definition}{Definition}
\newtheorem{remark}{Remark}

\def\sinhc{\mathrm{sinhc}}
\def\pkg{\normalfont{p_\textsc{kg}}}

\begin{document}

\maketitle
\begin{abstract}
Beta regression is used routinely for continuous proportional data, but it often encounters practical issues such as a lack of robustness to misspecification of the beta distribution and sensitivity to outliers. We develop an improved class of generalized linear models starting with the continuous binomial (cobin) distribution and further extending to dispersion mixtures of cobin distributions (micobin). The proposed cobin regression and micobin regression models have attractive robustness, computation, and flexibility properties. A key innovation is the Kolmogorov-Gamma data augmentation scheme, which facilitates Gibbs sampling for Bayesian computation, including in hierarchical cases involving nested, longitudinal, or spatial data.
We demonstrate robustness, ability to handle responses exactly at the boundary (0 or 1), and computational efficiency relative to beta regression
in simulation experiments and through
analysis of the benthic macroinvertebrate multimetric index of US lakes using lake watershed covariates.
\end{abstract}

\noindent%
{\it Keywords: Bayesian; Data augmentation; Generalized linear model; Latent Gaussian model; Markov chain Monte Carlo.} 
\vfill

\newpage 
\section{Introduction}

Regression analysis of proportional data, corresponding to bounded continuous data in the unit interval $[0,1]$, is a common focus in many fields. Such data arise in diverse contexts, including rates or indices in economics \citep{Brazier2002-dz}, percentage of tissue area in medical imaging \citep{Peplonska2012-lb}, and rating scales for Alzheimer's disease \citep{Rosen1984-pn}. 
We are particularly motivated by ecological applications; \citet{Warton2011-jf} estimates $\sim 14\%$ of ecology articles involve data based on proportions that are not derived from counts, with examples including measurements of percent coverage \citep{Korhonen2024-ka} and ecological indices \citep{Lindholm2021-ir}. 

Generalized linear models (GLMs) \citep{Nelder1972-hn} and their variants are fundamental tools for regression analysis. GLMs have multiple appealing theoretical and computational properties \citep{McCullagh1989-oa}. For continuous proportional data, one of two strategies is typically taken. Firstly, one may transform the data to fall on the real line and then apply a Gaussian linear model. However, such transformation-based approaches complicate the interpretation of the results in the original scale and have problems when some observations are concentrated near the boundary of the support. Alternatively, beta regression is widely applied \citep{Ferrari2004-ni, Cribari-Neto2010-aa}, assuming beta-distributed response variables supported on the open interval (0,1). We refer to \citet{Douma2019-xo} for a review of beta regression and applications in ecological contexts. 

However, beta regression has several limitations.
First, the beta distribution is neither in the natural exponential family \citep{Morris1982-qz}, nor part of the broader family of exponential dispersion models \citep{Jorgensen1987-bq}, and thus beta regression does not strictly belong to the GLM class. This implies that the mean and dispersion parameters are not orthogonal to each other \citep{Ferrari2004-ni}, and the well-established properties of GLMs may not hold.
In particular, beta regression is known to be highly sensitive to outliers \citep{Bayes2012-rr}.
Beta regression also faces computational difficulties when considering extensions to accommodate complex dependence structures. This limitation is particularly relevant in Bayesian frameworks, which support flexible hierarchical extensions \citep{Bolker2009-tg}. 
Finally, the presence of exact 0s or 1s prevents the direct application of beta regression models. This issue is often bypassed by manipulating the data to be within the open interval (0,1) \citep{Smithson2006-ia}, but the results are often highly sensitive to such preprocessing \citep{Kosmidis2025-sh}.

The motivation of this article is to introduce a proper GLM approach to continuous proportional response data, without the need for data preprocessing and facilitating computation, including in complex settings involving random effects. With these goals in mind, we propose continuous binomial (cobin) regression, a name inspired by models based on the continuous Bernoulli distribution \citep{Loaiza-Ganem2019-dl,Quijano-Xacur2019-eg}. The cobin distribution is an exponential dispersion model \citep{Jorgensen1987-bq} with orthogonal mean and dispersion parameters, implying that the corresponding GLM has appealing properties, including consistency of maximum likelihood estimates (MLE) under distributional misspecification. 
Importantly, the exponential dispersion structure also leads to bounded score functions, which provides improved robustness against boundary-proximate outliers. 
For posterior sampling using the canonical link function, we propose a novel data augmentation strategy based on Kolmogorov-Gamma random variables, leading to a conditionally normal likelihood, similar to P\'olya-Gamma augmentation for logistic models \citep{Polson2013-gb}. We develop an efficient sampler for Kolmogorov-Gamma random variables. The proposed Gibbs sampler can easily accommodate extensions to Gaussian latent variables and random effects, including in spatial contexts. Furthermore, we show uniform ergodicity of the proposed MCMC algorithms, providing theoretical guarantees for efficient inference procedures.

We also propose micobin regression, an extension of cobin regression based on dispersion mixtures of continuous binomial (micobin) distributions. 
This improves flexibility and robustness by localizing the dispersion parameter \citep{Wang2018-pd}, and can handle exact 0s or 1s without modifying the data. 
Micobin regression handles non-structural boundary values, such as those introduced by measurement precision limits or rounding during preprocessing, and is distinct from models that explicitly model boundary values by assigning positive probability mass to 0s or 1s, such as zero/one inflated or censored models \citep{Ospina2012-bh, Kubinec2023-it, Kosmidis2025-sh}. 
We describe how micobin regression can be further extended to allow the dispersion parameter to systematically vary according to covariates. Through simulations, we show that micobin regression achieves better predictive performance under distributional misspecification and greater robustness to outliers compared to beta regression and regression models based on mixtures of beta distributions \citep{Hahn2008-ch, Bayes2012-rr}.

As an illustration, we analyze the benthic macroinvertebrate multimetric index (MMI) of US lakes, an index ranging from 0 to 100 (scaled by 0.01 throughout the article) that reflects the condition of lake macroinvertebrate communities \citep{Stoddard2008-fb}. We examine association between MMI and lake watershed covariates, and use them to predict MMI at unsampled lakes; see Figure~\ref{fig:mmidata}. Since accounting for spatial dependence is highly important in such spatial ecological data \citep{Guelat2018-us}, we fit spatial cobin and micobin regression models with latent Gaussian process random effects. We compare the results with those from spatial beta regression, highlighting the robustness and computational advantages provided by Kolmogorov-Gamma augmentation.

\begin{figure}
    \centering
    \includegraphics[width=\linewidth]{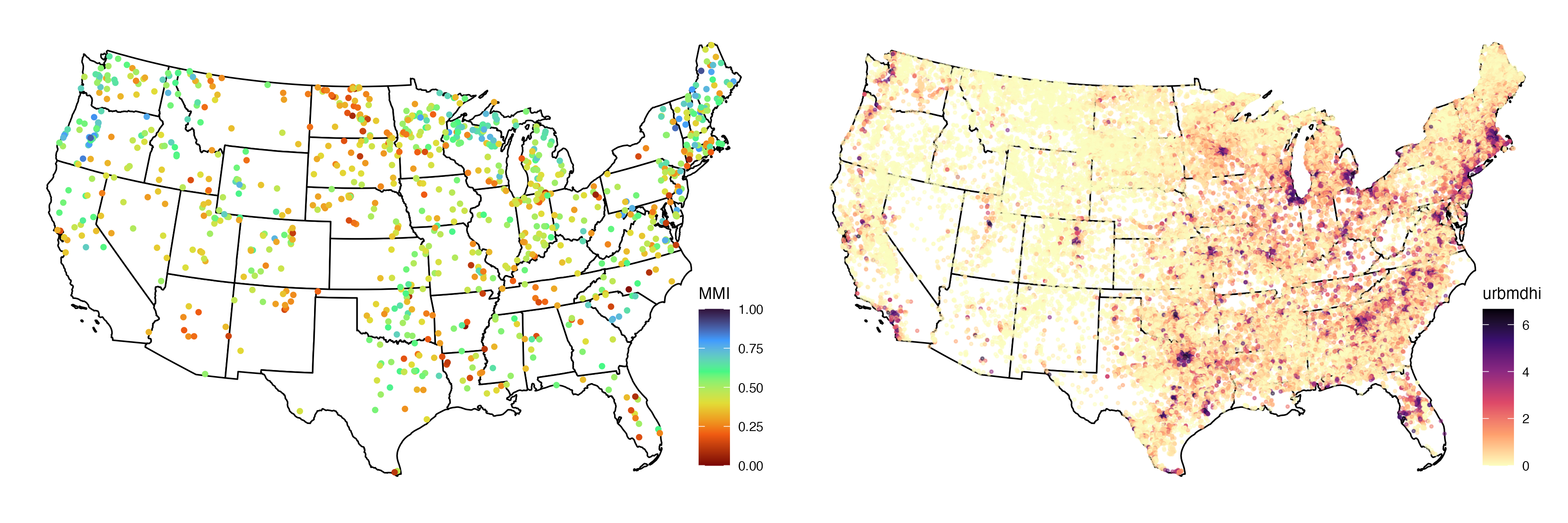}
    \caption{(Left) Benthic macroinvertebrate multimetric index of 949 lakes in conterminous US \citep{US-Environmental-Protection-Agency2022-dp} (Right) Urban land cover in the watersheds of 55,215 lakes, the log-transformed percentage of watershed area classified as developed, medium and high intensity land use (2016 National Land Cover Database classes 23, 24) \citep{Hill2018-eq}. }
    \label{fig:mmidata}
\end{figure}

The paper is structured as follows. In Section~\ref{sec:2_cobin}, we introduce cobin and micobin regressions, study their properties, and present hierarchical extensions. In Section~\ref{sec:3_kg}, we establish the Kolmogorov-Gamma integral identity in Theorem~\ref{thm:kgintidentity}, which could be of independent interest for other models, and describe our Kolmogorov-Gamma augmentation strategy for posterior computation. In Section~\ref{sec:5_simul} and \ref{sec:6_mmi}, we perform simulation studies to support our claims and present an application example. All proofs of theorems are in Supplementary Section~\ref{appendix:proofs}.

\section{Cobin and micobin regression models}
\label{sec:2_cobin}
\subsection{Continuous binomial distribution}

An exponential dispersion model \citep{Jorgensen1987-bq} is an extension of a one-parameter natural exponential family by adding a dispersion parameter that controls variance in a systematic way. Notable examples of such extensions include normal with fixed variance to normal distributions with unknown variance, exponential to gamma distributions, and geometric to negative binomial distributions. We refer to \citet{Jorgensen1992-ui} for a comprehensive review of exponential dispersion models and their properties.

Our goal is to find a family of distributions supported on the unit interval that serve as the random component of a GLM having appealing properties. Unlike binary regression where the conditional mean completely determines the distribution, it is desirable to have an additional dispersion parameter instead of just the natural parameter. We introduce the continuous binomial (cobin) distribution, which is in the exponential dispersion family and contains the uniform distribution as a special case. We first define the density of the cobin distribution. 
To avoid overloading notation, we adopt the convention that $(e^0 - 1)/0\coloneqq \lim_{x\to 0}(e^x - 1)/x = 1$ and $\sinh(0)/0\coloneqq \lim_{x\to 0}\sinh(x)/x = 1$, and we similarly fill in removable singularities in all functions that are otherwise infinitely differentiable on all of $\bbR$.

\begin{definition}[cobin] The continuous binomial (cobin) distribution with natural parameter $\theta\in\bbR$ and dispersion parameter $\lambda^{-1}\in \{1, 1/2, 1/3, \dots\}$, denoted as $Y\sim \mathrm{cobin}(\theta, \lambda^{-1})$, is a exponential dispersion model with density function
\begin{equation}
\label{eq:cobin}
    p_{\mathrm{cobin}}(y; \theta, \lambda^{-1}) = h(y, \lambda)\exp\left[\lambda \{\theta y- B(\theta)\}\right] = h(y, \lambda) \frac{e^{\lambda \theta y}}{\{(e^\theta-1)/\theta\}^\lambda}, \quad 0\le y \le 1,
\end{equation}
where $B(\theta) = \log\left\{(e^\theta-1)/\theta\right\}$ and $h(y, \lambda) = \frac {\lambda}{(\lambda-1)!}\sum _{k=0}^{\lambda}(-1)^{k}{\lambda \choose k}\{\max(\lambda y-k,0)\} ^{\lambda-1}$. 
The support is $[0,1]$ when $\lambda=1$ and $(0,1)$ if $\lambda \ge 2$.
\end{definition}

When $\theta=0$ and $\lambda=1$, the cobin distribution reduces to $\mathrm{Unif}(0,1)$.  
In Supplementary Section~\ref{appendix:cobinmicobindetail}, we provide a formal derivation of the cobin as an exponential dispersion model, including a proof that $\lambda$ must be an integer, along with comparable derivations for the normal, gamma, and inverse Gaussian families. 
To the best of our knowledge, the earliest reference to the continuous binomial distribution appears in \citet{Bates1955-mg} in the context of modeling time intervals between accidents, but has received little attention in the literature. 
The special case of $\lambda = 1$, with reparametrization via $\varphi = 1/(1+e^{-\theta})$, has been rediscovered under the name continuous Bernoulli distribution \citep{Loaiza-Ganem2019-dl}, since the density is proportional to $\varphi^y(1-\varphi)^{1-y}$. The name ``continuous binomial'' comes from the fact that the cobin distribution arises from the convolution of i.i.d. continuous Bernoulli random variables, $Y_1,\dots,Y_\lambda\iidsim  \mathrm{cobin}(\theta, 1)$ then $\lambda^{-1}\sum_{l=1}^\lambda Y_l\sim \mathrm{cobin}(\theta, \lambda^{-1})$, which follows from the convolution property of the exponential dispersion model \citep{Jorgensen1987-bq}. 

By the properties of the exponential family, the mean and variance can be derived directly by differentiating $B(\theta) = \log\left\{(e^\theta-1)/\theta\right\}$, 
\[
E(Y) = B'(\theta) = e^\theta/(e^\theta-1) - \theta^{-1}, \quad \var(Y) = \lambda^{-1}B''(\theta) = \lambda^{-1}\{(2-e^\theta-e^{-\theta})^{-1} + \theta^{-2}\},
\]
which illustrates that $E(Y)$ is only controlled by the natural parameter $\theta$ and $\var(Y)$ is proportional to the dispersion parameter $\lambda^{-1}$; if $\theta=0$ we have $E(Y)=1/2$ and $\var(Y)=1/(12\lambda)$. See Figure~\ref{fig:beta_cobin_micobin} for examples and a comparison with the beta distribution having the same mean and variance.

\begin{figure}
    \centering
    \includegraphics[width=\linewidth]{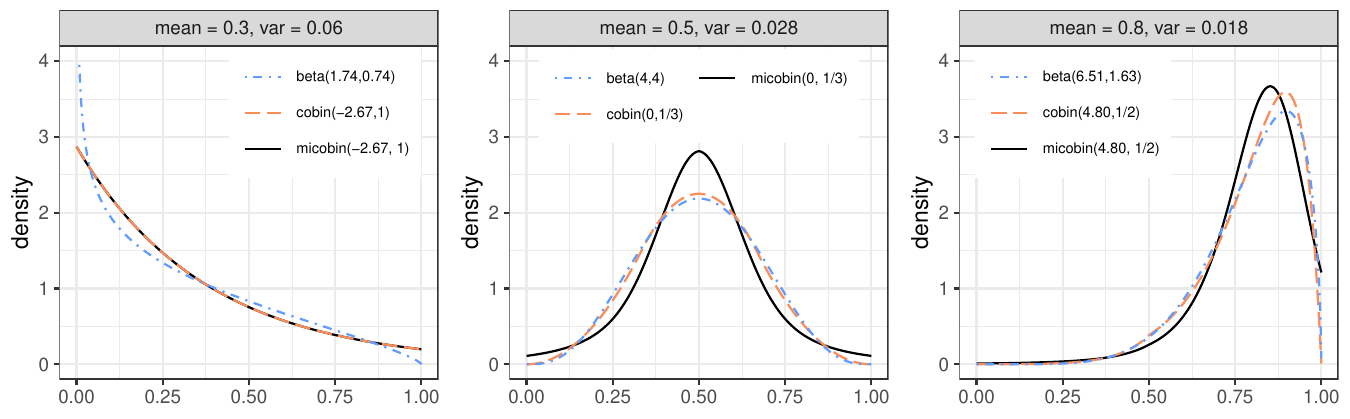}
    \caption{Comparison of beta, cobin, and mixture of cobin (micobin) with common mean and variance.}
    \label{fig:beta_cobin_micobin}
\end{figure}

\subsection{Cobin regression}

With $\bmx_i=(x_{ij})_{j=1}^p\in\bbR^p$ and a link function $g:(0,1)\to \bbR$ that is strictly monotone and differentiable, the GLM with continuous binomial response can be expressed as:  
\begin{equation}
\label{eq:cobinreg}
 Y_i \mid \theta_i, \lambda, \bmx_i \indsim \mathrm{cobin}(\theta_i, \lambda^{-1}), \quad  \theta_i = (B')^{-1}\{g^{-1}(\bm{x}_i^\T\bm\beta)\}, \quad i=1,\dots,n,
\end{equation}
which implies $g\{E(y_i\mid \bmx_i)\} = \bmx_i^\T\bm\beta$ for $\bm\beta\in\bbR^p$. We refer to the model \eqref{eq:cobinreg} as cobin regression. Denoting the linear predictor $\eta_i = \bm{x}_i^\T\bm\beta$ and the conditional mean $\mu_i = g^{-1}(\eta_i)$, the likelihood equations for $\bm\beta$ are 
\begin{equation}
\label{eq:cobinlikeq}
\frac{\partial}{\partial \beta_j}\sum_{i=1}^n\log p_{\mathrm{cobin}}(y_i; \theta_i, \lambda^{-1}) = \sum_{i=1}^n \frac{(y_i-\mu_i)x_{ij}}{\lambda^{-1}B''(\theta_i)}\frac{\partial \mu_i}{\partial \eta_i} = 0, \quad j=1,\dots,p
\end{equation}
where the solution $\hat{\bm\beta}$ does not depend on the dispersion parameter $\lambda^{-1}$. Standard procedures, such as Newton--Raphson or iteratively reweighted least squares, can be used to find the MLE $\hat{\bm\beta}$.  

Unlike the beta distribution, where inference on the mean parameter is based on the sufficient statistics $\{\sum_{i=1}^n\log(y_i), \sum_{i=1}^n\log(1-y_i)\}$ rather than $\sum_{i=1}^n y_i$, 
the cobin distribution belongs to an exponential dispersion model, and cobin regression is therefore a proper GLM \citep[][\S 2.2.2]{McCullagh1989-oa}. Thus, it inherits many attractive properties of GLMs, such as a concave log-likelihood function under the canonical link. We highlight the robustness property of its MLE against model misspecification, which implies that cobin regression under the correct mean structure produces an asymptotically valid point estimate even if $Y$ does not follow the cobin distribution. 
\begin{proposition}
\label{prop:qmle_consistency}
Under mild regularity conditions \citep{Gourieroux1984-ld}, $\hat{\bm\beta}$ of the cobin likelihood equations \eqref{eq:cobinlikeq} is consistent if $E(y_i\mid \bmx_i) = g^{-1}(\bm{x}_i^\T\bm\beta)$ is correctly specified. 
\end{proposition}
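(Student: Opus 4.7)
The plan is to recognize Proposition~\ref{prop:qmle_consistency} as a direct instance of the pseudo-maximum likelihood theorem for the linear (natural) exponential family, as developed by Gourieroux, Monfort, and Trognon (1984). The cobin density \eqref{eq:cobin} has the form $h(y,\lambda)\exp[\lambda\{\theta y - B(\theta)\}]$, which is a linear exponential family in the canonical argument $y$ with cumulant function $B(\theta)$. The whole content of the proposition is that, within this family, consistency of the pseudo-MLE for $\bm\beta$ requires only correct specification of $E(Y_i\mid x_i)$, not of the full conditional distribution.

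First, I would rewrite the score equations \eqref{eq:cobinlikeq} in the unbiased estimating-equation form
\begin{equation*}
U_n(\bm\beta) = \sum_{i=1}^n (y_i-\mu_i(\bm\beta))\,w_i(\bm\beta)\bm{x}_i = \bm 0, \qquad w_i(\bm\beta) = \frac{\lambda}{B''(\theta_i(\bm\beta))}\frac{\partial \mu_i}{\partial \eta_i},
\end{equation*}
and observe that, at the true parameter $\bm\beta_0$ satisfying $E(y_i\mid x_i) = g^{-1}(\bm{x}_i^\T\bm\beta_0)$, each summand has conditional mean zero given $x_i$. This gives $E[U_n(\bm\beta_0)]=0$ irrespective of higher-order moments or the true distributional shape, which is precisely the key property that makes pseudo-MLE based on a linear exponential family consistent under mean-only correct specification.

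Next, I would verify the identifiability condition: $\bm\beta_0$ is the unique zero of the expected score. This follows from strict convexity of $B$ (as a cumulant function), which makes $\theta \mapsto B'(\theta)=\mu$ a strict bijection; combined with strict monotonicity of $g$ and a full-rank design assumption, the map $\bm\beta \mapsto (\mu_1(\bm\beta),\dots,\mu_n(\bm\beta))$ is injective on $\bm\beta$. With identifiability established and the log-likelihood being smooth and (under the canonical link) concave in $\bm\beta$, I would then apply the standard M-estimator consistency argument: pointwise convergence of the normalized log-likelihood by the law of large numbers, strengthened to uniform convergence on compacta via continuity and envelope conditions, concluding $\hat{\bm\beta}\to \bm\beta_0$ in probability. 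These are exactly the ``mild regularity conditions'' referenced from Gourieroux et al.~(1984).

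The only step requiring any real attention is checking that the cobin model fits cleanly into the linear exponential family framework of Gourieroux et al.---in particular that $B(\theta)=\log\{(e^\theta-1)/\theta\}$ is smooth and strictly convex on all of $\bbR$, and that the induced variance function $B''(\theta)$ stays bounded away from zero on compact $\theta$-sets so the weights $w_i(\bm\beta)$ are well-behaved. Both are straightforward from elementary properties of $B$. Beyond that, no distributional computations particular to cobin are needed: the proposition is essentially a corollary of the natural exponential family structure established in the previous subsection, contrasted with beta regression which lacks this structure and therefore does not enjoy the analogous robustness.
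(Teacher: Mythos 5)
Your proposal is correct and follows essentially the same route as the paper, which likewise treats the result as a direct consequence of the pseudo-MLE consistency theorem of \citet{Gourieroux1984-ld} for linear (natural) exponential families, relying on the fact that the cobin score at the truth is conditionally unbiased given only correct mean specification. You simply spell out the internal steps of that cited theorem (unbiased estimating equations, identifiability via strict convexity of $B$, uniform law of large numbers), which the paper leaves implicit.
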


As shown in \citet{Gourieroux1984-ld}, obtaining a consistent estimator of $\bm\beta$ under potential distributional misspecification requires that $\hat{\bm\beta}$ is the solution to estimating equations associated with a natural (linear) exponential family. This condition is both necessary and sufficient; see also \S 4.2.6 and \S 8.3 of \citet[][]{Agresti2015-xu}. Thus, Proposition~\ref{prop:qmle_consistency} is a direct consequence of the cobin distribution belonging to an exponential dispersion model, where the dispersion parameter disappears in the estimating equations, satisfying the sufficient condition for consistency. Furthermore, the necessity of the condition implies that the MLE of beta regression is generally not consistent, unless the full distribution is correctly specified.

\begin{remark}
Quasi-likelihood approaches, which specify only the conditional mean and variance of $Y$ without requiring a full distributional form, also yield consistency results analogous to Proposition~\ref{prop:qmle_consistency}. A well-known example is the fractional logit or fractional probit models based on quasi-MLE \citep{Papke1996-gy}, whose quasi-score equation replaces the $\lambda^{-1}B''(\theta_i)$ term in equation \eqref{eq:cobinlikeq} with $\sigma^2\mu_i(1-\mu_i)$ for some $\sigma^2>0$. However, due to a lack of proper probabilistic formulation, quasi-likelihood approaches face challenges for uncertainty quantification and hierarchical extensions analogous to mixed-effects model settings, and therefore will not be pursued further here.
\end{remark}

Another important difference between the beta and cobin distributions is the form of the score function, the derivative of the log-likelihood with respect to $\bm\beta$. In contrast to the cobin regression score function \eqref{eq:cobinlikeq} which is bounded in terms of $y$, the beta regression score function for a single observation $y_i$ and the $j$th component of $\bm\beta$ is
\begin{equation}
\label{eq:betascore}
\frac{\partial}{\partial \beta_j}\log p_{\mathrm{beta}}\left(y_i; g^{-1}(\bmx_i^\T\bm\beta), \phi\right) = \phi\left\{\log\left(\frac{y_i}{1-y_i}\right)- \frac{\Gamma'(\mu_i \phi)}{\Gamma(\mu_i\phi)} + \frac{\Gamma'(\phi-\mu_i\phi)}{\Gamma(\phi-\mu_i\phi)}\right\}x_{ij} \frac{\partial \mu_i}{\partial \eta_i}, 
\end{equation}
where $p_{\mathrm{beta}}\left(y_i; g^{-1}(\bmx_i^\T\bm\beta), \phi\right)$ corresponds to the beta density evaluated at $y_i$ with mean-precision parametrization \citep{Ferrari2004-ni}, and $\Gamma$ is a gamma function. From \eqref{eq:betascore}, the beta regression score function depends directly on $\log\{y/(1 - y)\}$, which is unbounded for $y \in (0,1)$. As a result, $y$ near its boundary values can lead to an arbitrarily large change in the beta regression log-likelihood landscape. Given the fact that boundedness of the score function often serves as a necessary condition for robustness properties (e.g., Prop. 2 of \citet{Cantoni2001-xu}), cobin regression has greater robustness to observations near the boundaries than beta regression.

The canonical link function of cobin regression, which we call the cobit link, has inverse $g_{\mathrm{cobit}}^{-1}(\eta) = B'(\eta) = e^\eta/(e^\eta-1) - \eta^{-1}$. Although $g_{\mathrm{cobit}}(\mu)$ does not have a closed-form expression, it can be easily calculated in practice by solving $g_{\mathrm{cobit}}^{-1}(x)=\mu$ numerically. 
Compared to logistic function $1/(1+e^{-\eta})$, $g_{\mathrm{cobit}}^{-1}(\eta)$ slowly approaches 0 and 1 as $|\eta|\to\infty$, asymptotically at the same speed as the Cauchy c.d.f. See Supplementary Section~\ref{appendix:cobinmicobindetail} for details. 
Similarly to heavy-tailed link functions in binary response regression, this behavior is appealing in allowing the mean parameter to approach zero or one more slowly for extreme values of the predictors, reducing sensitivity to certain types of outliers. 

Under the canonical link function, the cobin regression model can be expressed as $Y_i \mid \bmx_i \indsim \mathrm{cobin}(\bm{x}_i^\T\bm\beta, \lambda^{-1})$ for $i=1,\dots,n$. In the following sections, we focus on cobin regression with canonical link functions, which greatly simplifies model fitting. Regardless of the choice of the link function, relationships between the conditional mean and covariates can be reported in terms of average slopes, also known as average marginal effects \citep{Williams2012-jz}. For example, the average slope of a continuous covariate $X_j$ under the link function $g$ is computed as
$
\frac{1}{n}\sum_{i=1}^n\frac{\partial E(y\mid \bm{x})}{\partial x_j}\big|_{\bm{x} = \bm{x}_i} = \frac{1}{n}\sum_{i=1}^n (g^{-1})'(\bmx_i^\T\bm\beta)\beta_j. 
$
For a categorical covariate, the partial derivative is replaced by a unit difference.

\subsection{Micobin regression and varying dispersion model}

Cobin regression comes with some limitations. First, the parameter $\lambda$ is only allowed to be an integer. This not only limits flexibility, but also introduces significant difficulties when we want to allow the dispersion parameter to vary systematically based on covariates. Next, the support of the cobin distribution is an open interval $(0,1)$ when $\lambda \ge 2$ and therefore cannot handle data at exact boundaries. Beta regression suffers from the same problem, and while ``nudging'' data into $(0,1)$ before analysis \citep{Smithson2006-ia} is a standard practice in the literature, doing so is certainly not desirable. 

To overcome these limitations, we introduce an extension of cobin regression, named micobin regression. Specifically, we propose using the cobin distribution dispersion mixture as a response distribution, defined as follows.

\begin{definition}[micobin]
    $Y$ follows a mixture of cobin distributions (micobin) with natural parameter $\theta \in \bbR$ and dispersion parameter $\psi \in (0,1)$, written as $Y\sim \mathrm{micobin}(\theta, \psi)$, if 
    \begin{equation}
    \label{eq:micobin}
       Y \mid \lambda \sim \mathrm{cobin}(\theta, \lambda^{-1}), \quad (\lambda-1)\sim \mathrm{negbin}(2, \psi), 
    \end{equation}
    where $\mathrm{negbin}(r,\psi)$ denotes a negative binomial with mean $r(1-\psi)/\psi$ and variance $r(1-\psi)/\psi^2$. The mean and variance are $E(Y) = B'(\theta)$ and $\var(Y) = \psi B''(\theta)$. 
\end{definition}

The micobin distribution is obtained by mixing the cobin distributions over the dispersion parameter $\lambda$, analogous to
obtaining Student's $t$ distribution as a scale mixture of normal distributions. See Figure~\ref{fig:beta_cobin_micobin} for comparisons with the cobin and beta distributions ($\psi=1$ corresponds to the limiting case). The mixture over the dispersion parameter preserves the mean structure and improves the robustness against outliers \citep{Wang2018-pd}. The choice of $(\lambda-1) \sim \mathrm{negbin}(2, \psi)$ leads to a property that $E(\lambda^{-1}) = \psi$, which implies $\var(Y) = \psi B''(\theta)$. Thus, $\psi$ plays a similar role as $\lambda^{-1}$ as a dispersion parameter, with the additional flexibility that $\psi$ can take any value between 0 and 1. 
The micobin distribution is supported on the closed interval $[0,1]$ for any choice of $\psi\in(0,1)$, with positive densities at the boundaries $p_{\mathrm{micobin}}(0; \theta, \psi) = \psi^2\theta/(e^\theta-1)>0$ and $p_{\mathrm{micobin}}(1; \theta, \psi) = \psi^2\theta e^\theta/(e^\theta-1)>0$. Thus, it accommodates boundary data without needing to arbitrarily manipulate boundary values to lie in the open interval $(0,1)$. 

A micobin regression model with link function $g$ is defined as $Y_i\mid \theta_i, \psi \indsim \mathrm{micobin}(\theta_i, \psi)$ with $\theta_i = (B')^{-1}\{g^{-1}(\bmx_i^\T\bm\beta)\}$ for $i=1,\dots,n$. In particular, using the canonical cobit link function $g_{\mathrm{cobit}}$, the micobin regression model can be compactly written as 
\begin{equation}
\label{eq:micobinreg}
    Y_i \mid \bmx_i \indsim \mathrm{micobin}(\bm{x}_i^\T\bm\beta,\psi), \quad i=1,\dots,n,
\end{equation}
which implies $g_{\mathrm{cobit}}\{E(y_i\mid \bmx_i)\} = \bm{x}_i^\T\bm\beta$.

\begin{remark}
Micobin regression does not assign positive probability mass to the boundary values, and is therefore distinct from zero/one-inflated or censored models \citep{Ospina2012-bh, Kubinec2023-it, Kosmidis2025-sh}. Unlike these two-part models, which treat boundary values as qualitatively different by introducing a latent mechanism to determine whether $y$ falls on the boundary, micobin regression employs a one-part modeling framework that directly models $E(y_i\mid \bm{x}_i)$ without conditioning on $y_i \in (0,1)$. 
Although the micobin model assigns zero probability to boundary values due to its continuous formulation, it is suitable for settings where boundary values arise from measurement precision limits or rounding of data during preprocessing. In such cases, including when there is no substantial reason to assume that boundary values arise from a distinct mechanism, two-part models may be unnecessarily complex or even misleading; see \citet{Ramalho2011-ro} for a related discussion between one-part and two-part models for continuous proportional data in the econometrics context.
\end{remark}

Micobin regression can be extended to allow the dispersion to vary systematically with covariates, which may differ from covariates in the mean regression component of the model
\citep{Smyth1989-qy}. 
The varying dispersion micobin regression, under the canonical link for mean and logit link for dispersion, can be written as
\begin{equation}
\label{eq:vardispmicobin}
    Y_i\mid x\bm_i, \psi_i \indsim \mathrm{micobin}(\bm{x}_i^\T\bm\beta, \psi_i), \quad  \mathrm{logit}(\psi_i) =  \bmd_i^\T\bm\gamma, \quad i=1,\dots,n,
\end{equation}
where $\bmd_i\in\bbR^d$ is a dispersion covariate that may overlap with $\bmx_i$, and $\bm\gamma\in\bbR^d$ is the coefficient. The choice of the logit link for dispersion is particularly attractive in terms of posterior computation, which enables P\'olya-Gamma data augmentation with the latent negative binomial model \citep{Pillow2012-qq}, see Supplementary Section~\ref{subsec:vardispmicobin} for details. 

An important distinction from the varying dispersion beta regression \citep{Smithson2006-ia} is that the formulation in \eqref{eq:vardispmicobin} does not allow bimodal densities of $Y$ with spikes at both zero and one. While cobin and micobin have a more restrictive range of $\var(Y)$ compared to the beta by excluding U-shaped densities, this restriction can be advantageous in many applications where the true conditional density of $Y$ given covariates is unlikely to be U-shaped. In contrast, U-shaped fitted distributions can arise in beta regression, especially in the varying dispersion model, due to lack of fit and sparse data.

\subsection{Hierarchical extensions with latent Gaussian structure}

When data come with multilevel, longitudinal, or spatial structure, cobin and micobin regression models can be naturally extended to mixed-effect models, and more generally to latent Gaussian models. For spatially indexed proportional data $y(s)$ and covariate $\bmx(s)$, we focus on spatial generalized linear mixed models \citep{Diggle1998-xw} with the cobit link,
\begin{equation}
\label{eq:spreg}
g_{\mathrm{cobit}}(E\{y(s)\mid \bmx(s), u(s)\}) = \bmx(s)^\T\bm\beta + u(s), \quad u(\cdot)\sim \mathrm{mean\,\,zero\,\,Gaussian\,\,process}.
\end{equation}
With a choice of response distribution such as cobin or micobin, this allows us to perform both inference for $\bm\beta$ and make predictions at new locations in a single modeling framework. 

Inference for non-Gaussian spatial models \eqref{eq:spreg} is often performed using Bayesian approaches implemented with MCMC, due to the challenges in inferring spatial random effects at many locations. However, generic sampling algorithms such as Metropolis-Hastings require careful tuning and face significant challenges when sampling high-dimensional latent parameters. 
In the following section, we propose a novel data augmentation scheme called Kolmogorov-Gamma augmentation, which converts cobin or micobin likelihoods into a conditionally normal likelihood and yields a simple Gibbs sampler that does not require tuning and with theoretical guarantees on uniform ergodicity.

\section{Inference with Kolmogorov-Gamma augmentation}
\label{sec:3_kg}
\subsection{Kolmogorov-Gamma distribution and integral identity}

First, we define Kolmogorov-Gamma random variables. 
\begin{definition}
\label{def:kgdef}
We say a positive random variable $\kappa$ follows a Kolmogorov-Gamma (KG) distribution with parameters $b>0$ and $c \in \bbR$, denoted as $\kappa \sim \mathrm{KG}(b,c)$, if 
\begin{equation}
\label{eq:kgdef}
    \kappa \stackrel{\rmd}{=} \frac{1}{2\pi^2}\sum_{k=1}^\infty \frac{\epsilon_k}{k^2 + c^2/(4\pi^2)}, \quad \epsilon_k \iidsim \mathrm{Gamma}(b, 1), \quad k=1,2,\dots,
 \end{equation}
where $\stackrel{\rmd}{=}$ denotes equality in distribution.
\end{definition}
For the case where $b=1$ and $c=0$, $\mathrm{KG}(1,0)$ scaled by $\pi^2$ corresponds to the squared Kolmogorov (or squared Kolmogorov-Smirnov) distribution, the infinite convolution of independent exponential distributions \citep[][\S 4]{Andrews1974-ms}. Since $\epsilon_k$ are gamma distributed, following a similar naming convention as P\'olya-Gamma \citep{Polson2013-gb}, we call $\kappa$ a Kolmogorov-Gamma random variable. The difference with the P\'olya-Gamma is the term $k^2$ in the denominator in \eqref{eq:kgdef}, which is $(k-0.5)^2$ for P\'olya-Gamma. 
From relationships between exponential and gamma distributions, $\mathrm{KG}(b,c)$ with integer $b$ is equal in distribution to the sum of $b$ independent $\mathrm{KG}(1,c)$ variables. 

Now we describe the key result. Under the canonical link, the cobin likelihood \eqref{eq:cobin} is proportional to $(e^{\eta})^{\lambda y}/\{(e^{\eta}-1)/\eta\}^\lambda$ in terms of the linear predictor $\eta = \theta = \bm{x}^\T\bm\beta$, which is not a well-recognized form in terms of $\eta$. We present the following Kolmogorov-Gamma integral identity, which establishes a direct connection between the cobin likelihood and Kolmogorov-Gamma random variables.

\begin{theorem}
\label{thm:kgintidentity}
    For any $a\in\bbR$ and $b>0$, the following integral identity holds for $\eta\in\bbR$: 
    \begin{equation}
    \label{eq:kgintidentity}
    \frac{(e^\eta)^a}{\{(e^{\eta}-1)/\eta\}^b} = e^{(a-b/2)\eta}\int_0^\infty e^{-\kappa \eta^2/2}\pkg(\kappa; b, 0) \rmd \kappa,        
    \end{equation}
    where $\pkg(\kappa; b,0)$ is the density of a $\mathrm{KG}(b,0)$ random variable. 
\end{theorem}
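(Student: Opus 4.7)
The plan is to evaluate the integral on the right-hand side as a Laplace transform of $\kappa \sim \mathrm{KG}(b,0)$ at $s = \eta^2/2$, simplify the resulting infinite product using Euler's product formula for $\sinh$, and then match the algebraic form of the left-hand side.

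First I would use Definition~\ref{def:kgdef} to write $\kappa$ as the almost-sure limit $(2\pi^2)^{-1}\sum_{k=1}^\infty \epsilon_k/k^2$ with $\epsilon_k \iidsim \mathrm{Gamma}(b,1)$. Since each summand is nonnegative, monotone convergence lets me exchange expectation with the infinite sum, so the Laplace transform factorizes:
\begin{equation*}
\int_0^\infty e^{-\kappa \eta^2/2}\,\pkg(\kappa; b, 0)\,\rmd\kappa
= E\bigl[e^{-\kappa \eta^2/2}\bigr]
= \prod_{k=1}^\infty E\!\left[\exp\!\left(-\frac{\eta^2 \epsilon_k}{4\pi^2 k^2}\right)\right]
= \prod_{k=1}^\infty \left(1+\frac{\eta^2}{4\pi^2 k^2}\right)^{-b},
\end{equation*}
using the $\mathrm{Gamma}(b,1)$ moment generating function. (Convergence of the product for every $\eta\in\bbR$ follows from $\sum 1/k^2<\infty$.)

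Next I would invoke the classical Euler product identity $\sinh(x) = x\prod_{k=1}^\infty \bigl(1 + x^2/(\pi^2 k^2)\bigr)$ with $x=\eta/2$, which gives
\begin{equation*}
\prod_{k=1}^\infty \left(1+\frac{\eta^2}{4\pi^2 k^2}\right)
= \frac{2\sinh(\eta/2)}{\eta}
= \frac{e^{\eta/2}-e^{-\eta/2}}{\eta}
= e^{-\eta/2}\cdot\frac{e^\eta-1}{\eta}.
\end{equation*}
Raising to the $-b$ power and multiplying by $e^{(a-b/2)\eta}$ yields
\begin{equation*}
e^{(a-b/2)\eta}\int_0^\infty e^{-\kappa\eta^2/2}\,\pkg(\kappa;b,0)\,\rmd\kappa
= e^{(a-b/2)\eta}\cdot e^{b\eta/2}\cdot\left(\frac{e^\eta-1}{\eta}\right)^{-b}
= \frac{(e^\eta)^a}{\{(e^\eta-1)/\eta\}^b},
\end{equation*}
which is the claimed identity. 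The boundary case $\eta=0$ is handled by continuity via the convention stated before Definition~1, since both sides reduce to $1$.

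The only delicate points are (i) justifying the exchange of expectation with the infinite sum/product, which I would handle by monotone/dominated convergence together with independence of the $\epsilon_k$, and (ii) recognizing the algebraic identity connecting $(e^\eta-1)/\eta$ to $\sinh(\eta/2)$ after extracting an $e^{\eta/2}$ factor; the Euler product step itself is standard. No other obstacle arises, since the identity is essentially a Laplace-transform computation once the series representation of $\kappa$ is in hand.
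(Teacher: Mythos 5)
Your proof is correct and follows essentially the same route as the paper: both compute the Laplace transform of $\mathrm{KG}(b,0)$ at $t=\eta^2/2$ by factorizing over the independent Gamma summands and then invoking the Euler/Weierstrass product for $\sinh$ to identify the infinite product with $\{\sinh(\eta/2)/(\eta/2)\}^{-b}=e^{b\eta/2}\{(e^\eta-1)/\eta\}^{-b}$. Your treatment is slightly more careful about justifying the interchange of expectation and the infinite sum, but the argument is the same.
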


Theorem~\ref{thm:kgexptilting} shows that the conditional distribution with density $p(\kappa\mid \eta)\propto e^{-\kappa \eta^2/2}\pkg(\kappa; b, 0)$ is
$\mathrm{KG}(b,\eta)$, which arises from treating integrand of \eqref{eq:kgintidentity} as an unnormalized density of $\kappa$.
\begin{theorem} 
\label{thm:kgexptilting}
The Kolmogorov-Gamma random variable $\mathrm{KG}(b,c)$ in Definition~\ref{def:kgdef} coincides with the distribution arising from exponential tilting of the $\mathrm{KG}(b,0)$ random variable, with density equal to $\pkg(x; b,c) = \left\{\sinh(c/2)/(c/2)\right\}^{b}\exp(-c^2x/2)\pkg(x; b, 0)$, $x>0$. 
\end{theorem}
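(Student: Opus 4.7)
\textbf{Proof plan for Theorem~\ref{thm:kgexptilting}.} The statement has two parts: (i) verify that $\{\sinh(c/2)/(c/2)\}^b \exp(-c^2 x/2)\pkg(x;b,0)$ integrates to one, so it is a bona fide density, and (ii) verify that this density agrees with the distribution of the random series in Definition~\ref{def:kgdef}. My plan is to attack both by identifying Laplace transforms.

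For (i), I would invoke the intermediate identity \eqref{eq:KGb0laplace} established inside the proof of Theorem~\ref{thm:kgintidentity}, namely $E(e^{-\kappa t}) = \{(t/2)^{1/2}/\sinh((t/2)^{1/2})\}^b$ for $\kappa\sim \mathrm{KG}(b,0)$. Plugging in $t=c^2/2$ gives $\int_0^\infty e^{-c^2 x/2}\pkg(x;b,0)\,\rmd x = \{(c/2)/\sinh(c/2)\}^b$, whose reciprocal is the stated normalizing constant. So the normalization is automatic.

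For (ii), let $\tilde\kappa$ denote a random variable with the claimed tilted density. By linearity of integration,
\[
E(e^{-t\tilde\kappa}) = \left\{\frac{\sinh(c/2)}{c/2}\right\}^{b}\int_0^\infty e^{-(t+c^2/2)x}\pkg(x;b,0)\,\rmd x = \left\{\frac{\sinh(c/2)}{c/2}\right\}^{b}\left[\frac{\sqrt{(t+c^2/2)/2}}{\sinh(\sqrt{(t+c^2/2)/2})}\right]^{b},
\]
using \eqref{eq:KGb0laplace} again at the shifted argument $t+c^2/2$. Separately, from the series in Definition~\ref{def:kgdef} and the gamma Laplace transform,
\[
E(e^{-t\kappa}) = \prod_{k=1}^\infty \left(1 + \frac{t}{2\pi^2 k^2 + c^2/2}\right)^{-b}
\]
for $\kappa\sim \mathrm{KG}(b,c)$. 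I would then apply the Weierstrass factorization $\sinh(z)/z = \prod_{k\ge 1}(1+z^2/(\pi^2 k^2))$ to \emph{both} $\sinh(c/2)/(c/2)$ (numerator) and $\sinh(\sqrt{(t+c^2/2)/2})/\sqrt{(t+c^2/2)/2}$ (denominator), combine the two infinite products term by term, and simplify the resulting ratio $(\pi^2 k^2 + c^2/4)/(\pi^2 k^2 + t/2 + c^2/4)$ to match the factors in the series-based Laplace transform. Equality of Laplace transforms on $[0,\infty)$ then forces equality in distribution.

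The main obstacle is a purely bookkeeping one: justifying the term-by-term combination of the two infinite products. This is routine since the Weierstrass product for $\sinh$ converges absolutely and uniformly on compacts, so rearrangement is legitimate and the algebraic match of factors is immediate. No delicate analytic issue is expected; the proof is essentially a shifted re-application of \eqref{eq:KGb0laplace} combined with the same factorization already used in the proof of Theorem~\ref{thm:kgintidentity}.
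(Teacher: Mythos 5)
Your proposal is correct and follows essentially the same route as the paper's proof: both establish the normalizing constant from the $\mathrm{KG}(b,0)$ Laplace transform at $t=c^2/2$, then compute the Laplace transform of the tilted density at the shifted argument and use the Weierstrass factorization of $\sinh$ to match it, factor by factor, with the product $\prod_{k\ge 1}(1+t/(2\pi^2k^2+c^2/2))^{-b}$ arising from Definition~\ref{def:kgdef}. The only cosmetic difference is that the paper separately notes the trivial case $c=0$.
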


\subsection{Kolmogorov-Gamma augmentation and blocked Gibbs sampler}

Based on Theorem~\ref{thm:kgintidentity}, consider the augmented model by introducing variables $\bm\kappa=\{\kappa_i\}_{i=1}^n$,
\begin{equation}
\label{eq:auglik}
p(y_i, \kappa_i\mid \bmx_i, \bm\beta, \lambda) = h(y_i, \lambda) \exp\{{\lambda(y_i - 0.5)\bm{x}_i^\T\bm\beta}- \kappa_i (\bm{x}_i^\T\bm\beta)^2/2\}\pkg (\kappa_i; \lambda, 0),    
\end{equation}
for $i=1,\dots,n$, which reduces to original cobin regression upon marginalization $\int_0^\infty p(y_i, \kappa_i\mid x_i, \bm\beta, \lambda) \rmd \kappa_i = p_{\mathrm{cobin}}(y_i; \bm{x}_i^\T\bm\beta, \lambda^{-1})$. 
Then, the log-likelihood from \eqref{eq:auglik} conditional on $\bm\kappa$ and $\lambda$ becomes a quadratic form in terms of $\bm\beta$.  
In addition, the conditional distribution of $\kappa_i$ given $\bm\beta$ and $\lambda$ is $\mathrm{KG}(\lambda,\bm{x}_i^\T\bm\beta)$ according to Theorem~\ref{thm:kgexptilting}. 
Under a Bayesian framework with a normal prior on $\bm\beta$, this leads to straightforward Gibbs samplers. For micobin regression, the same augmentation strategy can be adopted by conditioning on the mixing variable $\bm\lambda = (\lambda_1,\dots,\lambda_n)$. 

Algorithms~\ref{alg:cobin} and \ref{alg:micobin} describe blocked Gibbs samplers for cobin and micobin regression, where $X\in\bbR^{n\times p}$ is the design matrix. We assume a normal prior $\bm\beta\sim N_p(0, \Sigma_\beta)$, some proper prior $\lambda \sim p_\lambda$ for cobin regression, and a beta prior $\psi\sim \mathrm{Beta}(a_\psi,b_\psi)$ for micobin regression. We set a large upper bound $L$ on $\lambda$, as the posterior probability of very large $\lambda$ is negligible in practice. The update steps for $\lambda$ and $\bm\kappa$ are blocked together, which improves mixing and avoids the evaluation of the density of KG in \eqref{eq:auglik} when updating $\lambda$. The beta prior for $\psi$ in micobin regression leads to a conditionally conjugate update. In Supplementary Section~\ref{appendix:algdetails}, we describe the detailed derivations and the application of the Kolmogorov-Gamma augmentation to an EM algorithm for estimating the posterior mode, and discuss sampling strategies for varying dispersion micobin regression models \eqref{eq:vardispmicobin} and spatial models \eqref{eq:spreg}.

\begin{algorithm}[t]
\footnotesize
\captionsetup{font=footnotesize} 
\caption{One cycle of a blocked Gibbs sampler for cobin regression \eqref{eq:cobinreg} with cobit link}
\label{alg:cobin}
\begin{algorithmic}[1]
\State Sample $\lambda$ from $\pr(\lambda = l \mid \bm\beta) \propto p_{\lambda}(l)\prod_{i=1}^n p_{\mathrm{cobin}}(y_i; \bm{x}_i^\T\bm\beta, l^{-1})$ among $\{1,\dots,L\}$ 
\State Sample $\kappa_i$ from $(\kappa_i \mid \lambda, \bm\beta) \indsim \mathrm{KG}(\lambda, \bm{x}_i^\T\bm\beta)$, $i=1,\dots,n$ \Comment{steps 1,2 jointly updates $(\lambda,\bm\kappa)$}
\State Sample $\bm\beta$ from $(\bm\beta \mid \lambda, \bm\kappa)\sim N_p(\bm{m}_\beta, V_\beta)$, where \vspace{-1mm}
\[
V_\beta^{-1} = X^\T \diag(\kappa_1,\dots,\kappa_n) X + \Sigma_\beta^{-1}, \quad \bm{m}_\beta =  V_\beta X^\T(y_1\lambda-0.5\lambda,\dots,y_n\lambda-0.5\lambda)^\T 
\]
\end{algorithmic}
\end{algorithm}

\begin{algorithm}[t]
\footnotesize
\captionsetup{font=footnotesize} 
\caption{One cycle of a blocked Gibbs sampler for micobin regression \eqref{eq:micobinreg} with cobit link}
\label{alg:micobin}
\begin{algorithmic}[1]
\State Sample $\lambda_i$ from $\pr(\lambda_i = l \mid \bm\beta,\psi) \propto l(1-\psi)^{l-1}p_{\mathrm{cobin}}(y_i; \bm{x}_i^\T\bm\beta, l^{-1})$ among $\{1,\dots,L\}$, $i=1,\dots,n$ 
\State Sample $\kappa_i$ from $(\kappa_i \mid \lambda_i, \bm\beta) \indsim \mathrm{KG}(\lambda_i, \bm{x}_i^\T\bm\beta)$, $i=1,\dots,n$ \Comment{steps 1,2 jointly updates $(\bm\lambda,\bm\kappa)$}
\State Sample $\bm\beta$ from $(\bm\beta \mid \bm\lambda, \bm\kappa)\sim N_p(\bm{m}_\beta, V_\beta)$, where \vspace{-1mm}
\[
V_\beta^{-1} = X^\T \diag(\kappa_1,\dots,\kappa_n) X + \Sigma_\beta^{-1}, \quad \bm{m}_\beta =  V_\beta X^\T(y_1\lambda_1-0.5\lambda_1,\dots,y_n\lambda_n-0.5\lambda_n)^\T 
\]
\State Sample $\psi$ from $(\psi \mid \bm\lambda) \sim \mathrm{Beta}(a_\psi + 2n, b_\psi - n + \sum_{i=1}^n\lambda_i)$ \Comment{steps 3,4 jointly updates $(\bm\beta,\psi)$}
\end{algorithmic}
\end{algorithm}

The proposed Kolmogorov-Gamma augmentation scheme offers several advantages. Due to the conditionally normal likelihood, the algorithms can be trivially extended to mixed-effects models and more complex hierarchical models involving latent Gaussian structures.  Moreover, by exploiting normal-normal conjugacy, the random and fixed effects can be updated jointly, which is especially important for spatial models \eqref{eq:spreg} where the spatial random effect is often highly correlated with the intercept. In addition, a normal prior for $\bm\beta$ can be easily replaced by normal mixture priors, such as Laplace, Cauchy, or more broadly global-local shrinkage priors \citep{Bhadra2016-ri}, simply by combining existing sampling methods developed for normal models. 

Furthermore, we show that the proposed Gibbs sampler for cobin and micobin regression based on Kolmogorov-Gamma augmentation is uniformly ergodic, meaning that the Markov chain converges to the posterior geometrically fast in terms of total variation distance, uniformly at any initialization. This guarantees the existence of central limit theorems for Monte Carlo averages of functions of $\bm\beta$. 

\begin{theorem}
\label{thm:uniformergodic}
The blocked Gibbs samplers presented in Algorithms~\ref{alg:cobin} and \ref{alg:micobin} for cobin and micobin regressions are uniformly ergodic. 
\end{theorem}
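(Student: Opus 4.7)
The plan is to verify a one-step Doeblin minorization $P(x,\cdot)\ge \epsilon\,\nu(\cdot)$ holding for every initial state $x$, which is well known to imply uniform ergodicity. The structural feature making this tractable is that the dispersion variables $\lambda$ (cobin) and $\bm\lambda$ (micobin) lie in the finite sets $\{1,\ldots,L\}$ and $\{1,\ldots,L\}^n$ respectively, and $\psi\in(0,1)$ is bounded; only $\bm\beta\in\bbR^p$ and $\bm\kappa\in(0,\infty)^n$ are unbounded. I would therefore work with the marginal transition kernel on $\bm\beta$ (resp.\ $(\bm\beta,\psi)$) obtained by integrating the dispersion and $\bm\kappa$ out of one Gibbs cycle, and establish the minorization there.

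For Algorithm~\ref{alg:cobin} the marginal one-cycle density factors as
$k(\bm\beta,\bm\beta')=\sum_{l=1}^L p(l\mid\bm\beta)\,k_l(\bm\beta,\bm\beta')$,
where
$k_l(\bm\beta,\bm\beta') = \int \phi_p(\bm\beta';\bm m_\beta(l,\bm\kappa),V_\beta(l,\bm\kappa))\prod_{i=1}^n \pkg(\kappa_i;l,\bm x_i^\T\bm\beta)\,d\bm\kappa$
and $\phi_p$ denotes the $p$-variate normal density. I would bound each $k_l$ below using three ingredients: (i) $V_\beta^{-1}=X^\T\mathrm{diag}(\bm\kappa)X+\Sigma_\beta^{-1}\succeq\Sigma_\beta^{-1}$ yields the $\bm\kappa$-free determinant bound $|V_\beta|^{-1/2}\ge |\Sigma_\beta|^{-1/2}$; (ii) the exponential-tilting representation of Theorem~\ref{thm:kgexptilting}, combined with $\sinh(c/2)/(c/2)\ge 1$, gives $\pkg(\kappa_i;l,\bm x_i^\T\bm\beta)\ge e^{-(\bm x_i^\T\bm\beta)^2\kappa_i/2}\,\pkg(\kappa_i;l,0)$; and (iii) after expanding the Gaussian exponent, the quadratic-in-$\bm\kappa$ pieces combine with the KG tilting factor so that the $\bm\kappa$-integral matches the Kolmogorov-Gamma identity of Theorem~\ref{thm:kgintidentity}, collapsing to a closed form depending only on $\bm\beta'$. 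This gives $k_l(\bm\beta,\bm\beta')\ge c_l\,\tilde\nu_l(\bm\beta')$ uniformly in $\bm\beta$. Since $\sum_l p(l\mid\bm\beta)=1$, we obtain $k(\bm\beta,\bm\beta')\ge (\min_l c_l)\min_l \tilde\nu_l(\bm\beta')$; normalizing the positive, integrable right-hand side gives the required minorizing probability measure.

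The micobin case follows the same template componentwise, with each $\lambda_i\in\{1,\ldots,L\}$ discrete and the $\bm\kappa$-integration again matching Theorem~\ref{thm:kgintidentity}; the conjugate $\mathrm{Beta}(a_\psi+2n,b_\psi-n+\sum_i\lambda_i)$ update for $\psi$ has shape parameters ranging over a finite set, and hence its density is uniformly bounded below on any compact subinterval of $(0,1)$. The main obstacle I anticipate is step (iii): tracking how the cross term $(\bm\beta')^\T V_\beta^{-1}\bm m_\beta$, the determinant $|V_\beta|^{-1/2}$, and the KG tilting factors interact so that applying Theorem~\ref{thm:kgintidentity} produces a $\bm\beta$-free factor times a proper density in $\bm\beta'$. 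Once this algebraic simplification is carried through, the Doeblin condition follows for both samplers, implying uniform ergodicity.
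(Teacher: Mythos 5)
Your overall architecture coincides with the paper's: pass to the marginal chain on $\bm\beta$ (resp.\ $(\bm\beta,\psi)$), exploit the finiteness of $\{1,\dots,L\}$ (and of $\{1,\dots,L\}^n$) to take a minimum over dispersion configurations, and use $V_\beta \preceq \Sigma_\beta$ to control the determinant and the exponent of the Gaussian full conditional. The problem is your step (iii), which you correctly identify as the main obstacle but then assert rather than prove. After your bounds (i)--(ii), the $\kappa_i$-integral you are left with is
\[
\int_0^\infty \exp\Bigl[-\tfrac{1}{2}\bigl\{(\bmx_i^\T\bm\beta)^2+(\bmx_i^\T\bm\beta')^2\bigr\}\kappa_i\Bigr]\,\pkg(\kappa_i;\lambda_i,0)\,\rmd\kappa_i
=\Bigl\{\sinhc\Bigl(\tfrac{1}{2}\bigl\{(\bmx_i^\T\bm\beta)^2+(\bmx_i^\T\bm\beta')^2\bigr\}^{1/2}\Bigr)\Bigr\}^{-\lambda_i},
\]
by the Laplace transform in Theorem~\ref{thm:kgintidentity}. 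This does \emph{not} collapse to a function of the new state $\bm\beta'$ alone: since $\sinhc(x)\to\infty$ as $x\to\infty$, the right-hand side tends to $0$ as $|\bmx_i^\T\bm\beta|\to\infty$, so no bound of the form $k_l(\bm\beta,\bm\beta')\ge c_l\,\tilde\nu_l(\bm\beta')$ uniform in the previous state can be extracted from this expression as it stands. The culprit is your step (ii): by discarding the tilting normalizer $\{\sinhc(\bmx_i^\T\bm\beta/2)\}^{\lambda_i}\ge 1$ you throw away precisely the factor needed to cancel the divergent old-state dependence.

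The paper's proof keeps that normalizer and separates the two states with two further inequalities: $\{a^2+b^2\}^{1/2}\le |a|+|b|$ combined with the monotonicity of $x\mapsto\sinhc(x)^{-\lambda_i}$, and then $\sinhc(a+b)\le 2\sinhc(a)\cosh(b)$ (Lemma~\ref{lemma:sinhcineq}). The retained $\{\sinhc(|\bmx_i^\T\bm\beta|/2)\}^{\lambda_i}$ then cancels exactly, leaving $2^{-\lambda_i}\{\cosh(|\bmx_i^\T\bm\beta'|/2)\}^{-\lambda_i}$, which is bounded below by a Gaussian in $\bm\beta'$ via $\cosh(|x|)^{-l}\ge \exp(-x^2-l^2/4)$; the resulting Gaussian is absorbed into a proper minorizing density $N_p(\bm\beta';\bm\mu^\star(\bm\lambda),\Sigma^\star)$. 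Without this (or an equivalent) variable-separation device your minorization constant degenerates to zero and the Doeblin condition is not established. The remaining ingredients of your plan --- the finite minimum over dispersion configurations, the treatment of $\psi$ through the finitely many Beta full conditionals, and the reduction to the marginal two-block chain --- are sound and match the paper's argument.
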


The proof follows a related approach to \citet{Choi2013-jv} based on a uniform minorization argument. Considering that popular approaches, such as Metropolis-adjusted Langevin or Hamiltonian Monte Carlo, rely on the assumption of log-concavity for their theoretical guarantees on fast mixing \citep{Dwivedi2019-ve, Chen2020-pe},
Theorem~\ref{thm:uniformergodic} is a strong result; for micobin regression the likelihood is not log-concave.

\subsection{Sampling Kolmogorov-Gamma random variables}

Developing a reliable and efficient sampling method for Kolmogorov-Gamma random variables is essential for our methodology. A naive approximation approach, based on truncating the sum of gamma random variables in Definition~\ref{def:kgdef}, is computationally inefficient and prone to truncation errors. We develop an efficient method for exact sampling of $\mathrm{KG}(1,c)$ random variates based on the alternating series method of \citet{Devroye1986-gj}. For integer $\lambda$, one can then sample $\mathrm{KG}(\lambda,c)$ variables by summing $\lambda$ independent $\mathrm{KG}(1,c)$ variates. 

We outline key steps involved in developing the exact $\mathrm{KG}(1,c)$ sampler; full details are provided in Supplementary Section~\ref{appendix:kgsampler}. 
It begins by identifying two alternating series representations of the $\mathrm{KG}(1,c)$ density. We then find an optimal proposal distribution that minimizes the expected number of loop iterations, which is a mixture of a truncated generalized inverse Gaussian (GIG) distribution and a truncated exponential distribution. Compared to the P\'olya-Gamma sampler implemented in the \texttt{BayesLogit} \texttt{R} package \citep{Polson2013-gb}, which took $\sim 0.3$ seconds to generate 1 million samples from P\'olya-Gamma(1,2), an \texttt{Rcpp} implementation of our Kolmogorov-Gamma sampler took $\sim 0.5$ seconds to generate 1 million samples from $\mathrm{KG}(1,2)$ in an Apple M1 CPU environment. 

The blocked Gibbs samplers described in Algorithms~\ref{alg:cobin} and~\ref{alg:micobin} require sampling $n$ KG random variables at each iteration, and exact sampling of a $\mathrm{KG}(\lambda, c)$ variate via convolution involves generating $\lambda$ i.i.d. $\mathrm{KG}(1, c)$ samples. Therefore, one of the key determinants of MCMC runtime other than dataset dimensions $(n,p)$ is the signal-to-noise ratio, which governs the parameter $\lambda$ (or latent $\{\lambda_i\}_{i=1}^n$ in micobin regression).

\section{Simulation studies}
\label{sec:5_simul}

\subsection{Robustness of MLE to model misspecification}

The first simulation study aims to empirically validate Proposition~\ref{prop:qmle_consistency}, focusing on the consistency and robustness of the maximum likelihood estimate of cobin regression models. We considered four different scenarios for the response distribution: (A) beta, (B) cobin, (C) mixture of beta and uniform distributions, known as the beta rectangular distribution \citep{Hahn2008-ch}, and (D) mixture of three beta distributions. Specifically, the densities of (C) and (D) with mean $\mu$ correspond to   
\begin{align}
\label{eq:betarec}
    p_{\mathrm{brec}}(y; \mu, \alpha, \phi) &= w(\mu,\alpha)p_{\mathrm{beta}}\left(y; (\mu - 0.5)/w(\mu,\alpha) + 0.5 , \phi\right) + 1-w(\mu,\alpha), \\
    p_{\mathrm{bmix}}(y; \mu, \phi) &= 0.25p_{\mathrm{beta}}(y; \mu-\epsilon(\mu), \phi) + 0.5p_{\mathrm{beta}}(y; \mu, \phi) + 0.25p_{\mathrm{beta}}(y; \mu+\epsilon(\mu), \phi), \nonumber
\end{align}
for $y\in(0,1)$, $w(\mu,\alpha) = 1-\alpha(1-|2\mu-1|)$ for some $\alpha\in(0,1)$, and $\epsilon(\mu) = \min(\mu, 1-\mu)/2$. 
We also consider two different link functions (cobit, logit) with sample sizes $n\in\{100,400,1600\}$, resulting in $4\times 2\times 3 = 24$ different data generating scenarios. We consider two covariates including the intercept and set the true coefficients as $\bm\beta = (\beta_0,\beta_1)=(0,1)$. The non-intercept covariate is generated as $x_i\iidsim N(0, \sigma_x^2)$, where we set $\sigma_x^2 = 9$ for cobit and $\sigma_x^2 = 1$ for logit to account for the difference in scales between the link functions reflecting that $g'_{\mathrm{cobit}}(0.5)$ and $g'_{\mathrm{logit}}(0.5)$ differ by a factor of 3. 
For parameters not related to the mean, we set $\phi = 8$ for beta, $\lambda = 3$ for cobin, $(\alpha,\phi) = (0.2, 10)$ for beta rectangular and $\phi = 40$ for mixture of beta. For the beta rectangular, the choice of $\alpha = 0.2$ ensures that the weight $w(\mu,\alpha)$ in \eqref{eq:betarec} assigned to the beta distribution is greater than 0.8. We simulate 1000 replicated datasets for each data generation scenario. 

For each data set, we fit the cobin and beta regression models to find MLE $\hat{\bm\beta}$ with unknown dispersion parameters, using the correct link function but not necessarily the true data generating distribution. We used iteratively reweighted least squares to find $\hat{\bm\beta}$ for cobin regression based on the \texttt{stats} package in \texttt{R} \citep{R-Core-Team2024-qx}, and we utilized the \texttt{betareg} \texttt{R} package \citep{Cribari-Neto2010-aa} for the beta regression estimate.

\begin{table}
\scriptsize
\caption{Analysis of beta and cobin regression MLE under four different data generating distributions: Beta, Cobin, beta rectangular (Betarec), and mixture of beta (Beta3mix). Results are based on 1000 replicates under the correct mean structure but possibly misspecified distributions.}
    \label{table:consistencycobitlogit}
    \centering
    \begin{tabular}{c c c c c c c c c c c}
    \toprule
  & &  & \multicolumn{2}{c}{Beta data}  & \multicolumn{2}{c}{Cobin data} & \multicolumn{2}{c}{Betarec data}  & \multicolumn{2}{c}{Beta3mix data}\\
     \cmidrule(lr){4-5}\cmidrule(lr){6-7} \cmidrule(lr){8-9}\cmidrule(lr){10-11}
   Link & Method & $n$ & Bias & RMSE & Bias & RMSE & Bias & RMSE & Bias & RMSE\\ 
     \midrule
 \multirow{6}{*}{cobit}   & \multirow{3}{*}{\makecell{beta\\regression}} & $100$ & 0.004 & 0.106 & -0.024 & 0.099  & -0.061  & 0.153 & -0.030 & 0.094 \\
   & & $400$ & -0.003 & 0.054 & -0.030 & 0.057 & -0.069 & 0.099 & -0.037 & 0.058 \\
   & & $1600$ & 0.002 & 0.026 & -0.030 & 0.038 &-0.076 & 0.084 & -0.036 & 0.042 \\
    \cmidrule(lr){2-11}
   & \multirow{3}{*}{\makecell{cobin\\regression}} & $100$ & 0.005 & 0.113 & 0.003 & 0.099 & 0.013 & 0.134 & 0.006 & 0.092\\
   & & $400$ & -0.002 & 0.056 & -0.001 & 0.049 & 0.006 & 0.070 & -0.001 & 0.046 \\
   & & $1600$ & 0.002 & 0.027 & 0.000 & 0.023 & -0.001 & 0.035 & 0.001 & 0.022\\
    \midrule
\multirow{6}{*}{logit} &\multirow{3}{*}{\makecell{beta\\regression}} & $100$ & 0.003 & 0.084 & -0.043 & 0.080 & -0.054 & 0.117 & -0.041 & 0.074 \\
   & & $400$ & 0.000 & 0.042 & -0.047 & 0.059 & -0.059 & 0.080 & -0.046 & 0.055\\
   & & $1600$ & 0.000 & 0.021 & -0.045 & 0.048 & -0.062 & 0.068 & -0.046 & 0.048\\
    \cmidrule(lr){2-11}
   & \multirow{3}{*}{\makecell{cobin\\regression}} & $100$ & 0.015 & 0.101& 0.005 & 0.066 & 0.020 & 0.116 & 0.005 & 0.067\\
   & & $400$ & 0.004 & 0.051& 0.000 & 0.035 & 0.007 & 0.062 & 0.000 & 0.033\\
   & & $1600$ & 0.000 & 0.026 & 0.001 & 0.016 & 0.001 & 0.032 &0.001 & 0.016\\
    \bottomrule
    \end{tabular}
    \begin{flushleft} 
{\footnotesize Monte Carlo standard errors of bias are all less than 0.004, and mean squared error (MSE) are all less than 0.001$\approx 0.031^2$ for $n=100$, 0.0004$=0.02^2$ for $n=400$, and  0.0002$ \approx 0.014^2$ for $n=1600$.}  
\end{flushleft}
\end{table}

The result is summarized in Table~\ref{table:consistencycobitlogit} in terms of bias and root mean square error (RMSE) of $\hat{\beta}_1$.  
As $n$ increases, the estimates from the cobin regression model exhibit a decreasing bias for any data-generating scenario, supporting the consistency described in Proposition~\ref{prop:qmle_consistency}. This stands in contrast to the results from the beta regression model, which show persistent bias and inconsistency under data-generating scenarios other than beta. For cobin regression results, the RMSE decreases proportionally to $n^{-1/2}$ for any data-generating scenario, which aligns with the asymptotic normality of $\hat{\bm\beta}$ in correctly specified or misspecified cases \citep{Gourieroux1984-ld}. 

Supplementary Section~\ref{appendix:simul_consistency} presents additional results, including a visualization of the data-generating distributions, the length of 95\% confidence intervals and empirical coverage probabilities, and predictive performance under even misspecified mean structures. These results further demonstrate that cobin regression is more robust than beta regression in terms of both inference and prediction across a range of misspecified settings.

\subsection{Robustness and scalability under spatial regression models}

We consider a spatial regression simulation study to analyze the robustness of cobin and micobin regression models to distributional misspecification. This simulation also aims to assess the scalability achieved through the Kolmogorov-Gamma augmentation, particularly for latent Gaussian hierarchical models. For data generation, we choose spatial locations uniformly at random from $[0, 1]^2$ as training and test locations with sizes $(n_{\mathrm{train}},n_{\mathrm{test}})\in\{(200,50), (400,100)\}$. Then, spatial random effects are generated from a mean zero Gaussian process (GP) with exponential kernel $\cov\{u(s),u(s')\}=\sigma_u^2\exp(-\|s-s'\|_2/\rho)$, where we set $\sigma_u^2 = 1$ and $\rho \in \{0.1, 0.2\}$. For the fixed effect terms, we consider two covariates including the intercept, generate a non-intercept covariate as $x(s_i)\iidsim N(0, 3^2)$, and set true coefficients as $(\beta_0^{\mathrm{true}},\beta_1^{\mathrm{true}}) = (0,1)^\T$. The responses $y(s_i)$ are generated from a beta rectangular distribution \eqref{eq:betarec} with cobit link $\mu(s_i) = g_{\mathrm{cobit}}^{-1}\{\beta_0^{\mathrm{true}}+ x(s_i)\beta_1^{\mathrm{true}} + u(s_i)\}$, $\alpha=0.2$, and $\phi = 10$. This data generation process is repeated 100 times. 

Based on the training set with size $n_{\mathrm{train}}$, we fit three different spatial regression models (beta, cobin, micobin) with correct mean structure \eqref{eq:spreg} but none of the response distributions are correctly specified. We set a $N_2(\bm{0}_2, 100^2\bfI_2)$ prior on the regression coefficients $\bm\beta$ and a half-Cauchy prior on the spatial random effects standard deviation $p(\sigma_u)\propto (1+\sigma_u^2)^{-1}$, $\sigma_u>0$. The spatial range parameter $\rho\in\{0.1, 0.2\}$ is fixed at the true value. For spatial cobin and micobin regression models, we employ a blocked Gibbs sampler with Kolmogorov-Gamma augmentation. 
For the spatial beta regression model, we use \texttt{Stan} to carry out posterior inference using the No-U-Turn Sampler algorithm \citep{Carpenter2017-lo}. We run a total of 6,000 MCMC iterations and record wall-clock running time, with the first 1,000 samples discarded as burn-in. %

\begin{table}
\scriptsize
\caption{Bayesian spatial regression simulation results based on 100 replicates, under the correct mean structure but misspecified distributions.}
    \label{table:sim_spatial}
    \centering
    \begin{tabular}{c c c c c c c c c}
    \toprule
\multirow{2}{*}{\makecell{Spatial\\dependence}} & & & \multicolumn{2}{c}{Inference $(\hat\beta_1)$} & \multicolumn{2}{c}{Prediction} &  \multicolumn{2}{c}{Sampling $(\bm\beta)$}  \\
    \cmidrule(lr){4-5}\cmidrule(lr){6-7} \cmidrule(lr){8-9}
&  Method  & $(n_{\mathrm{train}},n_{\mathrm{test}})$ & Bias  & RMSE & NTLL & MSPE$\times 10^2$ & mESS & time (min) \\ 
     \midrule
    & \multirow{2}{*}{\makecell{beta\\ regression}} & $(200,50)$  & -0.048 & 0.118 & -0.325 & 0.427 & 919.8 & 44.5 \\
    & & $(400,100)$ & -0.052 & 0.089 & -0.354 & 0.345 & 978.7 & 437.7 \\
    \cmidrule(lr){2-9}
  \multirow{2}{*}{\makecell{$\rho=0.1$\\(moderate)}}  & \multirow{2}{*}{\makecell{cobin\\
    regression}} & $(200,50)$  & 0.005 & 0.093 & -0.340 & 0.388 & 2791.3 & 2.0 \\
    & & $(400,100)$ & 0.005 & 0.067 & -0.372 & 0.323 & 3220.9 & 11.2 \\
     \cmidrule(lr){2-9}
    & \multirow{2}{*}{\makecell{micobin\\
    regression}} & $(200,50)$  & 0.034 & 0.099 & -0.367 & 0.373 & 1908.4 & 2.4 \\
    & & $(400,100)$ & 0.037 & 0.074 & -0.394 & 0.312 & 2137.5 & 11.7 \\
\midrule
    & \multirow{2}{*}{\makecell{beta\\
    regression}} & $(200,50)$  & -0.065 & 0.120 & -0.320 & 0.329 & 1187.2 & 96.3 \\
    & & $(400,100)$  & -0.052 & 0.095 & -0.350 & 0.248 & 808.0 & 933.4 \\
     \cmidrule(lr){2-9}
 \multirow{2}{*}{\makecell{$\rho=0.2$\\(strong)}}  & \multirow{2}{*}{\makecell{cobin\\
    regression}} & $(200,50)$  & 0.000 & 0.088 & -0.346 & 0.306 & 3366.0 & 2.2 \\
    & & $(400,100)$  & 0.013 & 0.078 & -0.370 & 0.233 & 3663.9 & 12.1 \\
     \cmidrule(lr){2-9}
    & \multirow{2}{*}{\makecell{micobin\\
    regression}} & $(200,50)$  & 0.039 & 0.092 & -0.373 & 0.293 & 2265.3 & 2.2 \\
    & & $(400,100)$  & 0.050 & 0.091 & -0.395 & 0.226 & 2575.4 & 12.7 \\ 
    \bottomrule
    \end{tabular}
\begin{flushleft} 
{\footnotesize NTLL, negative test log-likelihood; MSPE, mean square prediction error; mESS, multivariate effective sample size. Monte Carlo standard errors are all less than 0.015 for NTLL, 0.013 for MSPE, 127.2 for mESS.}  
\end{flushleft}
\end{table}

Table~\ref{table:sim_spatial} summarizes the results. First, with respect to the posterior mean estimate $\hat{\beta}_1$ of the fixed effect coefficient, cobin regression consistently produces the lowest bias and RMSE in all scenarios, in accordance with the previous simulation results. In contrast, beta and micobin regression estimates exhibit bias, with the magnitude of these biases increasing as spatial dependence becomes stronger, which is partly attributable to beta and micobin not belonging to the exponential dispersion model. 
Second, to evaluate predictive performance under model misspecification, we compare the negative test log-likelihood conditional on random effects (negtestLL) and mean square prediction error (MSPE) based on test data. Micobin regression outperforms the others, while beta regression performs the worst on both metrics, highlighting the robustness of micobin regression in prediction. Finally, in terms of computational efficiency, cobin and micobin achieve a significantly higher multivariate effective sample size (mESS) \citep{Vats2019-wo} of $\bm\beta$ per unit time compared to beta regression. These results highlight the scalability of both cobin and micobin regressions in hierarchical settings, with cobin showing more reliable parameter estimation, consistent with the behavior suggested by Proposition~\ref{prop:qmle_consistency} in a simpler setting, and micobin offering greater robustness in predictive performance under potential distributional misspecification. Supplementary Section~\ref{appendix:simul_spatial} contains further details of the spatial regression simulation settings, including priors for dispersion parameters, as well as additional summaries including 95\% credible interval length and empirical coverage probabilities of $\beta_1$.

\subsection{Robustness to the presence of outliers}

We conduct a third simulation study to systematically assess the robustness of the cobin and micobin models to outliers, focusing on the stability of both parameter estimates and predictions. We considered two types of datasets: one without outliers, denoted as $\calD = \{(y_i,\bm{x}_i)\}_{i=1}^n$ with size $n=500$, and one with a single outlier, denoted as $\calD^\circ = \calD\cup \{(y^\circ,\bm{x}^\circ)\}$ with size $501$. We set $p=3$ predictors including an intercept, where covariates in $\calD$ are generated as $(x_{i1},x_{i2})\iidsim N_2(\bm{0}, 3^2\bfI_2)$, $i=1,\dots,n$.
We considered two different response distributions for $\calD$, (A) beta and (B) cobin, under the same mean structure $g_{\mathrm{cobit}}\{E(y_i \mid \bmx_i)\} = \beta_0 + \beta_1x_{i1} + \beta_2 x_{i2}$. The true regression coefficients were set as  $(\beta_0^{\mathrm{true}},\beta_1^{\mathrm{true}},\beta_2^{\mathrm{true}})^\T = (-6,1,0)^\T$, and for dispersion parameters we set $\phi = 17$ for beta and $\lambda = 6$ for cobin. We then introduced an outlier with covariate $\bmx^\circ=(6,6)^\T$ and a response value of either $y^\circ=10^{-2}$ (setting 1) or $y^\circ=10^{-3}$ (setting 2), resulting in four different data settings (A1, A2, B1, B2). Since $E(y \mid \bm{x}=\bmx^\circ) = 0.5$ under the true model without outlier, the response $y^\circ=10^{-2}$ or $y^\circ=10^{-3}$ are highly unlikely, with $y^\circ=10^{-3}$ being more extreme. This data generation process was repeated 100 times. 

For each setting with and without outliers, we fit four different fixed-effects models (beta, cobin, beta rectangular, micobin) with the mean structure $g_{\mathrm{cobit}}\{E(y_i \mid \bmx_i)\} = \beta_0 + \beta_1x_{i1} + \beta_2 x_{i2}$ for $i=1,\dots,n$. We denote $\hat\beta_j^\circ$ and $\hat\beta_j$ $(j=0,1,2)$ as posterior mean estimates of $\beta_j$ based on the dataset with outlier ($\calD^\circ$) and without outlier $(\calD)$. For all four models, we set a $N_3(\bm{0}_3, 100^2\bfI_3)$ prior on the regression coefficients $\bm\beta$. For the cobin and micobin regression models, posterior computation is performed with Kolmogorov-Gamma augmentation based on Algorithm~\ref{alg:cobin} and \ref{alg:micobin}. For beta regression and beta rectangular regression, we use \texttt{Stan} with the No-U-Turn Sampler. We ran a total of 6,000 MCMC iterations and discarded the first 1,000 samples as a burn-in.

\begin{table}
\scriptsize
\caption{Outlier robustness simulation results based on 100 replicates in terms of stability of parameter estimates $|\Delta\hat{\beta}_j|$, empirical coverage, and stability of linear predictor $\|\Delta\hat\eta_{1:n}\|_2$.}
\label{tab:outliersim}
\centering
\begin{tabular}{c c c c c c c c c c}
\toprule
\multirow{2}{*}{Method} &  
\multirow{2}{*}{Setting} &  
\multirow{2}{*}{$|\Delta\hat{\beta}_0|$}  & 
\multirow{2}{*}{$|\Delta\hat{\beta}_1|$} & 
\multicolumn{2}{c}{Coverage of $\mathrm{CI}_{.95}(\beta_1)$} & 
\multirow{2}{*}{$|\Delta\hat{\beta}_2|$} & 
\multicolumn{2}{c}{Coverage of $\mathrm{CI}_{.95}(\beta_2)$} & 
\multirow{2}{*}{$\|\Delta\hat\eta_{1:n}\|_2$} \\
\cmidrule(lr){5-6} \cmidrule(lr){8-9}
 & & & & with $y^\circ$ & without $y^\circ$ & & with $y^\circ$ & without $y^\circ$ & \\
\midrule
\multirow{2}{*}{\makecell{beta\\regression}}  
  & A1 & 0.095 & 0.061 & 74.0\% & \multirow{2}{*}{94.0\%}  & 0.056 & 69.0\% & \multirow{2}{*}{96.0\%} & 5.948 \\
  & A2 & 0.145 & 0.091 & 51.0\% &  & 0.083 & 39.0\% &  & 8.852 \\
\cmidrule(lr){1-10}
\multirow{2}{*}{\makecell{cobin\\regression}}  
  & A1 & 0.007 & 0.017 & 95.0\% & \multirow{2}{*}{97.0\%} & 0.020 & 95.0\% & \multirow{2}{*}{95.0\%} & 1.768 \\
  & A2 & 0.008 & 0.016 & 95.0\% & & 0.020 & 96.0\% & & 1.762 \\
\cmidrule(lr){1-10}
\multirow{2}{*}{\makecell{betarec\\regression}}  
  & A1 & 0.039 & 0.007 & 93.0\% & \multirow{2}{*}{92.0\%} & 0.001 & 95.0\% & \multirow{2}{*}{96.0\%} & 0.987 \\
  & A2& 0.038 & 0.006 & 93.0\% &  & 0.001 & 96.0\% & & 0.964 \\
\cmidrule(lr){1-10}
\multirow{2}{*}{\makecell{micobin\\regression}}  
  & A1 & 0.007 & 0.005 & 91.0\% & \multirow{2}{*}{90.0\%} & 0.006 & 93.0\% & \multirow{2}{*}{95.0\%} & 0.601 \\
  & A2 & 0.006 & 0.005 & 92.0\% & & 0.006 & 93.0\% &  & 0.562 \\
\midrule
\multirow{2}{*}{\makecell{beta\\regression}}  
  & B1 & 0.099 & 0.062 & 41.0\% &  \multirow{2}{*}{88.0\%}& 0.058 & 54.0\% &  \multirow{2}{*}{92.0\%} & 6.136 \\
  & B2 & 0.150 & 0.093 & 17.0\% & & 0.085 & 28.0\% & & 9.111 \\
\cmidrule(lr){1-10}
\multirow{2}{*}{\makecell{cobin\\regression}}  
  & B1 & 0.004 & 0.017 & 96.0\% &  \multirow{2}{*}{96.0\%} & 0.020 & 91.0\% &  \multirow{2}{*}{96.0\%} & 1.770 \\
  & B2 & 0.005 & 0.017 & 97.0\% & & 0.021 & 93.0\% &  & 1.807 \\
\cmidrule(lr){1-10}
\multirow{2}{*}{\makecell{betarec\\regression}}  
  & B1& 0.044 & 0.007 & 77.0\% &  \multirow{2}{*}{79.0\%} & 0.001 & 91.0\% &  \multirow{2}{*}{91.0\%} & 1.114 \\
  & B2 & 0.045 & 0.007 & 75.0\% & & 0.001 & 91.0\% & & 1.124 \\
\cmidrule(lr){1-10}
\multirow{2}{*}{\makecell{micobin\\regression}}  
  & B1 & 0.005 & 0.003 & 93.0\% &  \multirow{2}{*}{91.0\%} & 0.004 & 94.0\% &  \multirow{2}{*}{96.0\%} & 0.357 \\
  & B2 & 0.005 & 0.003 & 93.0\% & & 0.004 & 95.0\% & & 0.368 \\
\bottomrule
\end{tabular}
\begin{flushleft}
{\footnotesize Monte Carlo standard errors are all less than 0.003 for $|\Delta\hat{\beta}_j|$ ($j=0,1,2$), 5.0\% for coverage probabilities, and 0.098 for $\|\Delta\hat{\eta}_{1:n}\|_2$.}
\end{flushleft}
\end{table}

We summarize simulation results in three different ways. First, we compute $|\Delta \hat\beta_j|=|\hat{\beta}_j^\circ-\hat\beta_j|$ to quantify the stability of the parameter estimates in the presence of an outlier. 
Also, we compare the empirical coverage of the 95\% credible intervals for $\beta_1^\circ, \beta_1$ (with $\beta_1^{\mathrm{true}}=1$) and $\beta_2^\circ, \beta_2$ (with $\beta_2^{\mathrm{true}}=0$) to assess how outliers affect both point estimates and uncertainty quantification. 
Finally, we assess the stability of the linear predictor using  $\|\Delta \hat{\eta}_{1:n}\|_2 = \left[\sum_{i=1}^n(\hat\eta_{i}^\circ-\hat\eta_{i})^2\right]^{1/2}$, where $\hat\eta_i = \hat\beta_0 + \hat\beta_1x_{i1}+\hat\beta_2x_{i2}$ and $\hat\eta_i^\circ = \hat\beta_0^\circ+ \hat\beta_1^\circ x_{i1}+\hat\beta_2^\circ x_{i2}$. 

 Table~\ref{tab:outliersim} demonstrates that beta regression is the most sensitive to outliers in all aspects. In terms of stability of regression coefficients and linear predictors, beta regression exhibits poor performance, especially showing a stark difference between settings where $y^\circ = 10^{-2}$ and $y^\circ = 10^{-3}$. This sharp contrast is also reflected in the empirical coverage; all coverage values fall below 74\% in Setting 1 ($y^\circ = 10^{-2}$) and below 51\% in Setting 2 ($y^\circ = 10^{-3}$). In contrast, the cobin regression result is virtually unaffected by outliers. 
Both beta rectangular and micobin regressions, which are based on mixture formulations, offer generally better stability than cobin regression. Between the two, micobin regression shows greater robustness in terms of the stability of the linear predictor. This is partly because the intercept estimate in beta rectangular regression is relatively non-robust (see $|\Delta\hat{\beta}_0|$ column), since the beta rectangular model assumes outliers arise from a uniform distribution with fixed mean 1/2, which tends to highly influence the intercept estimate. 

Overall, cobin regression provides clear advantages over beta regression in terms of robustness to outliers, even without relying on mixture formulations. Micobin regression achieves even greater robustness, with similar improvements. However, the beta rectangular model suffers from instability in the intercept estimate, which leads to degraded predictive performance compared to the micobin regression model. Supplementary Section~\ref{appendix:simul_outlier} presents additional results, including visualizations of the data-generating distributions, prior distributions for parameters not related to the mean, and summaries of posterior mean estimates $\hat{\beta}_j^\circ$, $\hat{\beta}_j$ and sampling efficiency. These additional results further support the robustness of cobin and micobin regression methods in the presence of outliers.

\section{Benthic macroinvertebrate multimetric index of U.S. lakes}
\label{sec:6_mmi}

As an illustrative application of cobin and micobin regression models, we analyze the benthic macroinvertebrate multimetric index (MMI) of US lakes and the association with lake watershed covariates. MMI, also known as an index of biotic integrity, is a standard quantitative measure for the bioassessment of macroinvertebrate assemblages \citep{Karr1991-sq, Stoddard2008-fb} that integrates various attributes of the assemblage (e.g. taxonomic composition and richness). Higher MMI values indicate a healthier and more diverse benthic macroinvertebrate community. We consider MMI data from the 2017 National Lake Assessment Survey (NLA) \citep{US-Environmental-Protection-Agency2022-dp}, which covers about 1,000 lakes in the conterminous US; see the left panel of Figure~\ref{fig:mmidata}. 

We are interested in understanding how the biotic integrity of lake ecosystems measured by MMI is associated with natural and human-related lake watershed characteristics, as well as in predicting the MMI of unsurveyed lakes. We consider LakeCat data \citep{Hill2018-eq} covering more than 380,000 US lakes, containing various natural and anthropogenic watershed covariates. For illustrative purposes, we select 7 watershed covariates that are highly important in the analysis of lake eutrophication \citep[][Fig. 7]{Hill2018-eq}, as well as 2 additional covariates (manure application and urban land cover); see Table~\ref{table:mmicovariate} in Supplementary Section~\ref{appendix:mmi_data} for description. Since all covariates were heavily right-skewed and exhibited limited variation around the mean, we applied the transformation $x \mapsto \log_2(x+1)$ to reduce skewness and mitigate the influence of very large values.

\begin{table}[t]
    \scriptsize
    \caption{Lakes with the 3 lowest MMI values (multiplied by 0.01 from the original 0--100 scale)}
    \centering
    \begin{tabular}{c c c c c c c c c}
    \toprule
    COMID & Longitude & Latitude & Lake name, State & MMI & Mean MMI (10 nearest) & $\mathcal{D}^\star$ & $\mathcal{D}^\circ$ & $\mathcal{D}$ \\
    \midrule
    22721231 & -81.893 & 33.300 & Brierpatch Lake, GA & 0     & 0.534 & \checkmark &       &       \\
    9201925  & -80.263 & 35.116 & Jones Pond, NC      & 0.020 & 0.460 & \checkmark & \checkmark &     \\
    22845861 & -92.273 & 34.538 & Ferguson Lake, AR & 0.021 & 0.414 & \checkmark & \checkmark &  \\
    \bottomrule
    \end{tabular}
    \label{table:lakelowmmi}
\vspace{-3mm}
    \begin{flushleft}
    {\scriptsize
COMID: unique lake identifier; checkmark indicates inclusion in the respective dataset ($\mathcal{D}^\star$, $\mathcal{D}^\circ$, or $\mathcal{D}$).
}
    \end{flushleft}
\end{table}
After excluding lakes with missing MMI and/or covariates, a total of 950 lakes remained. Among 950 lakes, MMI values range from 0 to 0.904 (mean 0.452; median 0.450, Q1 0.340, Q3 0.564), with one lake at 0 and two near the lower boundary (0.020, 0.021). To assess robustness, we analyze three datasets: (1) the full dataset $(\calD^\star, n = 950)$; (2) one lake with zero MMI removed $(\calD^\circ, n = 949)$; and (3) three lakes removed $(\calD, n = 947)$, additionally excluding the two low MMI lakes. As shown in Table~\ref{table:lakelowmmi}, these lakes exhibit exceptionally low MMI compared to their surroundings, potentially due to localized environmental degradation such as waste discharge or nearby construction activity. 
In fact, beta and cobin regression fit results on $\calD^\circ$ reveal that the two low MMI lakes (0.020 and 0.021) exhibit poor goodness-of-fit based on residual diagnostics, suggesting they are acting as potential outliers under these models (see Supplementary Section~\ref{appendix:qresidual} for details). By comparing the analysis results with and without these potential outliers, we aim to evaluate the robustness of the proposed cobin and micobin regression models and compare them with beta regression.

We fit spatial beta and cobin regression models with the cobit link \eqref{eq:spreg} to $\mathcal{D}^\circ$ and $\mathcal{D}$, and spatial beta rectangular and micobin regression models to all three datasets, since these models can accommodate boundary values. 
We selected 55,215 lakes from the LakeCat dataset for prediction, focusing on those with surface areas greater than 40,000 m$^2$. 
Since making probabilistic predictions for these 55,215 lakes using a traditional GP prior is computationally prohibitive, we employ a nearest neighbor Gaussian process (NNGP) \citep{Datta2016-dx} on the spatial random effect. Since MMI values are unavailable for unsampled lakes, we assess predictive performance using WAIC  \citep{Gelman2014-ga} and leave-one-out cross-validation estimate using Pareto smoothed importance sampling (PSIS-LOO) \citep{Vehtari2024-qa}, both conditional on random effects.  We use a prior and algorithm similar to that in Section~\ref{sec:5_simul} for spatial cobin and micobin regression, and \texttt{Stan} for spatial beta and beta rectangular regression; see Supplementary Section~\ref{appendix:mmi_data} for details. We ran three chains for each model for a total of 6,000 iterations per chain, discarding the first 1,000 samples from each as burn-in. The convergence diagnosis with Gelman-Rubin statistic is less than 1.02 for all model parameters, indicating satisfactory convergence.

\begin{table}[t]
\scriptsize
\caption{Estimated fixed-effect coefficients ($\hat{\bm\beta}^\circ$) from four spatial regression models fitted on $\mathcal{D}^\circ$ ($n = 949$) and 95\% credible intervals. Bold entries indicate non-intercept coefficients whose 95\% credible intervals exclude zero in $\mathcal{D}^\circ$ ($n = 949$); asterisks indicate the same on fits to $\mathcal{D}$ ($n = 947$)}
\label{table:mmiresult_final}
\centering
\begin{tabular}{lcccc}
\toprule
  & Beta & Cobin & Beta rectangular & Micobin \\
\midrule
(Intercept) & -2.363 (-4.160, -0.553) & -2.106 (-3.859, -0.345) & -1.609 (-3.370, 0.149) & -1.797 (-3.551, -0.085) \\
agkffact    & -2.586$^*$(-5.584, 0.330) & \textbf{-2.888}$^*$ (-5.714, -0.003) & \textbf{-3.041}$^*$ (-5.902, -0.202) & \textbf{-3.457}$^*$ (-6.113, -0.800) \\
bfi         & \textbf{0.343} (0.016, 0.672) & 0.293 (-0.022, 0.614) & 0.206 (-0.110, 0.523) & 0.229 (-0.082, 0.548) \\
cbnf        & 0.165 (-0.081, 0.412) & 0.182 (-0.055, 0.420) & 0.201 (-0.034, 0.438) & 0.191 (-0.035, 0.425) \\
conif       & 0.081$^*$ (-0.002, 0.164) & \textbf{0.093}$^*$ (0.011, 0.176) & \textbf{0.106}$^*$ (0.025, 0.187) & \textbf{0.124}$^*$ (0.044, 0.203) \\
crophay     & -0.079 (-0.250, 0.091) & -0.063 (-0.231, 0.106) & -0.036 (-0.204, 0.128) & -0.054 (-0.213, 0.105) \\
fert        & -0.073 (-0.310, 0.158) & -0.092 (-0.323, 0.132) & -0.128 (-0.356, 0.097) & -0.082 (-0.300, 0.138) \\
manure      & -0.048 (-0.202, 0.102) & -0.036 (-0.182, 0.115) & -0.018 (-0.166, 0.132) & -0.029 (-0.173, 0.118) \\
pestic97    & -0.014 (-0.106, 0.075) & -0.021 (-0.108, 0.067) & -0.035 (-0.121, 0.053) & -0.025 (-0.108, 0.059) \\
urbmdhi     & \textbf{-0.181}$^*$ (-0.288, -0.076) & \textbf{-0.170}$^*$ (-0.273, -0.067) & \textbf{-0.166}$^*$ (-0.271, -0.062) & \textbf{-0.142}$^*$ (-0.243, -0.041)\\
\midrule
mESS$(\bm\beta)$/t & 3099.1/128 mins & 4719.2/4 mins & 3894.9/118 mins & 3095.6/5 mins \\
$\|\hat{\bm\beta}^\circ - \hat{\bm\beta}\|_2$ & 0.743 & 0.444 & 0.098 & 0.069 \\
PSIS-LOO & -1084.4$(\calD^\circ)$, -1125.2$(\calD)$ & -1095.5$(\calD^\circ)$, -1134.7$(\calD)$ & -1109.3$(\calD^\circ)$, -1124.8$(\calD)$ & -1115.4$(\calD^\circ)$, -1126.9$(\calD)$ \\
WAIC     & -1093.4$(\calD^\circ)$, -1134.7$(\calD)$ & -1103.5$(\calD^\circ)$, -1142.4$(\calD)$ & -1114.6$(\calD^\circ)$, -1131.3$(\calD)$ & -1119.3$(\calD^\circ)$, -1131.1$(\calD)$ \\
\bottomrule
\end{tabular}
\vspace{-2mm}
\begin{flushleft}
{\scriptsize
agkffact, soil erodibility factor; bfi, base flow index; cbnf, cultivated biological N fixation; conif, coniferous forest cover; crophay, crop/hay land cover; fert, synthetic N fertilizer use; manure, manure application; pestic97, 1997 pesticide use; urbmdhi, medium/high-density urban land cover; mESS$(\bm\beta)$, t, multivariate effective sample size and wall-clock fitting time (in mins), averaged over 3 chains; PSIS-LOO, leave-one-out cross-validation estimate using Pareto smoothed importance sampling; WAIC, widely applicable information criterion. All variables are $\log_2(x+1)$ transformed. 
}
\end{flushleft}
\end{table}

We summarize the posterior mean estimate of fixed-effect coefficients from all four models based on $\calD^\circ$ in Table~\ref{table:mmiresult_final}, where non-intercept coefficients with 95\% credible intervals excluding zero are emphasized in bold. 
Supplementary Section~\ref{appendix:mmi_results} provides corresponding estimates based on $\calD^\star$ and $\calD$, as well as the average slopes (based on $\calD^\circ$) for better interpretation of the results. 
The estimated signs of coefficients are generally sensible, where decreased biotic integrity (low MMI) in US lakes is associated with high soil erodibility, low base flow index, low coniferous forest cover, and high urban land cover in the lake watershed area. 

Table~\ref{table:mmiresult_final} also highlights coefficients with asterisks whose 95\% credible intervals exclude zero when the model is fitted to the dataset $\mathcal{D}$ without potential outliers. 
Notably, the beta regression yields different conclusions between $\mathcal{D}^\circ$ and $\mathcal{D}$, whereas cobin, beta rectangular, and micobin regression results remain consistent. This pattern aligns with the previous outlier robustness simulation, where beta regression exhibited poor stability and empirical coverage of $\bm\beta$, whereas micobin regression shows the highest stability. In terms of computational efficiency, both spatial cobin and micobin regressions based on Kolmogorov-Gamma augmentation outperform spatial beta and beta rectangular regression, achieving approximately a 20-fold improvement in mESS of $\bm\beta$ per unit time.

In environmental applications, it is also important to evaluate predictive performance in the presence of extreme observations rather than discarding them as mere anomalies. Lakes with low MMI likely reflect genuine, localized ecological degradation, and it is important that models accurately learn from such cases to make reliable predictions, particularly because similarly degraded lakes likely exist among the unsampled lakes. From this perspective, spatial micobin regression achieves the strongest predictive performance on $\mathcal{D}^\circ$ ($n = 949$) with the lowest PSIS-LOO and WAIC. When the two low MMI lakes are excluded, spatial cobin regression achieves the best predictive performance on $\mathcal{D}$ ($n = 947$), as its lighter tails compared to micobin yield sharper probabilistic predictions in the absence of outliers. 
Figure~\ref{fig:mmi_micobinpred} illustrates the predicted MMI at 55,215 lakes from micobin regression fit based on $\calD^\circ$, with the right panel reflecting increased uncertainty in southern California and western Texas due to sparse MMI data. Supplementary Section~\ref{appendix:mmi_results} presents corresponding predictions from the beta, cobin, and beta rectangular regression models, along with their differences from the micobin predictions. 
Overall, the results suggest that the cobin and micobin regressions are more robust to observations near the boundaries, exhibit significantly better scalability, and achieve better predictive performance.

\begin{figure}[t]
    \centering
    \includegraphics[width=\linewidth]{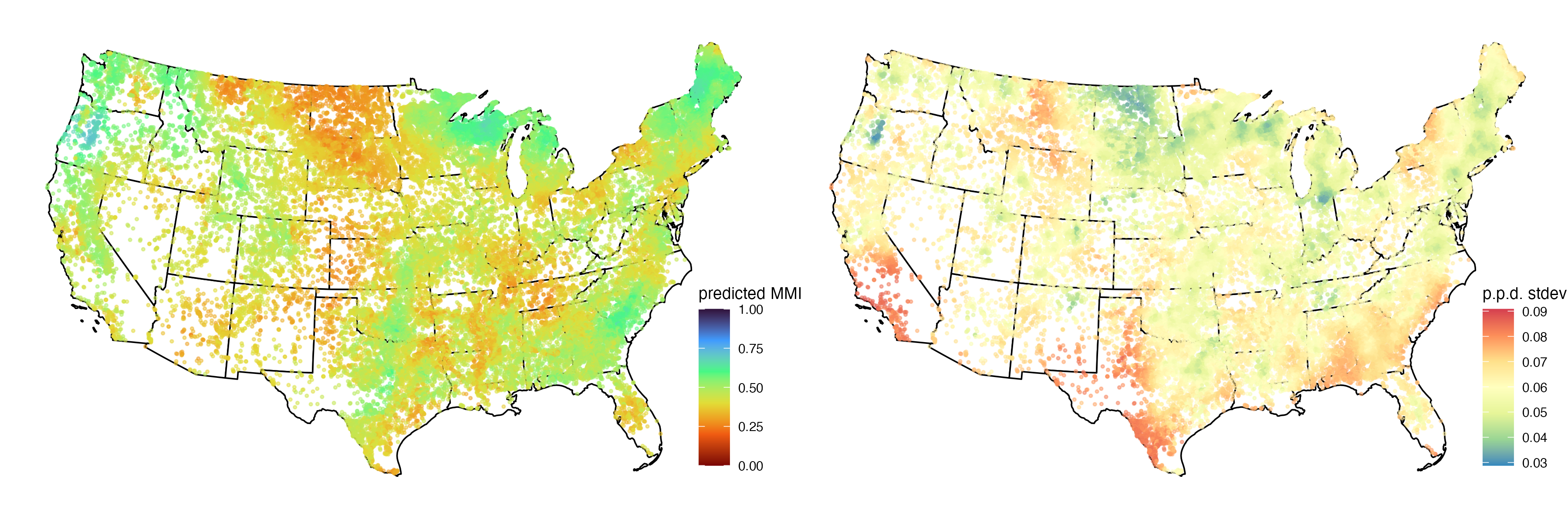}
    \caption{Predicted MMI at 55,215 lakes from the spatial micobin regression model based on $\calD^\circ$ in terms of $E\{Y(s^*)\mid X(s^*)\}$ at unsampled location $s^*$. (Left) Posterior predictive mean of $E\{Y(s^*)\mid X(s^*)\}$. (Right) Posterior predictive standard deviation of $E\{Y(s^*)\mid X(s^*)\}$.}
    \label{fig:mmi_micobinpred}
\end{figure}

\section{Discussion}
\label{sec:7_discussion}

While the proposed cobin and micobin models demonstrate strong robustness properties and introduce a new framework for modeling continuous proportional data, our further investigation also revealed challenges in posterior computation under certain scenarios. 
When data are highly concentrated near the boundaries of the unit interval, the Markov chain based on Kolmogorov-Gamma augmentation can exhibit poor mixing behavior, analogous to known issues with P\'olya-Gamma augmentation under highly imbalanced binary data \citep{Johndrow2019-os}. Detailed theoretical and empirical results illustrating this phenomenon are provided in Supplementary Section~\ref{appendix:imbalanced}. Developing improved strategies to address these challenges, potentially through calibrated data augmentation \citep{Duan2018-zl} or parameter expansion techniques \citep{Zens2024-nb}, represents an important direction for future work.

\color{black}
Beyond generalized linear (mixed) models, we anticipate that cobin and micobin distributions can be naturally incorporated into a more diverse family of models with continuous proportional data, such as tree ensembles \citep{Stoddard2008-fb} or deep generative models \citep{Loaiza-Ganem2019-dl}. Extensions to multivariate outcomes, such as analyzing compositional data that lie on the simplex, also represent a promising future research avenue.
Developments of scalable inference methods are key to enabling such extensions in large-scale settings. Similar to how P\'olya-Gamma augmentation is connected with variational inference for logistic models \citep{Durante2019-sd}, we hope that the proposed Kolmogorov-Gamma augmentation also provides insight for the future development of approximate Bayesian inference methods, in addition to facilitating computation using MCMC.

\section*{Data availability statement}
Code to reproduce the analyses: \url{https://github.com/changwoo-lee/cobin-reproduce}.

\section*{Acknowledgement}
This research was partially supported by the National Institutes of Health (grant ID R01ES035625), by the European Research Council under the European Union’s Horizon 2020 research and innovation programme (grant agreement No 856506), by the National Science Foundation (NSF IIS-2426762), and by the Office of Naval Research (N00014-24-1-2626).

\bibliography{ref.bib}

\begin{thebibliography}{}

\bibitem[Agresti, 2015]{Agresti2015-xu}
Agresti, A. (2015).
\newblock {\em {Foundations of Linear and Generalized Linear Models}}.
\newblock John Wiley \& Sons.

\bibitem[Andrews and Mallows, 1974]{Andrews1974-ms}
Andrews, D.~F. and Mallows, C.~L. (1974).
\newblock {Scale mixtures of normal distributions}.
\newblock {\em J. R. Statist. Soc. B}, 36(1):99--102.

\bibitem[Arel-Bundock et~al., 2024]{Arel-Bundock2024-ee}
Arel-Bundock, V., Greifer, N., and Heiss, A. (2024).
\newblock {How to interpret statistical models using marginaleffects for R and python}.
\newblock {\em J. Stat. Softw.}, 111.

\bibitem[Bates, 1955]{Bates1955-mg}
Bates, G.~E. (1955).
\newblock {Joint distributions of time intervals for the occurrence of successive accidents in a generalized Polya scheme}.
\newblock {\em Ann. Math. Stat.}, 26(4):705--720.

\bibitem[Bayes et~al., 2012]{Bayes2012-rr}
Bayes, C.~L., Bazán, J.~L., and García, C. (2012).
\newblock {A new robust regression model for proportions}.
\newblock {\em Bayesian Anal.}, 7(4):841--866.

\bibitem[Bhadra et~al., 2016]{Bhadra2016-ri}
Bhadra, A., Datta, J., Polson, N.~G., and Willard, B. (2016).
\newblock {Default Bayesian analysis with global-local shrinkage priors}.
\newblock {\em Biometrika}, 103(4):955--969.

\bibitem[Bhattacharya et~al., 2021]{Bhattacharya2021-dl}
Bhattacharya, S.~K., Khare, K., and Pal, S. (2021).
\newblock {Geometric ergodicity of Gibbs samplers for the Horseshoe and its regularized variants}.
\newblock {\em Electron. J. Stat.}, 16(1):1--57.

\bibitem[Bolker et~al., 2009]{Bolker2009-tg}
Bolker, B.~M., Brooks, M.~E., Clark, C.~J., Geange, S.~W., Poulsen, J.~R., Stevens, M. H.~H., and White, J.-S.~S. (2009).
\newblock {Generalized linear mixed models: a practical guide for ecology and evolution}.
\newblock {\em Trends Ecol. Evol.}, 24(3):127--135.

\bibitem[Brazier et~al., 2002]{Brazier2002-dz}
Brazier, J., Roberts, J., and Deverill, M. (2002).
\newblock {The estimation of a preference-based measure of health from the SF-36}.
\newblock {\em J. Health Econ.}, 21(2):271--292.

\bibitem[Cantoni and Ronchetti, 2001]{Cantoni2001-xu}
Cantoni, E. and Ronchetti, E. (2001).
\newblock {Robust inference for generalized linear models}.
\newblock {\em J. Am. Statist. Assoc.}, 96(455):1022--1030.

\bibitem[Carpenter et~al., 2017]{Carpenter2017-lo}
Carpenter, B., Gelman, A., Hoffman, M.~D., Lee, D., Goodrich, B., Betancourt, M., Brubaker, M.~A., Guo, J., Li, P., and Riddell, A. (2017).
\newblock {Stan: A probabilistic programming language}.
\newblock {\em J. Statist. Softw.}, 76(1):1--32.

\bibitem[Chen et~al., 2020]{Chen2020-pe}
Chen, Y., Dwivedi, R., Wainwright, M., and Yu, B. (2020).
\newblock {Fast mixing of Metropolized Hamiltonian Monte Carlo: Benefits of multi-step gradients}.
\newblock {\em J. Mach. Learn. Res.}, 21(92):1--72.

\bibitem[Choi and Hobert, 2013]{Choi2013-jv}
Choi, H.~M. and Hobert, J.~P. (2013).
\newblock {The Polya-Gamma Gibbs sampler for Bayesian logistic regression is uniformly ergodic}.
\newblock {\em Electron. J. Stat.}, 7:2054--2064.

\bibitem[Cribari-Neto and Zeileis, 2010]{Cribari-Neto2010-aa}
Cribari-Neto, F. and Zeileis, A. (2010).
\newblock {Beta Regression in R}.
\newblock {\em J. Stat. Softw.}, 34(2):1–24.

\bibitem[Datta et~al., 2016]{Datta2016-dx}
Datta, A., Banerjee, S., Finley, A.~O., and Gelfand, A.~E. (2016).
\newblock {Hierarchical nearest-neighbor {G}aussian process models for large geostatistical datasets}.
\newblock {\em J. Am. Stat. Assoc.}, 111(514):800--812.

\bibitem[Devroye, 1986]{Devroye1986-gj}
Devroye, L. (1986).
\newblock {\em {Non-Uniform Random Variate Generation}}.
\newblock Springer New York.

\bibitem[Diggle et~al., 1998]{Diggle1998-xw}
Diggle, P.~J., Tawn, J.~A., and Moyeed, R.~A. (1998).
\newblock {Model-based geostatistics}.
\newblock {\em J. R. Statist. Soc. C}, 47(3):299--350.

\bibitem[Douma and Weedon, 2019]{Douma2019-xo}
Douma, J.~C. and Weedon, J.~T. (2019).
\newblock {Analysing continuous proportions in ecology and evolution: A practical introduction to beta and Dirichlet regression}.
\newblock {\em Methods Ecol. Evol.}, 10(9):1412--1430.

\bibitem[Duan et~al., 2018]{Duan2018-zl}
Duan, L., Johndrow, J., and Dunson, D. (2018).
\newblock {Scaling up data augmentation MCMC via calibration}.
\newblock {\em J. Mach. Learn. Res.}, 19(64):64:1--64:34.

\bibitem[Dunn and Smyth, 1996]{Dunn1996-gb}
Dunn, P.~K. and Smyth, G.~K. (1996).
\newblock {Randomized quantile residuals}.
\newblock {\em J. Comput. Graph. Stat.}, 5(3):236.

\bibitem[Durante and Rigon, 2019]{Durante2019-sd}
Durante, D. and Rigon, T. (2019).
\newblock {Conditionally conjugate mean-field variational Bayes for logistic models}.
\newblock {\em Stat. Sci.}, 34(3):472--485.

\bibitem[Dwivedi et~al., 2019]{Dwivedi2019-ve}
Dwivedi, R., Chen, Y., Wainwright, M.~J., and Yu, B. (2019).
\newblock {Log-concave sampling: Metropolis-Hastings algorithms are fast}.
\newblock {\em J. Mach. Learn. Res.}, 20(183):1--42.

\bibitem[Feller, 1948]{Feller1948-ui}
Feller, W. (1948).
\newblock {On the Kolmogorov-Smirnov limit theorems for empirical distributions}.
\newblock {\em Ann. Math. Stat.}, 19(2):177--189.

\bibitem[Ferrari and Cribari-Neto, 2004]{Ferrari2004-ni}
Ferrari, S. and Cribari-Neto, F. (2004).
\newblock {Beta regression for modelling rates and proportions}.
\newblock {\em J. Appl. Stat.}, 31(7):799--815.

\bibitem[Figueroa-Zúñiga et~al., 2013]{Figueroa-Zuniga2013-pd}
Figueroa-Zúñiga, J.~I., Arellano-Valle, R.~B., and Ferrari, S. L.~P. (2013).
\newblock {Mixed beta regression: A Bayesian perspective}.
\newblock {\em Comput. Stat. Data Anal.}, 61:137--147.

\bibitem[Finley et~al., 2022]{Finley2022-pg}
Finley, A.~O., Datta, A., and Banerjee, S. (2022).
\newblock {spNNGP: R package for nearest neighbor Gaussian process models}.
\newblock {\em J. Stat. Softw.}, 103(5):1--40.

\bibitem[Fox et~al., 2020]{Fox2020-xt}
Fox, E.~W., Ver~Hoef, J.~M., and Olsen, A.~R. (2020).
\newblock {Comparing spatial regression to random forests for large environmental data sets}.
\newblock {\em PLoS One}, 15(3):e0229509.

\bibitem[Furrer and Sain, 2010]{Furrer2010-ky}
Furrer, R. and Sain, S.~R. (2010).
\newblock {spam: A sparse matrix R package with emphasis on {MCMC} methods for {G}aussian {M}arkov random fields}.
\newblock {\em J. Stat. Softw.}, 36:1--25.

\bibitem[Gelman and Hill, 2007]{Gelman2007-hl}
Gelman, A. and Hill, J. (2007).
\newblock {\em {Data Analysis Using Regression and Multilevel/Hierarchical Models}}.
\newblock Cambridge University Press.

\bibitem[Gelman et~al., 2014]{Gelman2014-ga}
Gelman, A., Hwang, J., and Vehtari, A. (2014).
\newblock {Understanding predictive information criteria for Bayesian models}.
\newblock {\em Statist. Comput.}, 24(6):997--1016.

\bibitem[Gourieroux et~al., 1984]{Gourieroux1984-ld}
Gourieroux, C., Monfort, A., and Trognon, A. (1984).
\newblock {Pseudo maximum likelihood methods: Theory}.
\newblock {\em Econometrica}, 52(3):681.

\bibitem[Gradshteyn and Ryzhik, 2014]{Gradshteyn2014-ym}
Gradshteyn, I.~S. and Ryzhik, I.~M. (2014).
\newblock {\em {Table of Integrals, Series, and Products (Seventh Edition)}}.
\newblock Academic press.

\bibitem[Guélat and Kéry, 2018]{Guelat2018-us}
Guélat, J. and Kéry, M. (2018).
\newblock {Effects of spatial autocorrelation and imperfect detection on species distribution models}.
\newblock {\em Methods Ecol. Evol.}, 9(6):1614--1625.

\bibitem[Hahn, 2008]{Hahn2008-ch}
Hahn, E.~D. (2008).
\newblock {Mixture densities for project management activity times: A robust approach to PERT}.
\newblock {\em Eur. J. Oper. Res.}, 188(2):450--459.

\bibitem[Hill et~al., 2018]{Hill2018-eq}
Hill, R.~A., Weber, M.~H., Debbout, R.~M., Leibowitz, S.~G., and Olsen, A.~R. (2018).
\newblock {The Lake-Catchment (LakeCat) Dataset: characterizing landscape features for lake basins within the conterminous USA}.
\newblock {\em Freshw. Sci.}, 37:208--221.

\bibitem[Johndrow et~al., 2019]{Johndrow2019-os}
Johndrow, J.~E., Smith, A., Pillai, N., and Dunson, D.~B. (2019).
\newblock {MCMC for imbalanced categorical data}.
\newblock {\em J. Am. Stat. Assoc.}, 114(527):1394--1403.

\bibitem[Jones and Hobert, 2001]{Jones2001-iv}
Jones, G.~L. and Hobert, J.~P. (2001).
\newblock {Honest exploration of intractable probability distributions via Markov chain Monte Carlo}.
\newblock {\em Stat. Sci.}, 16(4):312--334.

\bibitem[Jørgensen, 1986]{Jorgensen1986-ay}
Jørgensen, B. (1986).
\newblock {Some properties of exponential dispersion models}.
\newblock {\em Scand. J. Statist.}, 13(3):187--197.

\bibitem[Jørgensen, 1987]{Jorgensen1987-bq}
Jørgensen, B. (1987).
\newblock {Exponential dispersion models}.
\newblock {\em J. R. Statist. Soc. B}, 49(2):127--145.

\bibitem[Jørgensen, 1992]{Jorgensen1992-ui}
Jørgensen, B. (1992).
\newblock {Exponential dispersion models and extensions: A review}.
\newblock {\em Int. Stat. Rev.}, 60(1):5.

\bibitem[Karr, 1991]{Karr1991-sq}
Karr, J.~R. (1991).
\newblock {Biological integrity: A long-neglected aspect of water resource management}.
\newblock {\em Ecol. Appl.}, 1(1):66--84.

\bibitem[Koenker and Yoon, 2009]{Koenker2009-up}
Koenker, R. and Yoon, J. (2009).
\newblock {Parametric links for binary choice models: A Fisherian–Bayesian colloquy}.
\newblock {\em J. Econom.}, 152(2):120--130.

\bibitem[Korhonen et~al., 2024]{Korhonen2024-ka}
Korhonen, P., Hui, F. K.~C., Niku, J., Taskinen, S., and van~der Veen, B. (2024).
\newblock {A comparison of joint species distribution models for percent cover data}.
\newblock {\em Methods Ecol. Evol.}, 15(12):2359--2372.

\bibitem[Kosmidis and Zeileis, 2025]{Kosmidis2025-sh}
Kosmidis, I. and Zeileis, A. (2025).
\newblock {Extended-support beta regression for [0, 1] responses}.
\newblock {\em J. R. Statist. Soc. C}, (In press).

\bibitem[Kubinec, 2023]{Kubinec2023-it}
Kubinec, R. (2023).
\newblock {Ordered beta regression: A parsimonious, well-fitting model for continuous data with lower and upper bounds}.
\newblock {\em Polit. Anal.}, 31(4):519--536.

\bibitem[Lee and Dunson, 2024]{Lee2024-cq}
Lee, C.~J. and Dunson, D.~B. (2024).
\newblock {Marginally interpretable spatial logistic regression with bridge processes}.
\newblock {\em arXiv preprint arXiv:2412.04744}.

\bibitem[Lindholm et~al., 2021]{Lindholm2021-ir}
Lindholm, M., Alahuhta, J., Heino, J., and Toivonen, H. (2021).
\newblock {Temporal beta diversity of lake plants is determined by concomitant changes in environmental factors across decades}.
\newblock {\em J. Ecol.}, 109(2):819--832.

\bibitem[Loaiza-Ganem and Cunningham, 2019]{Loaiza-Ganem2019-dl}
Loaiza-Ganem, G. and Cunningham, J. (2019).
\newblock {The continuous Bernoulli: fixing a pervasive error in variational autoencoders}.
\newblock {\em Adv. Neural Inf. Process. Syst.}, 32:13287--13297.

\bibitem[McCullagh and Nelder, 1989]{McCullagh1989-oa}
McCullagh, P. and Nelder, J.~A. (1989).
\newblock {\em {Generalized Linear Models}}.
\newblock Chapman \& Hall/CRC, Philadelphia, PA, 2 edition.

\bibitem[Morris, 1982]{Morris1982-qz}
Morris, C.~N. (1982).
\newblock {Natural exponential families with quadratic variance functions}.
\newblock {\em Ann. Stat.}, 10(1):65--80.

\bibitem[Nelder and Wedderburn, 1972]{Nelder1972-hn}
Nelder, J.~A. and Wedderburn, R. W.~M. (1972).
\newblock {Generalized linear models}.
\newblock {\em J. R. Statist. Soc. A}, 135(3):370.

\bibitem[Norton et~al., 2019]{Norton2019-ur}
Norton, E.~C., Dowd, B.~E., and Maciejewski, M.~L. (2019).
\newblock {Marginal effects-quantifying the effect of changes in risk factors in logistic regression models}.
\newblock {\em JAMA}, 321(13):1304--1305.

\bibitem[Olver et~al., 2010]{Olver2010-jj}
Olver, F.~W., Lozier, D.~W., Boisvert, R.~F., and Clark, C.~W. (2010).
\newblock {\em {NIST Handbook of Mathematical Functions}}.
\newblock Cambridge University Press.

\bibitem[Ospina and Ferrari, 2012]{Ospina2012-bh}
Ospina, R. and Ferrari, S. L.~P. (2012).
\newblock {A general class of zero-or-one inflated beta regression models}.
\newblock {\em Comput. Stat. Data Anal.}, 56(6):1609--1623.

\bibitem[Owen, 2007]{Owen2007-lk}
Owen, A. (2007).
\newblock {Infinitely imbalanced logistic regression}.
\newblock {\em J. Mach. Learn. Res.}, 8:761--773.

\bibitem[Papke and Wooldridge, 1996]{Papke1996-gy}
Papke, L.~E. and Wooldridge, J.~M. (1996).
\newblock {Econometric methods for fractional response variables with an application to 401(k) plan participation rates}.
\newblock {\em J. Appl. Econ.}, 11(6):619--632.

\bibitem[Peplonska et~al., 2012]{Peplonska2012-lb}
Peplonska, B., Bukowska, A., Sobala, W., Reszka, E., Gromadzinska, J., Wasowicz, W., Lie, J.~A., Kjuus, H., and Ursin, G. (2012).
\newblock {Rotating night shift work and mammographic density}.
\newblock {\em Cancer Epidemiol. Biomarkers Prev.}, 21(7):1028--1037.

\bibitem[Peña and Jauch, 2025]{Pena2025-ky}
Peña, V. and Jauch, M. (2025).
\newblock {Properties of the generalized inverse Gaussian with applications to Monte Carlo simulation and distribution function evaluation}.
\newblock {\em Stat. Probab. Lett.}, 220:110359.

\bibitem[Pillow and Scott, 2012]{Pillow2012-qq}
Pillow, J.~W. and Scott, J.~G. (2012).
\newblock {Fully Bayesian inference for neural models with negative-binomial spiking}.
\newblock {\em Neural Inf Process Syst}, 25:1907--1915.

\bibitem[Polson et~al., 2013]{Polson2013-gb}
Polson, N.~G., Scott, J.~G., and Windle, J. (2013).
\newblock {Bayesian inference for logistic models using Pólya–Gamma latent variables}.
\newblock {\em J. Am. Statist. Assoc.}, 108(504):1339--1349.

\bibitem[Quijano~Xacur, 2019]{Quijano-Xacur2019-eg}
Quijano~Xacur, O.~A. (2019).
\newblock {The unifed distribution}.
\newblock {\em J. Stat. Distrib. Appl.}, 6(1):1--12.

\bibitem[{R Core Team}, 2024]{R-Core-Team2024-qx}
{R Core Team} (2024).
\newblock {R: A language and environment for statistical computing}.

\bibitem[Ramalho et~al., 2011]{Ramalho2011-ro}
Ramalho, E.~A., Ramalho, J. J.~S., and Murteira, J. M.~R. (2011).
\newblock {Alternative estimating and testing empirical strategies for fractional regression models: Empirical strategies for fractional regression models}.
\newblock {\em J. Econ. Surv.}, 25(1):19--68.

\bibitem[Roberts and Rosenthal, 2001]{Roberts2001-st}
Roberts, G.~O. and Rosenthal, J.~S. (2001).
\newblock {Markov chains and de‐initializing processes}.
\newblock {\em Scand. J. Statist.}, 28(3):489--504.

\bibitem[Rosen et~al., 1984]{Rosen1984-pn}
Rosen, W.~G., Mohs, R.~C., and Davis, K.~L. (1984).
\newblock {A new rating scale for Alzheimer's disease}.
\newblock {\em Am. J. Psychiatry}, 141(11):1356--1364.

\bibitem[Rosenthal, 1995]{Rosenthal1995-uv}
Rosenthal, J.~S. (1995).
\newblock {Minorization conditions and convergence rates for Markov chain Monte Carlo}.
\newblock {\em J. Am. Statist. Assoc.}, 90(430):558.

\bibitem[Smithson and Verkuilen, 2006]{Smithson2006-ia}
Smithson, M. and Verkuilen, J. (2006).
\newblock {A better lemon squeezer? Maximum-likelihood regression with beta-distributed dependent variables}.
\newblock {\em Psychol. Methods}, 11(1):54--71.

\bibitem[Smyth, 1989]{Smyth1989-qy}
Smyth, G.~K. (1989).
\newblock {Generalized linear models with varying dispersion}.
\newblock {\em J. R. Statist. Soc. B}, 51(1):47--60.

\bibitem[Stasinopoulos and Rigby, 2007]{Stasinopoulos2007-th}
Stasinopoulos, D.~M. and Rigby, R.~A. (2007).
\newblock {Generalized additive models for location scale and shape (GAMLSS) inR}.
\newblock {\em J. Stat. Softw.}, 23(7):1–46.

\bibitem[Stoddard et~al., 2008]{Stoddard2008-fb}
Stoddard, J.~L., Herlihy, A.~T., Peck, D.~V., Hughes, R.~M., Whittier, T.~R., and Tarquinio, E. (2008).
\newblock {A process for creating multimetric indices for large-scale aquatic surveys}.
\newblock {\em J. North Am. Benthol. Soc.}, 27(4):878--891.

\bibitem[{U.S. Environmental Protection Agency}, 2022]{US-Environmental-Protection-Agency2022-dp}
{U.S. Environmental Protection Agency} (2022).
\newblock {National lakes assessment 2017: Technical support document. EPA 841‐R‐22‐001}.

\bibitem[Van~Dyk and Park, 2008]{Van-Dyk2008-dr}
Van~Dyk, D.~A. and Park, T. (2008).
\newblock {Partially collapsed Gibbs samplers: Theory and methods}.
\newblock {\em J. Am. Statist. Assoc.}, 103(482):790--796.

\bibitem[Vats et~al., 2019]{Vats2019-wo}
Vats, D., Flegal, J.~M., and Jones, G.~L. (2019).
\newblock {Multivariate output analysis for {M}arkov chain {M}onte {C}arlo}.
\newblock {\em Biometrika}, 106(2):321--337.

\bibitem[Vehtari et~al., 2024]{Vehtari2024-qa}
Vehtari, A., Simpson, D., Gelman, A., Yao, Y., and Gabry, J. (2024).
\newblock {Pareto smoothed importance sampling}.
\newblock {\em J. Mach. Learn. Res.}, 25(72):1--58.

\bibitem[Wang and Blei, 2018]{Wang2018-pd}
Wang, C. and Blei, D.~M. (2018).
\newblock {A general method for robust Bayesian modeling}.
\newblock {\em Bayesian Anal.}, 13(4):1163--1191.

\bibitem[Warton and Hui, 2011]{Warton2011-jf}
Warton, D.~I. and Hui, F. K.~C. (2011).
\newblock {The arcsine is asinine: the analysis of proportions in ecology}.
\newblock {\em Ecology}, 92(1):3--10.

\bibitem[Williams, 2012]{Williams2012-jz}
Williams, R. (2012).
\newblock {Using the margins command to estimate and interpret adjusted predictions and marginal effects}.
\newblock {\em The Stata Journal}, 12(2):308--331.

\bibitem[Zens et~al., 2024]{Zens2024-nb}
Zens, G., Frühwirth-Schnatter, S., and Wagner, H. (2024).
\newblock {Ultimate Pólya-Gamma samplers–efficient MCMC for possibly imbalanced binary and categorical data}.
\newblock {\em J. Am. Statist. Assoc}, 119(548):2548--2559.

\end{thebibliography}
\bibliographystyle{apalike}

\clearpage

\begin{center}
\LARGE{Appendices}
\end{center}

Section~\ref{appendix:kgsampler} contains sampling details from the Kolomogorov-Gamma $(1,c)$ distribution and the pseudocode of the algorithm. 
Section~\ref{appendix:proofs} contains proofs of statements presented in the main article. 
Section~\ref{appendix:cobinmicobindetail} contains a detailed derivation of cobin as an exponential dispersion model, density and cumulative distribution functions of cobin and micobin, and a discussion on the link and variance functions. 
Section~\ref{appendix:algdetails} contains derivations of Gibbs samplers as well as an EM algorithm. 
Section~\ref{appendix:simul} contains details of the simulation settings and additional results. 
Finally, Section~\ref{appendix:mmi} gives further information about benthic macroinvertebrate multimetric index data, analysis settings, and additional results.

\setcounter{section}{0}
\setcounter{equation}{0}
\setcounter{table}{0}
\setcounter{proposition}{0}
\setcounter{lemma}{0}
\setcounter{algorithm}{0}
\renewcommand{\theequation}{S.\arabic{equation}}
\renewcommand{\thesection}{S.\arabic{section}} 
\renewcommand{\theproposition}{S.\arabic{proposition}} 
\renewcommand{\thetheorem}{S.\arabic{theorem}} 
\renewcommand{\thecorollary}{S.\arabic{corollary}} 
\renewcommand{\thelemma}{S.\arabic{lemma}} 
\renewcommand{\thetable}{S.\arabic{table}} 
\renewcommand{\thefigure}{S.\arabic{figure}}
\renewcommand{\thealgorithm}{S.\arabic{algorithm}}

\section{Kolmogorov-Gamma sampler}
\label{appendix:kgsampler}
\subsection{Kolmogorov-Gamma density}
We first describe the density of a $\mathrm{KG}(1,0)$ random variable, which can be easily derived from the two different density representations of the Kolmogorov distribution. 
\begin{proposition}
\label{prop:kg10density}
The $\mathrm{KG}(1,0)$ density has two different alternating series representations, 
\begin{align*}
\pkg(x; 1,0) = \sum_{n=0}^\infty(-1)^n a_n^L(x)  &= \sum_{n=0}^\infty (-1)^n a^R_n(x), \quad x>0,\\
a_n^L(x) = \begin{cases}
\frac{2}{\pi^{1/2}(2x)^{3/2}} \exp(-\frac{n^2}{8x}) & (n\,\,\mathrm{odd})\\
\frac{(n+1)^2}{\pi^{1/2}(2x)^{5/2}}\exp(-\frac{(n+1)^2}{8x}) &(n\,\,\mathrm{even}) 
\end{cases},&\quad a_n^R(x)=  4\pi^2 (n+1)^2\exp\{-2\pi^2(n+1)^2 x\}.
\end{align*}
\end{proposition}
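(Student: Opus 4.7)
The plan is to recognize $\kappa \sim \mathrm{KG}(1,0)$ as a rescaling of the squared Kolmogorov random variable $K^2$ and then transfer the two classical density representations of $K^2$ to $\pkg(x;1,0)$.

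First, I would confirm the identification $\kappa \stackrel{\rmd}{=} K^2/\pi^2$, which is essentially the content of the remark immediately before the proposition. One clean way is to compare Laplace transforms: by \eqref{eq:KGb0laplace} with $b=1$, $E[e^{-t\kappa}] = (t/2)^{1/2}/\sinh((t/2)^{1/2})$. Starting from the standard CDF $F_K(x) = 1 - 2\sum_{k\ge 1}(-1)^{k-1}e^{-2k^2 x^2}$ together with the Mittag-Leffler expansion $\pi z/\sinh(\pi z) = 2\sum_{k\ge 1}(-1)^{k-1} k^2/(z^2 + k^2)$, a direct computation gives $E[e^{-sK^2}] = \pi(s/2)^{1/2}/\sinh(\pi(s/2)^{1/2})$, and the substitution $s = t/\pi^2$ matches the KG Laplace transform.

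Next, I would derive the right-hand (geometric-type) series. Differentiating the series form of $F_K$ gives $f_K(x) = 8x\sum_{k\ge 1}(-1)^{k-1}k^2 e^{-2k^2 x^2}$, hence $f_{K^2}(y) = 4\sum_{k\ge 1}(-1)^{k-1}k^2 e^{-2k^2 y}$. Rescaling via $\kappa = K^2/\pi^2$ yields $\pkg(x;1,0) = \pi^2 f_{K^2}(\pi^2 x) = 4\pi^2\sum_{k\ge 1}(-1)^{k-1}k^2 e^{-2\pi^2 k^2 x}$, and setting $k = n+1$ reproduces $\sum_{n\ge 0}(-1)^n a_n^R(x)$ term by term.

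For the left-hand (Gaussian/heat-kernel-type) series, I would instead start from the theta-style representation $F_K(x) = (\sqrt{2\pi}/x)\sum_{k\ge 1}e^{-(2k-1)^2\pi^2/(8x^2)}$, differentiate term by term, and then pass to $f_{K^2}$ via $y = x^2$. After rescaling by $\pi^2$, the $k$-th summand consolidates into a factor $[(2k-1)^2 - 4x]\,e^{-(2k-1)^2/(8x)}$ multiplied by the common prefactor $(\pi^{1/2}(2x)^{5/2})^{-1}$. Splitting the bracket as $+(2k-1)^2$ minus $4x$ then assigns the first piece to the even index $n = 2k-2$ (sign $(-1)^{2k-2} = +1$, prefactor $(n+1)^2/(\pi^{1/2}(2x)^{5/2})$) and the second piece to the odd index $n = 2k-1$ (sign $(-1)^{2k-1} = -1$, prefactor $2/(\pi^{1/2}(2x)^{3/2})$, using $4x/(2x)^{5/2} = 2/(2x)^{3/2}$), which matches exactly $\sum_{n\ge 0}(-1)^n a_n^L(x)$.

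The conceptual inputs are the two classical CDFs for $K$ (equivalent via the Jacobi theta transformation) together with the identification $\kappa \stackrel{\rmd}{=} K^2/\pi^2$. The main obstacle is the bookkeeping in the left series: a single differentiated summand of Kolmogorov's theta representation must be split to populate two consecutive indices in the $a_n^L$ enumeration with alternating signs, and the two distinct prefactors $(\pi^{1/2}(2x)^{3/2})^{-1}$ and $(\pi^{1/2}(2x)^{5/2})^{-1}$ must be tracked through the square change-of-variable and the $\pi^2$ rescaling without a sign or constant error.
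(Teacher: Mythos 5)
Your proposal is correct and follows essentially the same route as the paper's proof: identify $\mathrm{KG}(1,0)\stackrel{\rmd}{=}\calK^{2}/\pi^{2}$, take the two classical representations of the Kolmogorov law, change variables, and re-index/split the theta-type series so that each original summand populates one even and one odd index of $a_n^L$. The only difference is cosmetic — you re-derive the two density series by differentiating the classical CDF expansions and verify the $\calK^2/\pi^2$ identification via Laplace transforms, whereas the paper cites these representations (Feller; Devroye; Andrews et al.) directly — and your bookkeeping of the prefactors and signs checks out.
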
 
Having two different density representations is crucial for developing a sampling scheme with the alternating series method. From Theorem~\ref{thm:kgexptilting}, the density of $\mathrm{KG}(1,c)$ is $\pkg(x; 1,c) = \left\{\sinh(c/2)/(c/2)\right\}\exp(-c^2x/2)\pkg(x;1,0)$. Therefore, $\pkg(x; 1,c)$ can also be represented as an alternating series $\pkg(x;1,c) = \sum_{n=0}^{\infty} (-1)^n a_n(x;c,t)$ where we define $a_n(x;c,t)$ as
\begin{equation}
\label{eq:anseq}
a_n(x;c,t) = \begin{cases}\left\{\sinh(c/2)/(c/2)\right\}\exp(-c^2x/2)a_n^L(x),  & 0<x< t, \\
\left\{\sinh(c/2)/(c/2)\right\}\exp(-c^2x/2)a_n^R(x),  & t\le x,
\end{cases}
\end{equation}
for some suitable choice of cutoff point $t>0$, which will be discussed next.

\subsection{Sampling KG(1,c) using alternating series method}

The alternating series method \citep{Devroye1986-gj} is an effective sampling algorithm when the target density $p(x)$ is computationally expensive to evaluate but can be approximated from above and below by a sequence of envelope functions $\{S_m(x)\}_{m=0}^\infty$, satisfying $S_0(x) > S_2(x) >\dots > p(x) >\dots > S_3(x)>S_1(x)$. With such an envelope function in hand, exact sampling of $X\sim p$ can be achieved with the following steps: (1) draw $X\sim q$ from the proposal distribution $q$, (2) generate $U\sim \mathrm{Unif}(0,Mq(X))$ where $\|p/q\|_\infty \le M$, (3) repeat until $U\le S_m(X)$ for odd $m$ or $U> S_m(X)$ for even $m$, (4) accept $X$ if $m$ is odd, repeat from (1) again if $m$ is even.  

Similar to \citet{Devroye1986-gj} and \citet{Polson2013-gb}, our choice of envelope function for sampling $\mathrm{KG}(1,c)$ is $S_m(x) = \sum_{n=0}^m(-1)^n a_n(x;c,t)$. 
For this choice to be a valid envelope function, $a_n(x;c,t)$ must be monotonically decreasing in $n$ for any fixed $x>0$ and parameter $c$. The following Lemma~\ref{lemma:trange} shows the valid range of $t$, an intersection between the range of $x$ where $a_n^L(x)$ and $a_n^R(x)$ are both monotonically decreasing in $n$. 
\begin{lemma} 
\label{lemma:trange}
For any fixed $x>0$ and $c\in\bbR$, the sequence $a_n(x;c,t)$ defined in \eqref{eq:anseq} is monotonically decreasing in $n$ when $\log (2)/(3\pi^2) < t < 0.25$, where $\log (2)/(3\pi^2)\approx 0.0234$. 
\end{lemma}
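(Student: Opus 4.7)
The plan is to exploit the fact that the prefactor $\{\sinh(c/2)/(c/2)\}\exp(-c^2x/2)$ in \eqref{eq:anseq} is strictly positive and independent of $n$, so monotonicity of $a_n(x;c,t)$ in $n$ reduces to proving monotonicity of $a_n^L(x)$ on $0<x<t$ and of $a_n^R(x)$ on $x\ge t$ separately. These two regimes give the two ends of the stated range of $t$, and I will handle them in turn.

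For $a_n^R$, a direct computation gives
\[
\frac{a_n^R(x)}{a_{n+1}^R(x)} = \frac{(n+1)^2}{(n+2)^2}\exp\{2\pi^2(2n+3)x\},
\]
so the required inequality $a_n^R(x)>a_{n+1}^R(x)$ is equivalent to $x > r_n := \log\{(n+2)/(n+1)\}/\{\pi^2(2n+3)\}$. I will then show that $r_n$ is decreasing in $n$ (numerator decreasing, denominator increasing), so the worst case is $n=0$, which yields the lower bound $r_0 = \log(2)/(3\pi^2)$. Thus $x\ge t>\log(2)/(3\pi^2)$ forces $a_n^R(x)>a_{n+1}^R(x)$ for every $n\ge 0$.

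For $a_n^L$, the piecewise definition forces a parity split. When $n$ is even and $n+1$ odd, the exponential factors $\exp\{-(n+1)^2/(8x)\}$ cancel and the ratio simplifies to $a_n^L(x)/a_{n+1}^L(x) = (n+1)^2/(4x)$, so monotonicity holds iff $x<(n+1)^2/4$; the binding case is $n=0$, giving $x<1/4$. When $n$ is odd and $n+1$ even, the ratio is
\[
\frac{a_n^L(x)}{a_{n+1}^L(x)} = \frac{4x}{(n+2)^2}\exp\!\left\{\frac{n+1}{2x}\right\},
\]
and I need this to exceed $1$ for all odd $n\ge 1$ and all $0<x<1/4$. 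Taking logs, the required inequality is $(n+1)/(2x) + \log(4x) > 2\log(n+2)$. I will bound the left-hand side from below by evaluating it at $x=1/4$, where it equals $2(n+1)$; this uses that $g(x):=(n+1)/(2x)+\log(4x)$ has derivative $(-{(n+1)}/(2x^2))+1/x$, which is negative on $(0,(n+1)/2)$, so $g$ is decreasing on $(0,1/4]$ for $n\ge 1$. The inequality $2(n+1)>2\log(n+2)$ holds for every $n\ge 1$, closing this case.

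The main obstacle is the second parity case of $a_n^L$, because the sign of the exponent changes with parity and the ratio is not monotone in $x$; I expect the cleanest route is the monotonicity of $g(x)$ I sketched above, which allows me to reduce an inequality that must hold on a whole interval to its value at the single point $x=1/4$. All other pieces are one-line ratio computations once the prefactor has been stripped off.
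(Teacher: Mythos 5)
Your proposal is correct and follows essentially the same route as the paper's proof: strip the common positive prefactor, show $a_n^R$ is decreasing in $n$ for $x>\log(2)/(3\pi^2)$ via the ratio/worst-case-index argument, and handle $a_n^L$ by the parity split, using monotonicity of $(n+1)/(2x)+\log(4x)$ on $(0,1/4]$ to reduce the odd-$n$ case to its value at $x=1/4$. The only differences are cosmetic (index shift in the $a_n^R$ comparison, and you make the final evaluation $2(n+1)>2\log(n+2)$ explicit where the paper merely asserts it).
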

In what follows, we assume that $t$ satisfies Lemma~\ref{lemma:trange}; the optimal choice of $t$ will be discussed soon. Since $S_0(x) = a_0(x;c,t) > \pkg(x; 1,c)$ for all $x$, a natural choice of proposal $q$ that ensures $\|\pkg(\cdot;1,c)/q(\cdot)\|_\infty\le M$ for some $M$ is $q(x;c,t) = M^{-1} a_0(x;c,t)$. Plugging in $a_0^L(x)$ and $a_0^R(x)$ in \eqref{eq:anseq}, the proposal distribution with density $q(x; c,t)$ is
\begin{equation}
\label{eq:kg1cproposal}
    X \sim \begin{cases}
        \mathrm{GIG}(-1.5, c^2, 1/4) 1(0<X< t)\quad &\text{with prob. }A^L(c,t)/\{A^L(c,t) + A^R(c,t)\}\\
       \mathrm{Exp}(c^2/2 + 2\pi^2)1(t\le X) &\text{with prob. }A^R(c,t)/\{A^L(c,t) + A^R(c,t)\} \\
    \end{cases} 
\end{equation}
where $\mathrm{GIG}(p,a,b)1(0<X\le t)$ is a generalized inverse Gaussian (GIG) distribution with density proportional to $x^{p-1}\exp\{-(ax+b/x)/2\}$ truncated to $(0,t)$, $\mathrm{Exp}(a)1(t\le X)$ is an exponential distribution with rate $a$ truncated to $[t, \infty)$, $A^L(c,t)=\int_0^t a_0(x;c,t) \rmd x$, and $A^R(c,t)=\int_t^\infty a_0(x;c,t) \rmd x$. Compared to the P\'olya-Gamma case \citep{Polson2013-gb}, which involves inverse Gaussian and exponential distributions in its proposal, our proposal is slightly more complicated, involving GIG and exponential distributions. \citet{Pena2025-ky} obtain results on the exact evaluation of the cdf and the sampling of GIG random variables that have a half-integer parameter $p=-1.5$. We use these results in calculating $A^L(c,t)$ and sampling truncated GIG random variables.  

Our $\mathrm{KG}(1,c)$ sampling algorithm is highly efficient. We show this by first investigating how often a proposal $X\sim q$ from \eqref{eq:kg1cproposal} is accepted, based on the expected number of outer loop iterations $M = \int_0^\infty a_0(x;c,t) \rmd x = A^L(c,t)+A^R(c,t)$, and then by illustrating that with very high probability, only few series terms $S_m(x)$ need to be computed in the inner loop in order to decide whether to accept or reject $X$. 

\begin{proposition}
\label{prop:optimal-t}
The following statements hold:
\begin{enumerate}
    \item The best cutoff point $t^*$ minimizing the expected number of outer loop iterations $M = A^L(c,t)+A^R(c,t)$ is independent of $c$; this value is $t^*\approx 0.050239$.
    \item Using the best cutoff point $t^*$, $M$ is bounded above by 
    $1.1456$ and the expected number of inner loop iterations is bounded above by $1.1275$ for any given $c$.
\end{enumerate}
\end{proposition}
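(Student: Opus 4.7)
The plan is to prove part 1 by a direct calculus argument and to reduce part 2 to two one-dimensional numerical optimizations in $c$, for which explicit closed-form expressions are available.

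For part 1, observe that $a_0(x;c,t)$ depends on $t$ only through the piecewise switch in \eqref{eq:anseq}; the expressions $a_0^L(x)$ and $a_0^R(x)$ are themselves free of $t$. Applying Leibniz's rule to $M(c,t)=A^L(c,t)+A^R(c,t)$ gives
\[
\frac{\partial M}{\partial t}=\frac{\sinh(c/2)}{c/2}\,e^{-c^2 t/2}\,\bigl[a_0^L(t)-a_0^R(t)\bigr],
\]
and the prefactor is strictly positive for every $c\in\bbR$ (read as $1$ at $c=0$). Any interior critical point of $M(c,\cdot)$ therefore solves the $c$-free equation $a_0^L(t^*)=a_0^R(t^*)$. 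To see uniqueness, examine $\log\{a_0^L(t)/a_0^R(t)\}=2\pi^2 t-1/(8t)-\tfrac{5}{2}\log(2t)-\log(4\pi^{5/2})$; its derivative $2\pi^2+1/(8t^2)-5/(2t)$ has discriminant $400-64\pi^2<0$, is therefore everywhere positive, and the ratio is strictly monotone on $(0,\infty)$. Combined with the boundary behavior ($a_0^L(t)\to 0$ while $a_0^R(t)\to 4\pi^2$ as $t\to 0^+$, and $a_0^L$ decaying polynomially while $a_0^R$ decays exponentially as $t\to\infty$), $t^*$ is the unique global minimizer of $M(c,\cdot)$. Solving the transcendental equation by bisection yields $t^*\approx 0.050239$, which lies inside the admissible interval of Lemma~\ref{lemma:trange}.

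For part 2, substitute $t=t^*$ and view both bounds as one-dimensional optimizations in $c$. The right tail has the closed form
\[
A^R(c,t^*)=\frac{\sinh(c/2)}{c/2}\cdot\frac{4\pi^2\,e^{-(c^2/2+2\pi^2)t^*}}{c^2/2+2\pi^2},
\]
while $A^L(c,t^*)$ is a truncated integral of an unnormalized $\mathrm{GIG}(-3/2,c^2,1/4)$ density that evaluates in closed form using the half-integer GIG identities of \citet{Pena2025-ky} (essentially $\mathrm{erfc}$ combined with exponentials). At $c=0$ this reduces to $\bar M(0)=(4/\sqrt{\pi})\,\Gamma(3/2,1/(8t^*))+2e^{-2\pi^2 t^*}$, where $\bar M(c)\equiv M(c,t^*)$; as $|c|\to\infty$ the tilted proposal concentrates near $0$ where $a_0^L$ matches $\pkg(\cdot;1,0)$, so $\bar M(c)\to 1$. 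Maximizing this explicit smooth function over $c$ confirms $\sup_c\bar M(c)\le 1.1456$, with the maximum attained near $c=0$. For the inner-loop bound I would use the identity
\[
E[N]=\frac{1}{M(c,t^*)}\sum_{m=0}^\infty\int_0^\infty a_m(x;c,t^*)\,\rmd x,
\]
which follows because conditional on $X$ the variable $U\sim\mathrm{Unif}(0,a_0(X))$ has continuation probability $a_m(X)/a_0(X)$ past step $m$. Each integral reduces to the same closed-form expressions as $M$, and the series converges super-exponentially in $m$ because $a_m^L$ carries $\exp(-(m+1)^2/(8x))$ and $a_m^R$ carries $\exp(-2\pi^2(m+1)^2 x)$; a one-dimensional maximization over $c$ then delivers the bound $1.1275$.

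\textbf{Main obstacle.} The only non-routine step is certifying the suprema over $c\in\bbR$ rigorously rather than merely numerically. I would handle this by combining analytic evaluation at $c=0$ and $|c|\to\infty$ with an eventual-monotonicity statement for large $|c|$ (exploiting the factored $\{\sinh(c/2)/(c/2)\}$ times Laplace-transform structure of both target functions) and then covering the remaining bounded window in $c$ by a fine grid together with a Lipschitz-type remainder bound derived from derivative estimates on the closed-form expressions.
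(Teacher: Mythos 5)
Your proposal is correct and follows essentially the same route as the paper: differentiating $M(c,t)$ in $t$, cancelling the positive $c$-dependent prefactor to obtain the $c$-free critical equation $a_0^L(t)=a_0^R(t)$, establishing uniqueness via monotonicity of the log-ratio (equivalent to the paper's observation that the derivative of the rearranged equation is sign-definite), and then bounding $M$ and the inner-loop expectation by closed-form evaluation plus a numerical supremum over $c$, with the limits at $c=0$ and $|c|\to\infty$ checked analytically. The one point where you go beyond the paper is in flagging that the supremum over $c$ is only certified numerically and sketching a grid-plus-Lipschitz rigorization; the paper itself simply reports the numerical maximum at $c\approx 10.134$.
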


\subsection{Pseudocode for KG(1,c) sampler}
\label{appendix:pseudocode}
\begin{algorithm}[t]
\caption{Sampling from $\text{KG}(1, c)$}
\begin{algorithmic}[1]
\State \textbf{Input:} Parameter $c$, cutoff value $t\in (0.0234,0.25)$ (optimal $t^*=0.050239$)
\State $\tilde{A}^L \leftarrow (|c|+2) \exp(-|c|/2) \texttt{pGIG}(t \mid p=-3/2, a = c^2, b=1/4)$ \Comment{proportional to $A^L(c,t)$} 
\State $\tilde{A}^R \leftarrow 4\pi^2 \exp\{- ( 2\pi^2 + c^2 / 2)t\}/(2\pi^2 + c^2 / 2)$ \Comment{proportional to $A^R(c,t)$} 
\Repeat
    \State Generate $U, V \sim U(0, 1)$
    \If{$U < \tilde{A}^R / (\tilde{A}^L + \tilde{A}^R)$} \Comment{$\sinh(c/2)/(c/2)$ are canceled out thus not calculated}
        \State $X \leftarrow t + E / (2\pi^2 + c^2 / 2)$ where $E \sim \text{Exp}(1)$ \Comment{Truncated exponential}
    \Else
            \Repeat \Comment{Truncated GIG}
                \State $X \sim \mathrm{GIG}(p=-3/2, a=c^2, b=1/4)$ \Comment{If $c=0$, $X \sim \mathrm{InvGamma}(3/2,1/8)$}
            \Until{$X < t$}
    \EndIf
    \State $S \leftarrow a_0(X)$, $Y \leftarrow VS$, $m \leftarrow 0$
    \Repeat
        \State $m \leftarrow m + 1$
        \If{$m$ is odd}
            \State $S \leftarrow S - a_m(X)$; if $Y < S$, then return $m$
        \Else
            \State $S \leftarrow S + a_m(X)$; if $Y > S$, then break
        \EndIf
    \Until{FALSE}
\Until{FALSE}
\end{algorithmic}
\label{alg:pseudocode}
\end{algorithm}

The Algorithm~\ref{alg:pseudocode} describes the pseudocode for the KG$(1,c)$ sampler using the alternating series method. For the part involving the sampling and evaluation of the c.d.f of the GIG distribution with half-integer parameter $p=-3/2$, we employ the result of \citet{Pena2025-ky}. When $c=0$, sampling from $\mathrm{GIG}(-3/2, c^2,1/4)$ in line 10 is replaced with $\mathrm{InvGamma}(3/2,1/8)$, and c.d.f. evaluation in line 2 is replaced with c.d.f. of $\mathrm{InvGamma}(3/2,1/8)$. For the choice of optimal $t^* = 0.050239$, we have $\texttt{pGIG}(t^* \mid p=-3/2, a = 0, b=1/4)= 0.1735472$.

 To sample from the truncated GIG distribution supported on $(0,t)$, we use a simple rejection method by drawing a GIG variate until it falls in $(0,t)$. Letting $t$ be fixed at the optimal value $t^* = 0.050239$, the expected number of draws depending on the choice of $c$ is  $1/\texttt{pGIG}(t^* \mid p=-3/2, a = c^2, b=1/4)$. By comparing the integrand, it can be verified that $\texttt{pGIG}(t^* \mid p=-3/2, a, b=1/4)$ is an increasing function of $a$ and achieves a minimum value of $0.1735472$ at $a=0$. 
 Thus, the expected number of GIG draws to obtain a single truncated GIG sample is not larger than $1/0.1735472 \approx 5.7622$. In practice, we found that this simple rejection method is much more efficient and numerically stable than transforming a uniform random variable through the inverse of the c.d.f. of a truncated GIG distribution.

\section{Proofs}
\label{appendix:proofs}

\subsection{Proof of Theorem~\ref{thm:kgintidentity}}
\begin{proof} It suffices to show the result when $a=0$, since $e^{a \eta}$ is always positive and cancels out. Using the fact that the Laplace transformation of $\epsilon_k\iidsim \mathrm{Gamma}(b,1)$ is $E\{\exp(-\epsilon_k t)\} = (1+t)^{-b}$, the Laplace transformation of $\kappa\sim \mathrm{KG}(b,0)$ is 
    \begin{equation}
    \label{eq:KGb0laplace}
        E\{\exp(-\kappa t)\} = \prod_{k=1}^\infty E\left\{\exp\left(-\frac{\epsilon_k t}{2\pi^2k^2}\right)\right\} = \prod_{k=1}^\infty \left(1+\frac{t}{2\pi^2k^2}\right)^{-b} = \left[\frac{(t/2)^{1/2}}{\sinh\{(t/2)^{1/2}\}}\right]^b, 
    \end{equation}
    where the last equation follows from the Weierstrass factorization theorem \citep[][\S 4.36.1]{Olver2010-jj}. Plugging in $t = \eta^2/2$, we have $E(e^{-\eta^2 \kappa /2}) = \{(\eta/2)/\sinh(\eta/2)\}^{b} = e^{b\eta/2}/\{(e^{\eta}-1)/\eta\}^b$, which completes the proof.
\end{proof}

\subsection{Proof of Theorem~\ref{thm:kgexptilting}}
\begin{proof} The case when $c=0$ is trivial, thus assume $c\neq 0$. From the proof of Theorem 1, we know that $E(e^{-c^2\kappa/2}) = \left\{\sinh(c/2)/(c/2)\right\}^{-b}$ for $\kappa\sim \mathrm{KG}(b,0)$, which corresponds to the normalizing constant of an exponential tilted density. Thus, the exponential tilted distribution has density 
\begin{equation}
\label{eq:exptilting}
    \left\{\sinh(c/2)/(c/2)\right\}^{b} \exp(-c^2\kappa/2)\pkg(\kappa; b, 0).
\end{equation}
To show that the Laplace transformation of \eqref{eq:exptilting} coincides with the Laplace transformation of a $\mathrm{KG}(b,c)$ random variable defined as the infinite convolution of gamma distributions, let 
\begin{align*}
\int_0^\infty e^{-t\kappa} \frac{\{\sinh(c/2)\}^b}{(c/2)^b} e^{-c^2\kappa/2} \pkg(\kappa; b, 0) \rmd\kappa &= \frac{\{\sinh(c/2)\}^b}{(c/2)^b}\int e^{-(-0.5 c^2+t)\kappa } \pkg(\kappa; b, 0) d\kappa  \\
&= \frac{[\sinh\{(c^2/4)^{1/2}\}]^b}{\{(c^2/4)^{1/2}\}^b}\frac{[\{(0.5c^2+t)/2\}^{1/2}]^b}{(\sinh{[\{(0.5c^2+t)/2\}^{1/2}}])^b}\\
&= \prod_{k=1}^\infty \left\{\frac{\frac{2k^2\pi^2+c^2/2}{2k^2\pi^2}}{\frac{2k^2\pi^2+(0.5c^2+t)}{2k^2\pi^2}}\right\}^b\\
&= \prod_{k=1}^\infty (1+d_k^{-1}t)^{-b}
\end{align*}
where $d_k = 2k^2\pi^2 + c^2/2$. This corresponds to the Laplace transformation of the infinite sum of independent $\mathrm{Gamma}(b,1)$ distributions scaled by $d_k^{-1}$ for $k=1,\dots$, which completes the proof. 
\end{proof}

\subsection{Proof of Theorem~\ref{thm:uniformergodic}}

First, define $\sinhc(x) \coloneqq \sinh(x)/x$ for $x\neq 0$ and $\sinhc(0) \coloneqq 1$. 
We first present two technical lemmas, where Lemma~\ref{lemma:eigen} is the same as Lemma 3.1 of \citet{Choi2013-jv}. 

\begin{lemma}
\label{lemma:eigen}
If $A$ is a symmetric nonnegative definite matrix, all eigenvalues of $(I+A)^{-1}$ are in $(0,1]$ and thus $\bfz^\T(I+A)^{-1}\bfz \le \bfz^\T \bfz$ for any vector $\bfz$. Also, $I-(I+A)^{-1}$ is nonnegative definite.
\end{lemma}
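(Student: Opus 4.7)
The plan is to use the spectral theorem for symmetric matrices throughout, reducing every claim to an eigenvalue calculation. Since $A$ is symmetric and nonnegative definite, I would write $A = Q\Lambda Q^\T$ with $Q$ orthogonal and $\Lambda = \diag(\mu_1,\dots,\mu_n)$ where each $\mu_i \ge 0$. The first step is to note that $I + A = Q(I+\Lambda)Q^\T$, whose eigenvalues are $1+\mu_i \ge 1$, so $I+A$ is invertible and $(I+A)^{-1} = Q(I+\Lambda)^{-1}Q^\T$ has eigenvalues $1/(1+\mu_i) \in (0,1]$. This is the first assertion.

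Next, for the quadratic form inequality I would change variables by setting $\bfw = Q^\T \bfz$. Then
\[
\bfz^\T(I+A)^{-1}\bfz \;=\; \bfw^\T(I+\Lambda)^{-1}\bfw \;=\; \sum_{i=1}^n \frac{w_i^2}{1+\mu_i} \;\le\; \sum_{i=1}^n w_i^2 \;=\; \bfz^\T \bfz,
\]
since each factor $1/(1+\mu_i)$ lies in $(0,1]$ when $\mu_i \ge 0$. This gives the inequality without any further machinery.

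For the final claim, I would observe that $I - (I+A)^{-1} = Q\bigl[I - (I+\Lambda)^{-1}\bigr]Q^\T$, whose eigenvalues are $1 - 1/(1+\mu_i) = \mu_i/(1+\mu_i) \ge 0$. A symmetric matrix with nonnegative eigenvalues is nonnegative definite, which closes the argument. Alternatively one can write $I - (I+A)^{-1} = (I+A)^{-1}A$ and verify $\bfz^\T (I+A)^{-1} A \bfz = \bfw^\T (I+\Lambda)^{-1}\Lambda \bfw = \sum \mu_i w_i^2/(1+\mu_i) \ge 0$, which gives the same conclusion.

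There is no real obstacle here: the whole result is a one-shot application of simultaneous diagonalization, and the only care needed is to keep track of the fact that $A \succeq 0$ forces $\mu_i \ge 0$ (not merely real) so that $1+\mu_i \ge 1 > 0$ and the inverse exists. I do not expect to need any further tools beyond the spectral theorem for real symmetric matrices.
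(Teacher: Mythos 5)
Your proof is correct and complete; the spectral decomposition $A = Q\Lambda Q^\T$ with $\mu_i \ge 0$ immediately yields all three claims exactly as you argue. The paper itself does not prove this lemma but simply cites it as Lemma 3.1 of Choi and Hobert (2013), and your argument is the standard one that any such reference would use, so there is nothing to compare beyond noting that you have supplied the routine details the paper omits.
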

\begin{lemma}
\label{lemma:sinhcineq}
For $a,b \ge 0$, $\sinhc(a+b) \le 2 \sinhc(a) \cosh(b)$.
\end{lemma}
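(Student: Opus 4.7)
My plan is to unfold the hyperbolic addition formula on the left-hand side and reduce the inequality to a pair of elementary one-variable bounds. The boundary cases $a=0$ or $b=0$ follow by continuity (at $a=0$ the inequality becomes $\sinhc(b) \le 2\cosh(b)$, which is immediate from $\sinhc(b)\le \cosh(b)$), so I assume $a,b>0$ throughout. Using $\sinh(a+b)=\sinh(a)\cosh(b)+\cosh(a)\sinh(b)$ and multiplying the target $\sinhc(a+b) \le 2\sinhc(a)\cosh(b)$ through by $a(a+b)>0$, I obtain $a\sinh(a)\cosh(b)+a\cosh(a)\sinh(b) \le 2(a+b)\sinh(a)\cosh(b)$. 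Canceling the common term $a\sinh(a)\cosh(b)$ and dividing by $\sinh(a)\cosh(b)>0$, the target is equivalent to $a\coth(a)\tanh(b) \le a+2b$.

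Next I will establish the one-variable bound $a\coth(a) \le a+1$ for $a\ge 0$. Multiplying through by $e^{2a}-1>0$ and simplifying, this reduces to the classical exponential inequality $e^{2a} \ge 1+2a$, which holds for all real $a$ by Taylor's theorem. Combining with $\tanh(b)\le 1$, it is therefore enough to verify $(a+1)\tanh(b) \le a+2b$, which rearranges to $\tanh(b)-2b \le a\bigl(1-\tanh(b)\bigr)$. The left-hand side is nonpositive since $\tanh(b)\le b\le 2b$ for $b\ge 0$, while the right-hand side is nonnegative since $\tanh(b)\le 1$, so this inequality is immediate.

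The argument is entirely algebraic once the right reduction is found, and I do not foresee any substantive obstacle. The only part that takes a moment of thought is the final chain $a\coth(a)\tanh(b) \le (a+1)\tanh(b) \le a+2b$: the first step bounds the $\coth$ factor by the cleaner linear expression $a+1$, and the second step exploits the slack $1-\tanh(b)\ge 0$ to absorb the deficit $\tanh(b)-2b$. Neither step wastes enough to invalidate the reduction, so the proof closes cleanly.
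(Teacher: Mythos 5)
Your proof is correct. The reduction is the same as the paper's: both arguments expand $\sinh(a+b)$ via the addition formula and clear denominators to arrive at the equivalent inequality $a\coth(a)\tanh(b) \le a+2b$ (the paper writes it as $\coth(a)\tanh(b)\le 1+2b/a$). Where you diverge is in how this is finished. The paper fixes $a$ and bounds $f(x)=\coth(a)\tanh(x)-1-2x/a$ above by the minimum of two linear functions $g_1(x)=x-1$ and $g_2(x)=\coth(a)-1-2x/a$, then checks that their intersection value $a\coth(a)/(a+2)-1$ is nonpositive. You instead isolate the clean one-variable bound $a\coth(a)\le a+1$, which after clearing $e^{2a}-1$ is exactly $e^{2a}\ge 1+2a$, and then close with $(a+1)\tanh(b)\le a+2b$, which follows from $\tanh(b)\le 2b$ and $\tanh(b)\le 1$. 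Your finish is arguably more transparent: it replaces the paper's two-majorant construction (whose verification of $g_1\ge f$ and $g_2\ge f$ is left to the reader) with two named elementary inequalities, and in fact your intermediate bound $a\coth(a)\le a+1$ is slightly sharper than what the paper's $g_1$-step implicitly requires. Both arguments are valid and comparably short.
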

\begin{proof}
It trivially holds when $a=0$ or $b=0$. When $a,b>0$, by expanding $\sinh(a+b) = \sinh(a)\cosh(b)+\cosh(a)\sinh(b)$ and multiplying both side by $a+b$, it is equivalent to showing the inequality $\coth(a)\tanh(b) \le 1+2b/a$ for $a,b>0$. In other words, it suffices to show that for any given $a>0$, $f(x) = \coth(a)\tanh(x)-1-2x/a \le 0$ for all $x>0$. Consider two linear functions 
$g_1(x) = x-1$ and $g_2(x) = \coth(a) - 1-2x/a$, with $g_1$ increasing and $g_2$ decreasing. It can be easily checked that $g_1(x) \ge f(x)$ and $g_2(x) \ge f(x)$ for any $x>0$, thus $\min(g_1(x),g_2(x))\ge f(x)$ for any $x>0$. The proof is completed by confirming $g_1(x)$ and $g_2(x)$ intersects at $x^*=a\coth(a)/(a+2)$ with $g_1(x^*) = g_2(x^*) = a\coth(a)/(a+2) - 1 \le 0$ for any $a>0$. 
\end{proof}

Now we provide the formal statement of the theorem. We denote $P^t$ as a $t$-step transition kernel, $\|\nu_1 - \nu_2\|_{\mathrm{TV}}$ as a total variation distance between probability measures $\nu_1$ and $\nu_2$. Let $\bm\Theta$ be a set of parameters of the model and $\Pi(\cdot)$ be a posterior of $\bm\Theta$. We say the Markov chain $\{\bm\Theta^{(m)}\}_{m=0}^\infty$ is uniformly ergodic if there exist a constant $M > 0$ and $\rho\in[0,1)$, both independent of initial state $\bm\Theta^{(0)}$, such that $\| P^t(\bm\Theta^{(0)},\cdot) - \Pi(\cdot) \|_{\mathrm{TV}}\le M\rho^t$ for all $t\ge 1$.
 
When parameters are partitioned into two blocks $\bm\Theta = \bm\Theta_1 \cup \bm\Theta_2$, $\bm\Theta_1 \cap \bm\Theta_2 = \emptyset$ and a Gibbs sampler iteratively updates between $p(\bm\Theta_1\mid \bm\Theta_2)$ and $p(\bm\Theta_2\mid \bm\Theta_1)$, showing the uniform ergodicity of either $\{\bm\Theta_1^{(m)}\}_{m=0}^\infty$ or $\{\bm\Theta_2^{(m)}\}_{m=0}^\infty$ is sufficient for the uniform ergodicity of $\{\bm\Theta^{(m)}\}_{m=0}^\infty$; see \citet{Roberts2001-st} and also \citet{Bhattacharya2021-dl}. 
The conditioning on data $\bmy$ is always assumed and suppressed from the notation for simplicity.  
Recall that Algorithm~\ref{alg:cobin} for cobin regression consists of two blocks $\bm\Theta_1 = (\lambda, \bm\kappa)$ and $\bm\Theta_2 = \bm\beta$. Also, Algorithm~\ref{alg:micobin} for micobin regression consists of two blocks $\bm\Theta_1 = (\bm\lambda, \bm\kappa)$ and $\bm\Theta_2 = (\bm\beta, \psi)$. To end with, we establish uniform ergodicity by showing that the marginal chain $\{\bm\Theta_2^{(m)}\}_{m=0}^\infty$ is uniformly ergodic for both Algorithms, under a mean zero normal prior for coefficient $\bm\beta\sim N_p(\bm{0}, \Sigma_\beta)$ (and beta prior $\psi\sim \mathrm{Beta}(a_\psi, b_\psi)$ for micobin) and with some large upper bound $L$ of $\lambda$. 

The proof strategy is based on the establishment of a uniform minorization condition, also known as a Doeblin condition \citep{Rosenthal1995-uv,Jones2001-iv}. Our approach is structurally similar to \citet{Choi2013-jv}, but now involves Kolmogorov-Gamma variables instead of Polya-Gamma, and additionally involves $\bm\lambda = (\lambda_1,\dots,\lambda_n)$ as well as $\psi$ for micobin. 
Letting $k(\bm\Theta_2\mid \bm\Theta_2^*)$ be a Markov transition density, it is sufficient to show that there is a constant $\delta^\star > 0$ and a probability density function $q(\bm\Theta_2)$, which does not depend on $\bm\Theta_2^*$, such that $k(\bm\Theta_2 \mid \bm\Theta_2^*) \ge \delta^\star q(\bm\Theta_2)$ for any $\bm\Theta_2, \bm\Theta_2^*$.

We first focus on Algorithm~\ref{alg:micobin} for micobin regression, where $\bm\Theta_2 = (\bm\beta, \psi)$, and then specialize to cobin afterwards. Let $\bm\Theta_2^* = (\bm\beta^*, \psi^*)$ be a parameter of the previous step. The Markov transition density of $\bm\Theta_2$ (marginalizing out $\bm\Theta_1$) is
\begin{equation}
    k(\bm\Theta_2 \mid \bm\Theta^*_2) = \sum_{\bm\lambda\in \{1,\dots,L\}^n} p(\psi \mid \bm\lambda) k(\bm\beta \mid \bm\beta^*; \bm\lambda)  p(\bm\lambda\mid \bm\beta^*, \psi^*),
\end{equation}
where the intermediate transition density of $\bm\beta$, conditional on $\bm\lambda$, is
\begin{equation}
\label{eq:mtk}
k(\bm\beta \mid \bm\beta^*; \bm\lambda) = 
    \int_{(0,\infty)^n} p(\bm\beta \mid \bm\kappa, \bm\lambda) p(\bm\kappa \mid \bm\beta^*, \bm\lambda)  \rmd \bm\kappa
\end{equation}
We aim to establish a lower bound of $k(\bm\beta \mid \bm\beta^*; \bm\lambda)$, expression \eqref{eq:mtk}, that is independent of $\bm\beta^*$. This is achieved through two steps.

\begin{proposition}
\label{prop:s1}
Denoting $\bfs =  \bfs(\bm\lambda) = \Sigma_\beta^{1/2}X^\T\tilde{\bmy}$ where $\tilde{\bmy}\in\bbR^n$ is a vector with $i$th element $(y_i\lambda_i - \lambda_i/2)$,
\[
p(\bm\beta \mid \bm\kappa, \bm\lambda) \ge \frac{1}{(2\pi)^{p/2}|\Sigma_\beta|^{1/2}} \exp\left(-\frac{1}{2}\bm\beta^\T\Sigma_\beta^{-1}\bm\beta - \frac{1}{2}\bfs^\T \bfs + \bfs^\T\Sigma_{\beta}^{-1/2}\bm\beta\right)\prod_{i=1}^n\exp\left\{-\frac{(\bmx_i^\T\bm\beta)^2}{2}\kappa_i\right\}
\]
\end{proposition}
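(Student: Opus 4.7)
The plan is to write out the Gaussian conditional density $p(\bm\beta \mid \bm\kappa, \bm\lambda)$ explicitly from the update step in Algorithm~\ref{alg:micobin}, expand the quadratic form in the exponent, and then invoke Lemma~\ref{lemma:eigen} to bound the two $\bm\kappa$-dependent pieces (coming through the posterior covariance $V_\beta$) by $\bm\kappa$-free quantities involving only $\bm\lambda$ and $\Sigma_\beta$.

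Concretely, let $D_\kappa = \diag(\kappa_1,\dots,\kappa_n)$ and let $\tilde{\bmy}$ denote the vector with $i$th entry $y_i\lambda_i - \lambda_i/2$, so that $V_\beta^{-1} = X^\T D_\kappa X + \Sigma_\beta^{-1}$ and $\bm m_\beta = V_\beta X^\T \tilde{\bmy}$. Writing the density of $N_p(\bm m_\beta, V_\beta)$ and expanding $(\bm\beta-\bm m_\beta)^\T V_\beta^{-1}(\bm\beta-\bm m_\beta)$, I would split the $\bm\beta$-quadratic via $V_\beta^{-1} = \Sigma_\beta^{-1} + X^\T D_\kappa X$ into $\bm\beta^\T \Sigma_\beta^{-1}\bm\beta + \sum_i (\bmx_i^\T\bm\beta)^2 \kappa_i$; rewrite the linear term as $\bm\beta^\T X^\T \tilde{\bmy} = \bfs^\T \Sigma_\beta^{-1/2}\bm\beta$ using the definition of $\bfs$; and retain the constant $-\tfrac{1}{2}\tilde{\bmy}^\T X V_\beta X^\T \tilde{\bmy}$ for the comparison step.

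Next, I would set $A = \Sigma_\beta^{1/2} X^\T D_\kappa X \Sigma_\beta^{1/2} \succeq 0$ and use the identity $V_\beta = \Sigma_\beta^{1/2}(I+A)^{-1}\Sigma_\beta^{1/2}$. Lemma~\ref{lemma:eigen} then delivers the two inequalities needed: $|V_\beta| = |\Sigma_\beta| \cdot |(I+A)^{-1}| \le |\Sigma_\beta|$, which bounds the normalizing constant $1/|V_\beta|^{1/2} \ge 1/|\Sigma_\beta|^{1/2}$; and $\tilde{\bmy}^\T X V_\beta X^\T \tilde{\bmy} = \bfs^\T (I+A)^{-1}\bfs \le \bfs^\T \bfs$, which lower-bounds $-\tfrac{1}{2}\tilde{\bmy}^\T X V_\beta X^\T \tilde{\bmy}$ by $-\tfrac{1}{2}\bfs^\T \bfs$. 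Substituting both inequalities back yields the stated bound.

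There is no serious obstacle; the argument is essentially bookkeeping plus a single application of Lemma~\ref{lemma:eigen}. The only point to keep in mind is that $\tilde{\bmy}$, and hence $\bfs$, depends on $\bm\lambda$, which is allowed since the right-hand side of the proposition may depend on $\bm\lambda$; the crucial property is that the bound is free of $\bm\kappa$, so that integrating against $p(\bm\kappa \mid \bm\beta^*, \bm\lambda)$ in the subsequent minorization step of the proof of Theorem~\ref{thm:uniformergodic} produces a quantity independent of $\bm\beta^*$.
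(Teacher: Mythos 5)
Your proposal is correct and follows essentially the same route as the paper: expand the Gaussian density with $V_\beta^{-1}=\Sigma_\beta^{-1}+X^\T \diag(\bm\kappa) X$, identify the linear term with $\bfs^\T\Sigma_\beta^{-1/2}\bm\beta$, and apply Lemma~\ref{lemma:eigen} to $A=\Sigma_\beta^{1/2}X^\T\diag(\bm\kappa)X\Sigma_\beta^{1/2}$ to get both $|V_\beta|\le|\Sigma_\beta|$ and $\tilde{\bmy}^\T X V_\beta X^\T\tilde{\bmy}=\bfs^\T(I+A)^{-1}\bfs\le\bfs^\T\bfs$, exactly as in the paper's argument. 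No gaps.
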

\begin{proof} 
   Letting $K = \diag(\bm\kappa)$ and $V_\beta = (X^\T K X + \Sigma_\beta^{-1})^{-1}$, we have $|V_\beta|\le |\Sigma_\beta|$ by Lemma~\ref{lemma:eigen} since 
   $
   |V_\beta| = |\Sigma_\beta| |(\tilde{X}^\T K \tilde{X} + I_p)^{-1}| 
   $
   where $\tilde{X} = X \Sigma_\beta^{1/2}$. Next, for $\bfm = V_\beta X^\T\tilde{\bmy}$ we have $\bfm^\T V_\beta^{-1}\bfm \le \bfs^\T\bfs$, which follows from  $\bfm^\T V_\beta^{-1}\bfm = (\tilde{X}\tilde{\bmy})^\T(\tilde{X}^\T K \tilde{X} + I_p)^{-1}(\tilde{X}\tilde{\bmy})\le \bfs^\T\bfs$ from Lemma~\ref{lemma:eigen}. Using two inequalities and $\bfm^\T V_\beta^{-1} = \bfs^\T\Sigma_\beta^{-1/2}$, we have 
   \begin{align*}
   p(\bm\beta \mid \bm\kappa, \bm\lambda) &= (2\pi)^{-p/2}|V_\beta|^{-1/2}\exp(-(\bm\beta - \bfm)^\T V_\beta^{-1}(\bm\beta-\bfm)/2)\\
   &\ge (2\pi)^{-p/2}|\Sigma_\beta|^{-1/2}\exp\left(-\frac{1}{2}\bm\beta^\T\Sigma_\beta^{-1}\bm\beta - \frac{1}{2}\bfs^\T \bfs + \bfs^\T\Sigma_{\beta}^{-1/2}\bm\beta\right)\prod_{i=1}^n\exp\left\{-\frac{(\bmx_i^\T\bm\beta)^2}{2}\kappa_i\right\}
   \end{align*}
\end{proof}

\begin{proposition} We have a lower bound of $k(\bm\beta \mid \bm\beta^*; \bm\lambda)$, expression \eqref{eq:mtk}:
\[
\int_{(0,\infty)^n} p(\bm\beta \mid \bm\kappa, \bm\lambda) p(\bm\kappa \mid \bm\beta^*, \bm\lambda) \rmd \bm\kappa \ge \delta(\bm\lambda) N_p(\bm\beta; \bm\mu^\star(\bm\lambda), \Sigma^\star)
\]
where $\Sigma^\star = (\frac{1}{2} X^\T X + \Sigma_\beta^{-1})^{-1}$, $\bm\mu^\star = \bm\mu^\star(\bm\lambda) = (\frac{1}{2} X^\T X + \Sigma_\beta^{-1})^{-1}\Sigma_\beta^{-1/2}\bfs(\bm\lambda)$, and  $\delta(\bm\lambda) = 2^{-\sum_{i}\lambda_i}e^{-\sum_{i=1}^n\lambda_i^2/4}|\Sigma^\star|^{1/2}|\Sigma_\beta|^{-1/2} \exp\left(-\frac{1}{2}\bfs^\T(I_p-(\Sigma_\beta^{1/2}X^\T X \Sigma_\beta^{1/2}/2+I_p)^{-1})\bfs\right)$.
\label{prop:s2}
\end{proposition}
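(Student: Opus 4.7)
The plan is to attack the integral in three stages: first plug in the Proposition~\ref{prop:s1} bound, then evaluate the $\bm\kappa$-integration in closed form via the Kolmogorov--Gamma Laplace transform, and finally complete the square in $\bm\beta$ to extract the Gaussian kernel $N_p(\bm\beta;\bm\mu^\star,\Sigma^\star)$.

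For the first two stages, after substituting Proposition~\ref{prop:s1} into the integrand and exploiting the product form $p(\bm\kappa\mid\bm\beta^*,\bm\lambda)=\prod_{i=1}^n\pkg(\kappa_i;\lambda_i,\bmx_i^\T\bm\beta^*)$, the problem reduces to evaluating $J_i=\int_0^\infty e^{-(\bmx_i^\T\bm\beta)^2\kappa_i/2}\,\pkg(\kappa_i;\lambda_i,\bmx_i^\T\bm\beta^*)\,\rmd\kappa_i$ for each $i$. Combining the exponential-tilting form of Theorem~\ref{thm:kgexptilting} with the Laplace transform~\eqref{eq:KGb0laplace} derived in the proof of Theorem~\ref{thm:kgintidentity} yields the closed form $J_i=\{\sinhc(\bmx_i^\T\bm\beta^*/2)/\sinhc(c_i/2)\}^{\lambda_i}$, where $c_i=\sqrt{(\bmx_i^\T\bm\beta)^2+(\bmx_i^\T\bm\beta^*)^2}$.

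The crux is bounding $J_i$ from below by a $\bm\beta^*$-free quantity whose $\bm\beta$-dependence is precisely $\exp(-(\bmx_i^\T\bm\beta)^2/4)$, so that aggregating over $i$ produces the effective precision $X^\T X/2$ appearing in $(\Sigma^\star)^{-1}$. Using the triangle inequality $c_i\le|\bmx_i^\T\bm\beta|+|\bmx_i^\T\bm\beta^*|$ together with the evenness and monotonicity of $\sinhc$ on $[0,\infty)$ and Lemma~\ref{lemma:sinhcineq} with $a=|\bmx_i^\T\bm\beta^*|/2,\, b=|\bmx_i^\T\bm\beta|/2$, I would derive $\sinhc(c_i/2)\le 2\sinhc(|\bmx_i^\T\bm\beta^*|/2)\cosh(|\bmx_i^\T\bm\beta|/2)$, hence $J_i\ge 2^{-\lambda_i}\cosh(|\bmx_i^\T\bm\beta|/2)^{-\lambda_i}$. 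The elementary bound $\cosh(y)\le e^y$ for $y\ge 0$ followed by the quadratic completion $\lambda_i|\bmx_i^\T\bm\beta|/2-(\bmx_i^\T\bm\beta)^2/4\le\lambda_i^2/4$ (maximized at $|\bmx_i^\T\bm\beta|=\lambda_i$) then gives $\cosh(|\bmx_i^\T\bm\beta|/2)^{-\lambda_i}\ge\exp(-\lambda_i^2/4-(\bmx_i^\T\bm\beta)^2/4)$. Multiplying across $i$ delivers the key product bound $\prod_{i=1}^n J_i\ge 2^{-\sum_i\lambda_i}\exp(-\sum_i\lambda_i^2/4)\exp(-\bm\beta^\T X^\T X\bm\beta/4)$, which fully isolates the $\bm\beta^*$-dependence into constants destined to be absorbed into $\delta(\bm\lambda)$.

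For the final stage, combining this product bound with the Proposition~\ref{prop:s1} bound assembles a quadratic-plus-linear exponent $-\frac{1}{2}\bm\beta^\T(\Sigma^\star)^{-1}\bm\beta+\bfs^\T\Sigma_\beta^{-1/2}\bm\beta$, from which completing the square extracts $N_p(\bm\beta;\bm\mu^\star,\Sigma^\star)$ with $\bm\mu^\star=\Sigma^\star\Sigma_\beta^{-1/2}\bfs$. Writing $A=\Sigma_\beta^{1/2}X^\T X\Sigma_\beta^{1/2}/2$, the identity $\Sigma_\beta^{-1/2}\Sigma^\star\Sigma_\beta^{-1/2}=(I_p+A)^{-1}$ simplifies $\bm\mu^\star{}^\T(\Sigma^\star)^{-1}\bm\mu^\star=\bfs^\T(I_p+A)^{-1}\bfs$, and the residual $\bm\beta$-free factors collect into $(|\Sigma^\star|/|\Sigma_\beta|)^{1/2}\exp(-\frac{1}{2}\bfs^\T(I_p-(I_p+A)^{-1})\bfs)$; this, multiplied by the $2^{-\sum_i\lambda_i}\exp(-\sum_i\lambda_i^2/4)$ prefactor, reproduces the stated $\delta(\bm\lambda)$ exactly. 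The hardest step is identifying the correct inequality chain: naive alternatives such as $\cosh(y)\le e^{y^2/2}$ would leave $\lambda_i$ entangled with the quadratic coefficient on $\bm\beta$, failing to collapse into a single $X^\T X/2$ and thus failing to match $\Sigma^\star$. The argument only closes because the linear bound $\cosh(y)\le e^y$, coupled with the quadratic completion, trades a benign $\lambda_i$-dependent factor $\exp(\lambda_i^2/4)$---absorbed harmlessly into $\delta(\bm\lambda)$---for a $\lambda_i$-free quadratic in $\bm\beta$ of the exact form the proposition demands.
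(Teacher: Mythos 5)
Your proposal is correct and follows essentially the same route as the paper's proof: Proposition~\ref{prop:s1}, the KG Laplace transform to integrate out $\bm\kappa$, the triangle-inequality/monotonicity step, Lemma~\ref{lemma:sinhcineq}, the bound $\cosh(y)^{-\lambda}\ge e^{-\lambda y}\ge\exp(-y^2-\lambda^2/4)$, and a final completion of the square. The only difference is cosmetic: you bundle the tilting normalizer $\sinhc(\bmx_i^\T\bm\beta^*/2)^{\lambda_i}$ into $J_i$ before cancelling it, whereas the paper carries it separately.
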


\begin{proof}
First, we have $p(\bm\kappa\mid \bm\beta^*, \bm\lambda) = \prod_{i=1}^n \sinhc(\bmx_i^\T\bm\beta^*/2)^{\lambda_i} \exp(-(\bmx_i^\T\bm\beta^*)^2 \kappa_i/2) \pkg(\kappa_i;\lambda_i, 0)$. Then, with Proposition~\ref{prop:s1}, the integrand has a lower bound
 \begin{align}
        p(\bm\beta \mid \bm\kappa, \bm\lambda) p(\bm\kappa \mid \bm\beta^*, \bm\lambda) &\ge (2\pi)^{-p/2}|\Sigma_\beta|^{-1/2}\exp\left(-\frac{1}{2}\bm\beta^\T\Sigma_\beta^{-1}\bm\beta - \frac{1}{2}\bfs^\T \bfs + \bfs^\T\Sigma_{\beta}^{-1/2}\bm\beta\right)\nonumber \\
        &\times \prod_{i=1}^n\sinhc\left(\frac{\bmx_i^\T\bm\beta^*}{2}\right)^{\lambda_i} \exp\left(-\frac{\left\{(\bmx_i^\T\bm\beta^*)^2+(\bmx_i^\T\bm\beta)^2\right\}\kappa_i}{2}\right) \pkg(\kappa_i;\lambda_i, 0)\nonumber
    \end{align}
Analyzing the terms involving $\kappa_i$, after integration, 
\begin{align}
    \int_0^\infty \exp\left(-\frac{\left\{(\bmx_i^\T\bm\beta^*)^2+(\bmx_i^\T\bm\beta)^2\right\}\kappa_i}{2}\right) \pkg(\kappa_i;\lambda_i, 0) \rmd \kappa_i 
    &= \left\{\sinhc\left(\frac{\sqrt{|\bmx_i^\T\bm\beta^*|^2+|\bmx_i^\T\bm\beta|^2}}{2} \right)\right\}^{-\lambda_i}\nonumber \\
    &\ge 
    \left\{\sinhc\left(\frac{|\bmx_i^\T\bm\beta^*|}{2} + \frac{|\bmx_i^\T\bm\beta|}{2}\right)\right\}^{-\lambda_i}\nonumber\\
    &\ge  \left\{ 2 \sinhc\left(\frac{|\bmx_i^\T \bm\beta^*|}{2}\right)\cosh\left(\frac{|\bmx_i^\T\bm\beta|}{2}\right)\right\}^{-\lambda_i}\nonumber
\end{align}
where the first inequality is due to $\sqrt{a+b}\le \sqrt{a} +\sqrt{b}$ for $a,b\ge 0$, combining with $\sinhc(x)^{-\lambda_i}$ is a decreasing function for $x\ge 0$, and the second inequality is by Lemma~\ref{lemma:sinhcineq}. Thus, we have a lower bound of the integral that is independent of $\bm\beta^*$,
     \begin{align}
     &\int_{(0,\infty)^n}\prod_{i=1}^n\sinhc\left(\frac{\bmx_i^\T\bm\beta^*}{2}\right)^{\lambda_i} \exp\left(-\frac{\left\{(\bmx_i^\T\bm\beta^*)^2+(\bmx_i^\T\bm\beta)^2\right\}\kappa_i}{2}\right) \pkg(\kappa_i;\lambda_i, 0) \rmd \bm\kappa \nonumber\\
        \ge& 2^{-\sum_{i}\lambda_i}\left[\prod_{i=1}^n\cosh\left(\frac{|\bmx_i^\T\bm\beta|}{2}\right)^{-\lambda_i}\right]\nonumber\\
        \ge & 2^{-\sum_{i}\lambda_i}\prod_{i=1}^n \exp\left\{-\frac{1}{2} \frac{(\bmx_i^\T\bm\beta)^2 + \lambda_i^2}{2} \right\} = 2^{-\sum_{i}\lambda_i}e^{-\sum_{i=1}^n\lambda_i^2/4}\exp\left\{-\frac{1}{2}\left(\frac{\bm\beta^\T X^\T X \bm\beta}{2}\right)\right\}\nonumber
    \end{align}
where the last inequality follows from $\cosh(|x|)^{-l} \ge e^{-l|x|}\ge \exp(-x^2-l^2/4)$ for any $x\in\bbR$ and $l>0$. Combining together with the remaining parts, denoting $\Sigma^\star = (\frac{1}{2} X^\T X + \Sigma_\beta^{-1})^{-1}$, 
\begin{align*}
    \int_{(0,\infty)^n} p(\bm\beta \mid \bm\kappa, \bm\lambda) p(\bm\kappa \mid \bm\beta^*, \bm\lambda) &\ge (2\pi)^{-p/2}|\Sigma_\beta|^{-1/2}\exp\left(-\frac{1}{2}\bm\beta^\T\Sigma_\beta^{-1}\bm\beta - \frac{1}{2}\bfs^\T \bfs + \bfs^\T\Sigma_{\beta}^{-1/2}\bm\beta\right)\\
        &\quad \times 2^{-\sum_{i}\lambda_i}e^{-\sum_{i=1}^n\lambda_i^2/4}\exp\left\{-\frac{1}{2}\left(\frac{\bm\beta^\T X^\T X\bm\beta}{2}\right)\right\}\\
        &= 2^{-\sum_{i}\lambda_i}e^{-\sum_{i=1}^n\lambda_i^2/4}|\Sigma^\star|^{1/2}|\Sigma_\beta|^{-1/2}\\
        &\quad \times \exp\left(-\frac{1}{2}\bfs^\T(I_p-(\Sigma_\beta^{1/2} X^\T X\Sigma_\beta^{1/2}/2+I_p)^{-1})\bfs\right)\\
        &\quad \times (2\pi)^{-p/2}|\Sigma^\star|^{-1/2}\exp\left(-\frac{1}{2} (\bm\beta - \bm\mu^\star)^\T(\Sigma^\star)^{-1}(\bm\beta - \bm\mu^\star)\right)\\
       &=\delta(\bm\lambda)N_p(\bm\beta; \bm\mu^\star(\bm\lambda), \Sigma^\star)
\end{align*}
with $\delta(\bm\lambda) = 2^{-\sum_{i}\lambda_i}e^{-\sum_{i=1}^n\lambda_i^2/4}|\Sigma^\star|^{1/2}|\Sigma_\beta|^{-1/2} \exp\left(-\frac{1}{2}\bfs^\T(I_p-(\Sigma_\beta^{1/2} X^\T X\Sigma_\beta^{1/2}/2+I_p)^{-1})\bfs\right)$ and $\bm\mu^\star = \bm\mu^\star(\bm\lambda) = (\frac{1}{2} X^\T X + \Sigma_\beta^{-1})^{-1}\Sigma_\beta^{-1/2}\bfs$.
\end{proof}

Finally, to establish uniform ergodicity, we define $q(\bm\Theta_2) \coloneqq q(\bm\beta)q(\psi)$ in a product form, where each component is
\[
q(\bm\beta) =\frac{1}{Z_\beta}\min_{\bm\lambda \in \{1,\dots,L\}^n}N_p(\bm\beta; \bm\mu^\star(\bm\lambda), \Sigma^\star), \quad \bm\beta\in \bbR^p
\]
and 
\[
q(\psi) = \frac{1}{Z_\psi}\min_{\bm\lambda \in \{1,\dots,L\}^n}\mathrm{Beta}\Big(\psi; a_\psi + 2n, b_\psi - n + \sum_{i=1}^n\lambda_i\Big), \quad \psi \in(0,1)
\]
where $Z_\beta = \int_{\bbR^p} \min_{\bm\lambda \in \{1,\dots,L\}^n}N_p(\bm\beta; \bm\mu^\star(\bm\lambda), \Sigma^\star) \rmd\bm\beta < \infty$ and $Z_\psi = \int_0^1 \min_{\bm\lambda \in \{1,\dots,L\}^n}\mathrm{Beta}(\psi; a_\psi + 2n, b_\psi - n + \sum_{i=1}^n\lambda_i) \rmd \psi < \infty$ are normalizing constants. 
Then, for any $\bm\lambda$, since $N_p(\bm\beta; \bm\mu^\star(\bm\lambda), \Sigma^\star) \ge Z_\beta q(\bm\beta)$ for all $\bm\beta\in\bbR^p$ and $p(\psi \mid \bm\lambda) \ge Z_\psi q(\psi)$ for all $\psi\in(0,1)$ by definition, 
we have 
\[
k(\bm\Theta_2 \mid \bm\Theta_2^*) \ge E_{\bm\lambda \sim p(\bm\lambda\mid \bm\beta^*)}\{\delta(\bm\lambda)\} \times  Z_\beta q(\bm\beta)Z_\psi q(\psi)\ge \min_{\bm\lambda}\{\delta(\bm\lambda)\}Z_\beta q(\bm\beta)Z_\psi q(\psi) = \delta^\star q(\bm\Theta_2)
\]
which completes the proof of uniform ergodicity of Algorithm~\ref{alg:micobin}, since $\delta^\star = \min_{\bm\lambda}\{\delta(\bm\lambda)\}Z_\beta Z_\psi > 0$. 

To see $\delta^\star > 0$ is a constant, defining $\bfs^\star = \Sigma_\beta^{1/2} X^\T\tilde{\bmy}^\star$ where $\tilde{\bmy}^\star\in\bbR^n$ is a vector with $i$th element $L(y_i - 1/2)$, which does not depend on $\bm\lambda$, we have 
\[
\min_{\bm\lambda}\{\delta(\bm\lambda)\} = 2^{-nL}e^{-nL^2/4}|\Sigma^\star|^{1/2}|\Sigma_\beta|^{-1/2} \exp\left(-\frac{1}{2}(\bfs^\star)^\T(I_p-(\Sigma_\beta^{1/2}X^\T X \Sigma_\beta^{1/2}/2+I_p)^{-1})\bfs^\star\right)
\]
since $(I_p-(\Sigma_\beta^{1/2}X^\T X \Sigma_\beta^{1/2}/2+I_p)^{-1})$ is nonnegative definite by Lemma~\ref{lemma:eigen}.

The uniform ergodicity of Algorithm~\ref{alg:cobin} for cobin regression is based on the simpler Markov transition density with $\bm\Theta_2 = \bm\beta$,
\begin{align*}
    k(\bm\beta \mid \bm\beta^*) &= \sum_{\lambda=1}^L \left\{\int_{(0,\infty)^n} p(\bm\beta \mid \bm\kappa, \lambda) p(\bm\kappa \mid \bm\beta^*, \lambda)  \rmd \bm\kappa \right\}  p(\lambda\mid \bm\beta^*)\\
    &\ge \sum_{\lambda=1}^L \delta(\lambda)N_p(\bm\beta; \bm\mu^\star(\lambda), \Sigma^\star)p(\lambda\mid \bm\beta^*)\\
    &\ge \min_{\lambda}\{\delta(\lambda)\} Z_\beta q(\bm\beta) = \delta^\star q(\bm\beta)
\end{align*}
where previous vector $\bm\lambda$ inputs are now corresponding to $\lambda \bm{1}_n$, and previous minimum over $\{1,\dots,L\}^n$ is now corresponding to minimum over $\{1,\dots,L\}$. This completes the uniform ergodicity of the cobin regression blocked Gibbs sampler.

\subsection{Proof of Proposition~\ref{prop:kg10density}}
We recall from \cite{Feller1948-ui} and \cite{Devroye1986-gj} \S 5.6 that the Kolmogorov distribution $\calK$ admits two different density representations:
\begin{equation}
    p_{\textsc{k}}(x)=8\sum_{n=0}^\infty (-1)^n (n+1)^2 x \exp\{-2(n+1)^2x^2\}
\end{equation}
and
\begin{equation}
    p_{\textsc{k}}(x)=\frac{ (2 \pi)^{1/2}}{x} \sum_{n=0}^\infty \left\{ \frac{(2n+1)^2\pi^2}{4x^3} - \frac{1}{x} \right\} \exp\left\{-\frac{(2n+1)^2\pi^2}{8x^2}\right\}
\end{equation}
It is well known that the squared Kolmogorov random variable can be represented as an infinite convolution of exponential random variables, i.e. $\calK^2 \stackrel{d}{=} 0.5\sum_{k=1}^\infty \epsilon_k/k^2$, $\epsilon_k\iidsim \mathrm{Exp}(1)$ \citep[][\S 4]{Andrews1974-ms}. By definition of Kolmogorov-Gamma, we have $\mathrm{KG}(1,0)\stackrel{d}{=}\mathcal{K}^2/\pi^2$. Thus, applying change of variables with both density representations, 
\begin{equation}
    \pkg(x;1,0)=\sum_{n=0}^{\infty}(-1)^n4\pi^2(n+1)^2\exp\left\{-2\pi^2(n+1)^2x\right\}
\end{equation}
 and
\begin{equation}
    \pkg(x;1,0)=\frac{1}{(2\pi)^{1/2}}\sum_{n=0}^{\infty}\left\{\frac{(2n+1)^2}{4x^{5/2}}-\frac{1}{x^{3/2}}\right\}\exp\left\{-\frac{(2n+1)^2}{8x}\right\}.
\end{equation}
It is easy to see how the first sum terms simplify into the $a_n^R(x)$ terms given in the statement of Proposition \ref{prop:kg10density}. Obtaining the second representation requires re-indexing the sum so that the even parts correspond to the term with $4x^{5/2}$ in the denominator and the odd parts correspond to the term with $x^{3/2}$ in the denominator. After that, the form for the $a^L_n(x)$ terms given in Proposition \ref{prop:kg10density} emerges.

\subsection{Proof of Lemma~\ref{lemma:trange}}
It suffices to show that both $a_n^L(x)$ and $a_n^R(x)$ are decreasing in $n$ in the indicated interval. We first consider $a_n^R(x)$, which we note to be decreasing in $n$ if $(n+1)^2\exp(-2\pi^2(n+1)^2x)< n^2\exp(-2\pi^2n^2x)$. But this is equivalent to inequality $(\log(n+1)-\log(n))/((2n+1)\pi^2)<x$. Since the numerator of the left term decreases in $n$ and the denominator increases in $n$, the inequality is satisfied for all $n$ if it holds for $n=1$. But that reduces to the inequality $\log (2)/(3\pi^2)<x$, and we conclude that $a_n^R(x)$ is decreasing in $n$ within $(\log(2)/3\pi^2,\infty)$.

We now show that $a_n^L(x)$ decreases in $n$ for $x<1/4$. If $n$ is even, $a_n^L(x)-a_{n+1}^L(x)$ can be expressed as a positive number times $(n+1)^2-4x$, which is positive if $x<1/4$. On the other hand, if $n$ is odd, $\log a_n^L(x)-\log a_{n+1}^L(x)>0$ if $(n+1)/(2x) + \log x-2\log(n+2)+2\log2>0$. We note that the derivative of this expression with respect to $x$ is $1/x-(n+1)/(2x^2)$, which is negative for $x\in(0,(n+1)/2)$, and so the difference is decreasing on the interval $(0,(n+1)/2)$ that includes $(0,1/4)$. Finally, we observe that the inequality $(n+1)/(2x) + \log x-2\log(n+2)+2\log2>0$ holds for $x=1/4$, and therefore $a_n^L(x)$ is decreasing in $n$ for all $x\in(0,1/4)$. 

Therefore, so long as $t\in(\log(2)/(3\pi^2),1/4)$, the sequence $a_n(x;c,t)$ is monotonically decreasing in $n$ regardless of $c$.

\subsection{Proof of Proposition~\ref{prop:optimal-t}}
The proof of this proposition has three parts. First, we show that the optimal cutoff $t^*$ is independent of $c$. Next, we provide an upper bound of the expected number of outer loop iterations for any $c$, i.e. the uniform lower bound on the acceptance probability of a proposal. Finally, we provide the uniform upper bound of the expected number of inner loop iterations. 

\textbf{Part 1. } We begin by recalling the definition of $A^L(c,t) = \int_0^t a_0(x;c,t) dx$ and $A^R(c,t) = \int_t^\infty a_0(x;c,t) dx$, where the left and right envelopes are, from Proposition~\ref{prop:kg10density} and \eqref{eq:anseq},
\[
a_0(x;c,t) = \begin{cases} 
\frac{\sinh(c/2)}{c/2}\exp(-c^2x/2)\frac{\exp(-\frac{1}{8x})}{\pi^{1/2}(2x)^{5/2}}, & 0 < x < t,\\
\frac{\sinh(c/2)}{c/2}\exp(-c^2x/2)4\pi^2\exp(-2\pi^2 x) & t \le x
\end{cases}
\]
We denote the c.d.f. of $\mathrm{GIG}(-3/2,c^2, 1/4)$ as $\texttt{pgig}(t \mid p=-3/2, a=c^2, b=1/4)$ and the c.d.f. of inverse gamma distribution with parameter $3/2$ and $1/8$ (with density proportional to $x^{-5/2}e^{-1/(8x)}$) as $\texttt{pgig}(t \mid p=-3/2, a=0, b=1/4)$ to unify the notation. Then, 
\begin{align}
    A^L(c,t)&=\int_0^ta_0(x;c,t)\rmd x\nonumber\\
    &= \int_0^t \frac{\sinh(c/2)}{c/2}\exp(-c^2x/2)\frac{\exp(-\frac{1}{8x})}{\pi^{1/2}(2x)^{5/2}} \rmd x \label{eq:leftarea-intermediate}\\
    &=\frac{\sinh(c/2)}{c/2}\frac{Z(c)}{\pi^{1/2}2^{5/2}}\int_0^t\frac{1}{Z(c)}x^{-5/2}\exp\left(-\frac{1}{2}\left(c^2x+\frac{1}{4x}\right)\right)\rmd x,\nonumber\\
    &= \frac{\sinh(c/2)}{c/2} (|c|+2)\exp(-|c|/2)\times\texttt{pgig}(t \mid p=-3/2, a=c^2, b=1/4)
    \label{eq:left-envelope-int}
\end{align}
where $Z(c) = \frac{2K_{-3/2}(|c|/2)}{(2|c|)^{-3/2}}$ if $c\neq 0$ or $Z(c) = \pi^{1/2}2^{7/2}$ if $c=0$ is a normalizing constant, and $K_{-3/2}$ is a modified Bessel function of the second kind. For the case when $c\neq 0$, we used $K_{-3/2}(|c|/2)= (\pi/|c|)^{1/2}\exp(-|c|/2)(1+2/|c|)$ in simplification. Also,
\begin{align}
    A^R(c,t)&=\int_t^\infty a_0(x;c,t) \rmd x = \frac{\sinh(c/2)}{c/2}\int_t^\infty \exp(-c^2x/2) 4\pi^2 \exp(-2\pi^2 x) \rmd x \label{eq:rightarea-intermediate}\\
    &= \frac{\sinh(c/2)}{c/2}
    \frac{4\pi^2}{2\pi^2 + c^2/2}\exp\left\{-(2\pi^2 + c^2/2) t\right\}.
    \label{eq:ar2}
\end{align}

We now show that the optimal cutoff $t^*$ that minimizes $A^L(c,t)+A^R(c,t)$ does not depend on $c$. For a given $c$, $A^L(c,t)$ and $A^R(c,t)$ are both differentiable in $t$. 
From expressions \eqref{eq:leftarea-intermediate} and \eqref{eq:rightarea-intermediate}, 
the optimal cutoff $t^*$ minimizes
 \begin{equation}
 \label{eq:optimal_t_objft}
     \int_0^t\frac{1}{\pi^{1/2} (2x)^{5/2}}\exp\left(-\frac{1}{2}\left(c^2x+\frac{1}{4x}\right)\right) \rmd x+\int_t^{\infty}4\pi^2\exp\left\{-\left(\frac{c^2}{2}+2\pi^2\right)x\right\} \rmd x
 \end{equation}
as a function of $t$, since the common $\sinh(c/2)/(c/2)$ term is never zero and does not depend on $t$. We claim that there is a unique $t^*\approx 0.050239$ within the interval $(\log(2)/(3\pi^2),1/4)$ that minimizes \eqref{eq:optimal_t_objft}. To see this, differentiating \eqref{eq:optimal_t_objft} with respect to $t$ and equating to zero, $t^*$ is a solution of
\begin{equation}
\frac{1}{\pi^{1/2} (2t)^{5/2}}\exp\left(-\frac{1}{8t}\right)-4\pi^2\exp\left(-2\pi^2t\right) = 0
\end{equation}
after canceling the $\exp(-c^2t/2)$ term, so the minimizer does not depend on $c$. Rearranging terms,
\begin{equation}
\label{eq:optimal_t_objft_deriv_rearranged}
    2^{9/2}\pi^{5/2}\exp\left(-2\pi^2 t+\frac{1}{8t}+\frac{5}{2}\log t \right) = 1
\end{equation}
whose solution is $t^*\approx 0.050239$ which lies within the bounds specified by Lemma \ref{lemma:trange}. Since the derivative of the LHS of \eqref{eq:optimal_t_objft_deriv_rearranged} is always negative on $t>0$, $t^*$ is unique.

\textbf{Part 2}. It can be easily checked that for fixed $t$, $A^L(c,t)$ and $A^R(c,t)$ are both continuous in $c$. Also, we have $A^L(0,t^*)+A^R(0,t^*) \approx 1.089002$. We claim that for fixed $t$, $A^L(c,t)\to 1$ and $A^R(c,t)\to 0$ as $c\to\infty$ to ensure that $A^L(c,t^*)+A^R(c,t^*)$ converges to 1 as $c\to\infty$ and thus the sampler is not ill behaved in the large $c$ regime. 
To see $\lim_{c\to\infty}A^L(c,t) = 1$, denoting $G\sim \mathrm{GIG}(-3/2,c^2,1/4)$, we have $1-\texttt{pgig}(t \mid p=-3/2, a=c^2, b=1/4)\le E(G)/t = 1/\{2t(c+2)\}$ by the Markov inequality, thus 
\[
1-1/(2tc+4t)\le \texttt{pgig}(t \mid p=-3/2, a=c^2, b=1/4) \le 1.
\]
Combining with $\lim_{c\to \infty} \frac{\sinh(c/2)}{c/2} (|c|+2)e^{-|c|/2}= 1$ for the LHS of the inequality, we have $\lim_{c\to\infty}A^L(c,t) = 1$. The fact that $\lim_{c\to\infty}A^R(c,t) = 0$ is easily deduced from \eqref{eq:ar2}.

For the optimal choice of $t^*\approx 0.050239$, numerical investigation shows that $A^L(c,t^*)+A^R(c,t^*)$ attains a maximum value of approximately 1.145583 as a function of $c$ when $c=10.134$. Therefore, $M$ is bounded above by 1.1456 for any given $c$. Hence, the average probability of accepting a proposal is uniformly bounded below by 0.8729.

\textbf{Part 3}. We follow Proposition 3 of \cite{Polson2013-gb}, where the probability of deciding to accept or reject a proposal $X$ upon checking the $m$th partial sum $S_m(X)$ is given by
\begin{equation*}
    \frac{1}{A^L(c,t^*)+A^R(c,t^*)}\int_0^{\infty}\left\{a_{m-1}(x;c,t^*)-a_m(x;c,t^*)\right\} \rmd x
\end{equation*}
The first few of these probabilities for the worst possible envelope  (that is, when $c=10.34$) are presented below.
\begin{table}[h]
\centering
    \begin{tabular}{ccccccc}
       \multicolumn{7}{c}{Probability of deciding to accept or reject upon computing $m^{\text{th}}$ series term}\\
      $m$ & $1$ & $2$ & $3$ & $4$ & $5$ & $6$ \\ \midrule 
      Prob. &  $0.127$ & $2.068\times10^{-4}$ & $2.226\times10^{-7}$ & $3.124\times10^{-11}$ & $5.894\times10^{-16}$ & $1.513\times10^{-21}$ 
    \end{tabular}
\label{tab:accept-reject-n-prob}
\end{table}

We see that the probabilities are decaying quickly, guaranteeing that with very high probability, only a small handful of $S_m(x)$ terms will need to be computed. From the above, using the worst-case $c=10.134$, the expected number of series terms that will need to be computed to decide whether to accept or reject a proposal is 1.1274624.

\section{Details of cobin and micobin distributions}
\label{appendix:cobinmicobindetail}

\subsection{Cobin distribution as an exponential dispersion model}
Following \citet[][\S2]{Jorgensen1986-ay}, we describe in detail how the proposed continuous binomial distribution arises from an exponential dispersion model generated from the uniform distribution. Let $Q$ be a probability measure that corresponds to a uniform distribution.
Then $M(s) = \int \exp(sy)\rmd Q(x) =  (e^{s}-1)/s = \exp(B(s))$ for $s\in\bbR$. 

First, consider the set of $\lambda$ values such that $M(s)^\lambda = \{(e^s-1)/s)\}^\lambda$ is a moment generating function of some distribution $Q_\lambda$. This set corresponds to the non-negative integers $\{\lambda\in \bbR \backslash \{0\}: M^\lambda$ is the m.g.f. of some distribution $Q_\lambda\} = \{1,2,\dots,\}$. To see this, from Theorem 3.1 (vi) of \cite{Jorgensen1986-ay}, it suffices to show that the analytic continuation of $M$ to the complex plane has at least one simple zero, which is confirmed by checking $M(z) = 0$ has solutions $z = \pm 2\pi i, \pm 4\pi i, \dots$, where $i$ is an imaginary unit.

Then, the corresponding probability measure $Q_\lambda$ is the $\lambda$-fold i.i.d. convolution of uniform distribution, i.e., the Irwin-Hall distribution defined on the interval $(0,\lambda)$. Based on $Q_\lambda$ ($\lambda = 1,2,\dots$), we consider the probability measure $Q_{\lambda, \theta}$ that satisfies
\[
\frac{\rmd Q_{\lambda, \theta}}{\rmd Q_\lambda} = \exp\{x\theta - \lambda  B(\theta)\},
\]
where the left-hand side denotes the Radon–Nikodym derivative. Hence, $Q_{\lambda, \theta}$ is the distribution obtained from exponentially tilting $Q_{\lambda}$ by $e^{x\theta}$. Transforming to $y=x/\lambda$, we obtain the exponential dispersion model $P_{\lambda, \theta}$ that satisfies
$
\frac{\rmd P_{\lambda, \theta}}{\rmd P_\lambda} = \exp\{\lambda y\theta - \lambda  B(\theta)\}.
$ 
Since the density corresponding to $P_\lambda$ (Irwin-Hall scaled by $1/\lambda$) is 
$
h(y, \lambda) = \frac {\lambda}{(\lambda-1)!}\sum _{k=0}^{\lambda}(-1)^{k}{\lambda \choose k}\max\{(\lambda y-k),0\}^{\lambda-1},
$ 
the density corresponding to $P_{\lambda, \theta}$ coincides with \eqref{eq:cobin}.

Table~\ref{table:cobinderivation} presents a comparative summary with normal, gamma, and inverse Gaussian exponential dispersion families. 

\begin{table}
\caption{Illustration of the derivation of cobin distribution as a continuous exponential dispersion model, along with a comparison with normal, gamma, and inverse Gaussian distributions.}
\centering
\begin{tabular}{c c c}
\toprule
\multicolumn{3}{c}{}\\ 
\multicolumn{3}{c}{\qquad\qquad exponential tilting by $\theta$\quad \qquad $\lambda$-fold convolution and scale by $\lambda^{-1}$}\\ 
\multicolumn{3}{c}{  \quad $\xrightarrow{\qquad\quad}$\qquad \qquad\qquad\qquad\quad$\xrightarrow{\qquad\quad}$}\\ \midrule 
& &  \\ 
$H\sim \mathrm{Unif}(0, 1)$&
$Y\sim \mathrm{cobin}(\theta, 1)$ &
$Y\sim \mathrm{cobin}(\theta, \lambda^{-1})$ \\
$h(y) = 1$& $\exp(\theta y - B(\theta))h(y)$ & $\exp(\lambda\theta y - \lambda B(\theta))h(y,\lambda)$ \\
$0\le y\le 1$ & $\theta\in \bbR$ & $\lambda \in \bbN$  \\  
&  &   \\  \midrule
& &  \\ 
$H_1\sim N(0, 1)$&
$Y\sim N(\theta, 1)$ &
$Y\sim N(\theta, \lambda^{-1})$ \\
$h_1(y) = (2\pi)^{-1/2}\exp(-y^2/2)$& $\exp(\theta y - B_1(\theta))h_1(y)$ & $\exp(\lambda\theta y - \lambda B_1(\theta))h_1(y,\lambda)$ \\
$y\in\bbR$ & $\theta\in \bbR$ & $\lambda > 0$  \\  
&  &   \\  \midrule
& &  \\ 
$H_2\sim \mathrm{Exp}(1)$&
$Y\sim \mathrm{Exp}(1-\theta)$ &
$Y\sim \mathrm{Gamma}(\lambda, \lambda(1-\theta))$ \\
$h_2(y) = \exp(-y)$& $\exp(\theta y - B_2(\theta))h_2(y)$ & $\exp(\lambda\theta y - \lambda B_2(\theta))h_2(y,\lambda)$ \\
$y\ge 0$ & $\theta<1$ & $\lambda > 0$  \\  
&  &   \\  \midrule
& &  \\ 
$H_3\sim \mathrm{InvGamma}(1/2, 1/2)$&
$Y\sim \mathrm{InvGau}((-2\theta)^{-1/2}, 1)$ &
$Y\sim \mathrm{InvGau}((-2\theta)^{-1/2}, \lambda)$ \\
$h_3(y) = (2\pi y^3)^{-1/2}\exp(-\frac{1}{2y})$& $\exp(\theta y - B_3(\theta))h_3(y)$ & $\exp(\lambda\theta y - \lambda B_3(\theta))h_3(y,\lambda)$ \\
$y>0$ & $\theta<0$ & $\lambda > 0$  \\  
&  &   \\  \midrule\midrule
\multicolumn{3}{c}{
$
B(\theta) = \log E(e^{\theta H}) = \log\left(\frac{e^\theta-1}{\theta}\right), \quad h(y,\lambda) = \text{density of } \frac{1}{\lambda}\sum_{l=1}^\lambda H^{(l)} 
$}\\
\multicolumn{3}{c}{
$
B_1(\theta) = \log E(e^{\theta H_1}) = \frac{\theta^2}{2}, \quad h_1(y,\lambda) = \text{density of } \frac{1}{\lambda}\sum_{l=1}^\lambda H_1^{(l)} = \frac{\exp(-\lambda y^2/2)}{\sqrt{2\pi/\lambda}}
$
}\\
\multicolumn{3}{c}{
$
B_2(\theta) = \log E(e^{\theta H_2}) = -\log(1-\theta), \quad h_2(y,\lambda) = \text{density of } \frac{1}{\lambda}\sum_{l=1}^\lambda H_2^{(l)} = \frac{\lambda^{\lambda}y^{\lambda-1}\exp(-\lambda y)}{\Gamma(\lambda)} 
$}\\
\multicolumn{3}{c}{$
B_3(\theta) = \log E(e^{\theta H_3}) = -(-2\theta)^{1/2}, \quad h_3(y,\lambda) = \text{density of } \frac{1}{\lambda}\sum_{l=1}^\lambda H_3^{(l)} = \frac{\lambda^{1/2}\exp(-\lambda/(2y))}{\sqrt{2\pi y^3}} 
$} \\
\bottomrule
\end{tabular}
\label{table:cobinderivation}
\end{table}

\subsection{Further details of cobin and micobin distributions}

The density of the micobin distribution is 
\begin{equation}
    p_{\mathrm{micobin}}(y;\theta, \psi) = \sum_{\lambda=1}^\infty \lambda(1-\psi)^{\lambda-1}\psi^{2}h(y,\lambda)\frac{e^{\lambda \theta y}}{\{(e^{\theta}-1)/\theta\}^\lambda\}}, \quad 0\le y \le 1
\end{equation}
which follows from its definition as an hierarchical model $Y\mid \lambda \sim \mathrm{cobin}(\theta,\lambda^{-1})$, $(\lambda-1)\sim \mathrm{negbin}(2,\psi)$. $p_{\mathrm{micobin}}$ is continuous in $y\in[0,1]$, since mixture component densities $p_{\mathrm{cobin}}(y; \theta, \lambda^{-1})$ are all continuous.  
When $\psi \to 1$, it reduces to $\mathrm{cobin}(\theta,1)$. It has nonzero density at boundary values, 
\[
p_{\mathrm{micobin}}(0;\theta, \psi) = \psi^2p_{\mathrm{cobin}}(0;\theta, 1) = \psi^2 \frac{\theta}{e^\theta-1}
\]
\[
p_{\mathrm{micobin}}(1;\theta, \psi) = \psi^2p_{\mathrm{cobin}}(1;\theta, 1) = \psi^2 \frac{\theta e^{\theta}}{e^\theta-1}
\]
Unlike cobin, which belongs to the exponential dispersion family and its likelihood is guaranteed to be log-concave in terms of $\theta$, there is no guarantee that the likelihood of micobin distribution is log-concave. 

The c.d.f. of $\mathrm{cobin}(\theta,\lambda^{-1})$ can be obtained by integrating $p_{\mathrm{cobin}}$ term-by-term. Its form when $\lambda = 1$ is available from \citet{Loaiza-Ganem2019-dl}, which is $F_{\mathrm{cobin}}(z;\theta,1) = (e^{\theta z}-1)/(e^\theta-1)$. Assuming $\lambda \ge 2$,
\begin{align*}
F_{\mathrm{cobin}}(z; \theta, \lambda^{-1}) &= \frac {\lambda}{(\lambda-1)!}\sum_{k=0}^{\lambda}(-1)^{k}{\lambda \choose k} \exp(-\lambda B(\theta))\int_0^z \max\{(\lambda y-k),0\}^{\lambda-1} \exp(\lambda y \theta) \rmd y\\
&=\frac {\lambda}{(\lambda-1)!}\sum_{k=0}^{\lambda}(-1)^{k}{\lambda \choose k} \exp(-\lambda B(\theta))\int_{k/\lambda}^z (\lambda y-k)^{\lambda-1} \exp(\lambda y \theta) \rmd y
\end{align*}
where we used $\max\{(\lambda y-k),0\} = 0$ if $y\le k/\lambda$. The integral term becomes
\[
\int_{k/\lambda}^z (\lambda y-k)^{\lambda-1} \exp(\lambda y \theta) \rmd y = \begin{cases}
\lambda^{-2} (\lambda z - k)^\lambda  & \theta = 0\\
\frac{e^{\theta k}}{\lambda (-\theta)^\lambda}\gamma(\lambda, -\theta(\lambda z-k)) & \theta \neq 0
\end{cases}
\]
where $\gamma(\lambda,x) = \int_0^x t^{\lambda-1}e^{-t} \rmd x$ is a lower incomplete gamma function. In practice, many existing software programs only support the calculation of the lower incomplete gamma function with positive $x$, i.e., when $\theta$ is negative. To resolve this, one can use symmetry of cobin distributions between $\mathrm{cobin}(\theta,\lambda)$ and $\mathrm{cobin}(-\theta,\lambda)$, which yields $F_{\mathrm{cobin}}(z; \theta, \lambda^{-1})  = 1- F_{\mathrm{cobin}}(1-z; -\theta, \lambda^{-1})$. 

The c.d.f. of $\mathrm{micobin}(\theta,\psi)$ is simply a weighted sum of cobin c.d.f.s,
\[
F_{\mathrm{micobin}}(z; \theta, \psi) = \sum_{\lambda=1}^\infty  \lambda(1-\psi)^{\lambda-1}\psi^{2}F_{\mathrm{cobin}}(z; \theta, \lambda^{-1}).
\]

The random variate generation of $\mathrm{cobin}(\theta, \lambda^{-1})$ can be easily done by taking an average of $\lambda$ i.i.d. $\mathrm{cobin}(\theta, 1)$ variables. The sampling from $\mathrm{cobin}(\theta, 1)$ can be done by $F_{\mathrm{cobin}}^{-1}(U;\theta,1)$ with $U\sim \mathrm{Unif}(0,1)$, where $F_{\mathrm{cobin}}^{-1}(u;\theta,1) = \theta^{-1}\log(ue^\theta-u+1)$ \citep{Loaiza-Ganem2019-dl}. Note that if $\theta = 0$ then $\mathrm{cobin}(0, 1)$ is a uniform distribution. Random variate generation of micobin directly follows from its definition.

\subsection{Cobit link function and variance function}

The cobit link function $g_{\mathrm{cobit}}:(0,1)\to \bbR$ has an inverse $g^{-1}_{\mathrm{cobit}}(x) = B'(x) = e^\theta/(e^\theta-1) - \theta^{-1}$. Although the expression for $g_{\mathrm{cobit}}$ is not available analytically, the numerical inversion of $B'(x)$ can be easily done with the Newton--Raphson algorithm. Under cobin regression with canonical link, the sufficient statistic of $\bm\beta$ is $X^\T\bmy$, and the log-likelihood is guaranteed to be concave \citep{Agresti2015-xu}. 
The cobit link function satisfies $\lim_{x\to -\infty}g^{-1}_{\mathrm{cobit}}(\pi x)/g_{\mathrm{cauchit}}^{-1}(x) = \lim_{x\to +\infty}\{1-g^{-1}_{\mathrm{cobit}}(\pi x)\}/\{1-g_{\mathrm{cauchit}}^{-1}(x)\} = 1$, where $g_{\mathrm{cauchit}}^{-1} = \arctan(x)/\pi+0.5$ \citep{Koenker2009-up} is an inverse of the cauchit link function. Thus, up to scale difference in the linear predictor, the cobit link maps a large linear predictor to the mean around 0 or 1, asymptotically at the same rate as the cauchit link. Compared to the logit link, this significantly reduces the influence of large outlying predictors, also contributing to robustness; see \citet[][\S 6.6]{Gelman2007-hl} for a discussion. See Figure~\ref{fig:linkft} for a comparison between cauchit and logit link functions.

\begin{figure}
    \centering
    \includegraphics[width=\linewidth]{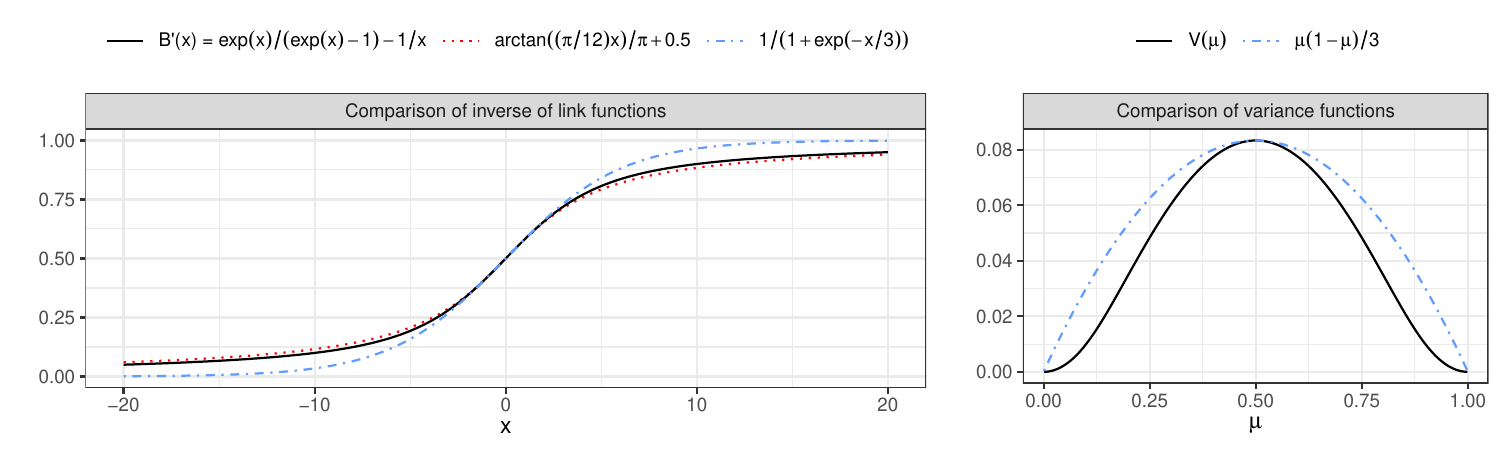}
    \caption{Comparison of link and variance functions. (Left) Inverse of cobit link $g^{-1}_{\mathrm{cobit}}(x) = B'(x)$, Cauchit link and logit link; cauchit and logit are scaled such that derivatives at zero are the same. (Right) Variance function $V(\mu) = B''\{(B')^{-1}(\mu)\}$ with variance function of beta regression model scaled by $1/3$.}
    \label{fig:linkft}
\end{figure}

The variance function $V(\mu) = B''\{(B')^{-1}(\mu)\}$ has a maximum of $1/12$ at $\mu = 1/2$. This is in constrast to the ``variance function'' of beta regression $\mu(1-\mu)$, which has a maximum of $1/4$ at $\mu = 1/2$ and satisfies $\var(Y) = \mu(1-\mu)/(1+\phi)$ for $Y\sim \mathrm{Beta}(\mu,\phi)$ (mean, precision parametrization). The range of the variance function of cobin is less than that of the beta, as the beta density spikes at the boundaries for small $\phi$, leading to a higher variance. Another notable difference is the behavior of the variance function when $\mu$ is close to zero or one, where $\mu(1-\mu)$ approaches to 0 rapidly whereas $V(\mu) = B''\{(B')^{-1}(\mu)\}$ approaches to 0 smoothly. See the right panel of Figure~\ref{fig:linkft} for a comparison.

\section{Detailed derivation of Gibbs samplers and EM algorithm}
\label{appendix:algdetails}

\subsection{Fixed effects models}
We first add details on the derivations of Algorithm~\ref{alg:cobin} and Algorithm~\ref{alg:micobin}, where we suppressed the notations conditioning on data $\bmy$ for conciseness. The update for $\lambda$ (or $\lambda_i$ for micobin) is based on the posterior under the non-augmented model,
\begin{align*}
p(\bm\beta, \lambda \mid \bmy) &\propto p(\bm\beta) p_\lambda(\lambda) \prod_{i=1}^n p_{\mathrm{cobin}}(y_i; \bmx_i^\T\bm\beta, \lambda^{-1}) & \text{(cobin)}\\
p(\bm\beta, \bm\lambda \mid \psi, \bmy) &\propto p(\bm\beta) \left\{\prod_{i=1}^n \lambda_i(1-\psi)^{\lambda_i-1}\psi^2\right\} \left\{\prod_{i=1}^n p_{\mathrm{cobin}}(y_i; \bmx_i^\T\bm\beta, \lambda_i^{-1})\right\}& \text{(micobin)}
\end{align*}
which yields step 1 of both algorithms by conditioning on $\bm\beta$. Given $\lambda$ (or $\bm\lambda$ for micobin), the conditional posterior under the augmented model is, combining with \eqref{eq:auglik},
\begin{align*}
p(\bm\beta, \bm\kappa \mid \lambda, \bmy) &\propto p(\bm\beta) \prod_{i=1}^n \exp\{{\lambda(y_i - 0.5)\bm{x}_i^\T\bm\beta}- \kappa_i (\bm{x}_i^\T\bm\beta)^2/2\}\pkg (\kappa_i; \lambda, 0), & \text{(cobin)}  \\
p(\bm\beta, \bm\kappa \mid \bm\lambda, \bmy) &\propto p(\bm\beta) \prod_{i=1}^n \exp\{{\lambda_i(y_i - 0.5)\bm{x}_i^\T\bm\beta}- \kappa_i (\bm{x}_i^\T\bm\beta)^2/2\}\pkg (\kappa_i; \lambda_i, 0).   &\text{(micobin)}
\end{align*}
From Theorem~\ref{thm:kgexptilting}, both algorithms' step 2 of updating KG random variables follows by conditioning on $\bm\beta$. Conditioning on $\bm\kappa$, with normal prior $\bm\beta\sim N(\bm{0}, \Sigma_\beta)$, step 3 follows by observing that $\exp\{{\lambda(y_i - 0.5)\bm{x}_i^\T\bm\beta}- \kappa_i (\bm{x}_i^\T\bm\beta)^2/2\}$  is proportional to $N(\lambda (y_i-0.5)/\kappa_i; \bmx_i^\T\bm\beta, \kappa_i^{-1})$ in $\bm\beta$, where $N(y;\mu, \sigma^2)$ denotes the $N(\mu, \sigma^2)$ density evaluated at $y$.  Finally, step 4 of Algorithm~\ref{alg:micobin} comes from the fact that the contribution from the latent negative binomial model to the likelihood of $\psi$ is $\prod_{i=1}^n \{\lambda_i (1-\psi)^{\lambda_i-1}\psi^2\}$.


\subsection{Mixed effect models}

We first derive a generic partially collapsed Gibbs sampler \citep{Van-Dyk2008-dr} for cobin and micobin mixed effect models, where fixed effect and random effect coefficients are jointly updated. This is made possible due to KG augmentation, leveraging conditional normal likelihood and normal-normal conjugacy. Then we specialize in random intercept and spatial regression models. Consider the following general mixed model setting with random effect $\bmu\in\bbR^q$,  
\begin{equation}
g_{\mathrm{cobit}}(E\{y_i\mid u_i, \bmx, \bmz\}) = \bmx_i^\T\bm\beta + \bmz_i^\T \bm{u}, \quad \bmu\sim N_q(\bm{0}, \Sigma_u(\vartheta)), \quad i=1,\dots,n,
\end{equation}
along the cobin or micobin response distribution with parameter $\theta_i = \eta_i = \bmx_i^\T\bm\beta + \bmz_i^\T \bm{u}$. Here $\vartheta$ is random effect covariance parameter(s), $\bmz_i\in\bbR^q$ corresponds to the random effect covariate, and $Z = (\bmz_1^\T,\dots,\bmz_n^\T)^\T\in\bbR^{n\times q}$ is a random effect design matrix. 

Given the linear predictor $\bmx_i^\T\bm\beta + \bmz_i^\T \bm{u}$, the joint update of parameters related to dispersion ($\lambda$ or $(\bm\lambda, \psi)$ and KG random variables is the same as fixed-effect models. 
Now, let the KG variables and dispersion parameters be fixed, then the likelihood (in terms of $\bm\beta$ and $\bmu$) is proportional to $\prod_{i=1}^n N(\tilde{y}_i; \eta_i, \kappa_i^{-1})$, where $\tilde{y}_i$ is defined as $\lambda (y_i - 0.5)/\kappa_i$ for cobin and $\lambda_i(y_i-0.5)/\kappa_i$ for micobin. In matrix form, denoting $K = \diag(\kappa_1,\dots,\kappa_n)$, the likelihood is proportional to $N_n(\tilde{\bmy}; X\bm\beta + Z\bmu, K^{-1})$.

\begin{algorithm}[t]
\small
\captionsetup{font=small} 
\caption{One cycle of a partially collapsed Gibbs sampler}
\label{alg:cobinmixed}
\begin{algorithmic}[1]
\State Sample $\lambda$ from $\pr(\lambda = l \mid \bm\beta, \bmu) \propto p_{\lambda}(l)\prod_{i=1}^n p_{\mathrm{cobin}}(y_i; \eta_i, l^{-1})$ among $\{1,\dots,L\}$ 
\State Sample $\kappa_i$ from $(\kappa_i \mid \lambda, \bm\beta, \bmu) \indsim \mathrm{KG}(\lambda, \eta_i)$, $i=1,\dots,n$ \Comment{steps 1,2 jointly updates $(\lambda,\bm\kappa)$}
\State Sample $\bm\beta$ from $(\bm\beta \mid \lambda, \bm\kappa, \vartheta)\sim N_p(\bm{m}_\beta, V_\beta)$, where \Comment{$\bmu$ integrated out but conditioned on $\vartheta$} \vspace{-1mm}
\[
V_\beta^{-1} = X^\T(K^{-1} + Z\Sigma_u(\vartheta)Z^\T)^{-1}X + \Sigma_\beta^{-1}, \quad \bm{m}_\beta =  V_\beta X^\T(K^{-1} + Z\Sigma_u(\vartheta)Z^\T)^{-1}\tilde{\bmy} 
\]
\State Sample $\vartheta$ from $p(\vartheta \mid \bm\beta, \lambda, \bm\kappa)$ using Metropolis-Hastings with some proposal  $\vartheta^\star\sim q(\vartheta^\star\mid\vartheta)$, \vspace{-1mm}
\[
\text{accept }\vartheta^* \text{ with probability }\min \left\{1, \frac{\calL(\vartheta^\star)p(\vartheta^\star)}{\calL(\vartheta)p(\vartheta)}\frac{q(\vartheta \mid \vartheta^\star)}{q(\vartheta^\star \mid \vartheta)}\right\}, \text{ where }\calL(\vartheta) \text{ is in \eqref{eq:fullcondvartheta}}
\]
\State Sample $\bmu$ from $(\bmu \mid \vartheta, \bm\beta, \lambda, \bm\kappa)\sim N_q(\bm{m}_u, V_u)$, where \Comment{Steps 4 and 5 jointly updates $\vartheta$ and $\bmu$} \vspace{-1mm}
\[
V_u^{-1} = Z^\T KZ + \Sigma(\vartheta)^{-1}, \quad \bm{m}_u =  V_u Z^\T K(\tilde{\bmy} - X\bm\beta)
\]
\end{algorithmic}
\end{algorithm}

The sampler first updates $\bm\beta$ from the partially collapsed posterior $p(\bm\beta \mid \lambda, \bm\kappa, \vartheta)$ where $\bmu$ is marginalized out but the random effect covariance parameter $\vartheta$ is not. By normal-normal conjugacy, integrating out $\bmu\sim N_n(\bm{0}, \Sigma_u(\vartheta) )$, we have 
\[
p(\bm\beta \mid \lambda, \bm\kappa, \vartheta) \propto N_n(\tilde{\bmy}; X\bm\beta, K^{-1} + Z\Sigma_u(\vartheta)Z^\T)N_p(\bm\beta; \bm{0}, \Sigma_\beta)
\]
which yields step 3 of Algorithm~\ref{alg:cobinmixed}. 
Next, we update $(\vartheta, \bmu)$ jointly by first sampling $\vartheta\sim p(\vartheta \mid \bm\beta, \lambda, \bm\kappa)$ and then sampling $\bmu\sim p(\bmu \mid \vartheta, \bm\beta, \lambda, \bm\kappa)$. This is based on
\[
p(\bmu, \vartheta \mid \bm\beta, \lambda, \bm\kappa) \propto N_q(\bmu; (Z^\T K^{-1}Z)^{-1}Z^\T K(\tilde{\bmy} - X\bm\beta), (Z^\T K^{-1}Z)^{-1})N_q(\bmu; \bm{0}, \Sigma(\vartheta))p(\vartheta)
\]
where we used $N_n(Z\bmu; \bm\mu, K) \propto N_q\left(\bmu; (Z^\T K^{-1}Z)^{-1}Z^\T K^{-1}\bm\mu, (Z^\T K^{-1}Z)^{-1}\right)$ in terms of $\bmu$ for a full column rank $Z\in\bbR^{n\times q}$. Conditioning on $\vartheta$, it yields step 5 of Algorithm~\ref{alg:cobinmixed}.  Marginalizing out $\bmu$ using normal-normal conjugacy, we have 
\begin{equation}
\label{eq:fullcondvartheta}
p(\vartheta \mid \bm\beta, \lambda, \bm\kappa) \propto \underbrace{N_q\left( (Z^\T K^{-1}Z)^{-1}Z^\T K(\tilde{\bmy} - X\bm\beta) ; \bm{0}, (Z^\T K^{-1}Z)^{-1} + \Sigma(\vartheta) \right)}_{\calL(\vartheta)}p(\vartheta),
\end{equation}
which gives step 4 of Algorithm~\ref{alg:cobinmixed}. While it is possible to condition on $\bmu$ when sampling $\vartheta$, which may provide a direct sampler (such as when $\vartheta$ is a marginal variance and with an inverse gamma prior), marginalizing out $\bmu$ significantly improves mixing.

Now we specialize this generic sampler into three cases. 

\textbf{Random intercept model}. In this case, the dimension of the random effect coefficient $q$ ($q<n$) corresponds to the number of groups. Let $g_i\in\{1,\dots,q\}$ be a group label for $i=1,\dots,n$. Then the random effect design matrix $Z$ has elements $Z_{i,j} = 1$ if $g_i = j$ and $Z_{i,j} = 0$ otherwise. Importantly, $Z^\T K^{-1}Z = Z^\T \diag(1/\kappa_1,\dots,1/\kappa_n)Z$ becomes a diagonal matrix.

\textbf{Spatial mixed effects model}. Spatial regression model \eqref{eq:spreg} corresponds to the case when $q=n$ and $Z = I_n$. Here $Z^\T K^{-1}Z = K^{-1}$ is a diagonal matrix.

\textbf{Spatial mixed effects model with sparse $\Sigma(\vartheta)^{-1}$}. When the precision matrix $\Sigma(\vartheta)^{-1}$ is known to be sparse, such as under the NNGP model, matrix operations that involve inversion of $n\times n$ matrices can utilize sparse matrix algorithms. First, in step 2, the calculation of $V_\beta$ involves $(K^{-1} + \Sigma_u(\vartheta))^{-1} = K - K( \Sigma(\vartheta)^{-1} + K)^{-1}K$, where the equation follows from the Woodbury identity. Here $\Sigma(\vartheta)^{-1} + K$ is sparse since $K$ is diagonal, thus its inversion can utilize a sparse matrix algorithm. Next, in step 4, it involves the inverse and determinant calculation of the covariance matrix of the $\calL(\vartheta)$ term, i.e. $(K+\Sigma(\vartheta))^{-1} = K^{-1} - K^{-1}(\Sigma(\vartheta)^{-1} + K^{-1})^{-1}K^{-1}$ and $|K+\Sigma(\vartheta)| = |\Sigma(\vartheta)^{-1}+K^{-1}||K||\Sigma(\vartheta)|$, which are all based on sparse matrices. Finally, in step 5, $K + \Sigma(\vartheta)^{-1}$ is already sparse. Leveraging such sparsity significantly improves  computation, where we used \texttt{spam} \texttt{R} package \citep{Furrer2010-ky}. Note that this corresponds to a block update of $\bmu$ that is different from a sequential update of $\bmu$, a default sampler of \texttt{spNNGP} \texttt{R} package \citep{Finley2022-pg}.  

For a low-rank structure on a covariance matrix $\Sigma(\vartheta)$, a similar strategy involving Woodbury matrix identity can be employed; see \citet{Lee2024-cq} for details. 

The derivations for micobin mixed effects models are essentially the same except for step 1 and an additional step of sampling $\psi$. Derivations for those additional steps are detailed in the previous subsection and thus omitted.

\subsection{Varying dispersion micobin model}
\label{subsec:vardispmicobin}
Next, we describe the posterior inference procedure for variable dispersion micobin regression \eqref{eq:vardispmicobin}. Given $\bm\lambda$, the dispersion submodel is a negative binomial regression model $(\lambda_i-1)\sim \mathrm{negbin}(2, \psi_i)$, $\mathrm{logit}(\psi_i) = \bmd_i^\T\bm\gamma$ . Therefore, under the normal prior on $\bm\gamma\sim N(\bm\mu_\gamma, \Sigma_\gamma)$, we can combine P\'olya-Gamma (PG) data augmentation \citep{Pillow2012-qq,Polson2013-gb} that leads to conditionally conjugate sampling of $\bm\gamma$. The resulting Gibbs sampler has 5 steps, where steps 1 to 3 are equal to Algorithm~\ref{alg:micobin}. Denote $\bmd_i\in\bbR^{d}$ as a covariate for the dispersion model, and $D\in\bbR^{n\times d}$ as a design matrix. The new step 4 corresponds to  sampling $\omega_i \indsim \mathrm{PG}(1+\lambda_i, \bmd_i^\T\bm\gamma)$ for $i=1,\dots,n$. Step 5 corresponds to sampling $(\bm\gamma \mid \bm\lambda, \bm\omega)\sim N_q(\bm{m}_\gamma, V_\gamma)$, where $V_\gamma^{-1} = D^\T \diag(\omega_1,\dots,\omega_n) D + \Sigma_\gamma^{-1}$ and $\bm{m}_\gamma = V_\gamma \{D^\T(1.5-0.5\lambda_1,\dots,1.5 -0.5\lambda_n)^\T + \Sigma_\gamma^{-1} \bm\mu_\gamma\}$.

\subsection{EM algorithm}     

The Kolmogorov-Gamma augmentation also leads to an EM algorithm that can be used to find the MLE or posterior mode under a cobin regression model with cobit link. First, when $\kappa \sim \mathrm{KG}(b,c)$, we have $E(\kappa) =b c^{-2}\{(c/2)\coth(c/2) - 1\}$ if $c\neq 0$ or $E(\kappa) = b/12$ if $c=0$, which is easy to check from the definition of the KG as an infinite convolution of gammas. 

Based on the prior $\bm\beta\sim N_p(\bm{0},\Sigma_\beta)$ and the augmented model \eqref{eq:auglik} with fixed $\lambda$, we are interested in finding the posterior mode
\[
\hat{\bm\beta} = \argmax_{\bm\beta} \int_{\bbR_+^n}  \exp(-\bm\beta^\T\Sigma_\beta^{-1}\bm\beta/2)\prod_{i=1}^n\left\{\exp( \lambda(y_i - 0.5)\bmx_i^\T\bm\beta - \kappa_i(\bmx_i^\T\bm\beta)^2/2)\pkg(\kappa_i; \lambda, 0)\right\} \rmd \bm\kappa
\]
When finding MLE, the prior term $\exp(-\bm\beta^\T\Sigma_\beta^{-1}\bm\beta/2)$ is ignored. By Theorem~\ref{thm:kgexptilting}, the conditional distribution of $\bm\kappa$ given $\bm\beta^{(t)}$ is $\kappa_i\indsim \mathrm{KG}(\lambda, \bmx_i^\T\bm\beta^{(t)})$ for $i=1,\dots,n$. Therefore, E step is
\[
Q(\bm\beta \mid \bm\beta^{(t)}) = \mathrm{constant}  -\frac{1}{2}\sum_{i=1}^n E(\kappa_i) (\bmx_i^\T\bm\beta)^2 + \sum_{i=1}^n \lambda(y_i - 0.5) \bmx_i^\T\bm\beta - \frac{1}{2}\bm\beta^\T\Sigma_\beta^{-1}\bm\beta
\]
Plugging in $\hat\kappa_i = \lambda (\bmx_i^\T\bm\beta^{(t)})^{-2}\{(\bmx_i^\T\bm\beta^{(t)}/2)\coth(\bmx_i^\T\bm\beta^{(t)}/2) - 1\}$ in place of $E(\kappa_i)$ (if $\bmx_i^\T\bm\beta^{(t)} = 0$, $\hat\kappa_i = \lambda/12$), the M step is
\[
\bm\beta^{(t+1)} = (X^\T\diag(\hat{\kappa}_1,\dots,\hat{\kappa}_n) X +  \Sigma_\beta^{-1})^{-1} \{X^\T( \lambda \bmy-0.5  \lambda \bm{1}_n)\} 
\]
Iterating the E step and M step (until $\bm\beta^{(t)}$ is stabilized given a tolerance level) gives the posterior mode (or MLE) of $\bm\beta$ under fixed $\lambda$. In fact, without the prior contribution $\exp(-\bm\beta^\T\Sigma_\beta^{-1}\bm\beta/2)$, the MLE  does not depend on $\lambda$ due to orthogonality between $\bm\beta$ and $\lambda$. 

For unknown $\lambda$ with candidates $\lambda \in\{1,\dots,L\}$ for some large $L$, one can run EM algorithm $L$ times, calculate the unnormalized posterior $p(\bm\beta)\prod_{i=1}^n p_{\mathrm{cobin}}(y_i; \bmx_i^\T\bm\beta, \lambda)$, and choose $(\hat{\bm\beta}(\lambda),\lambda)$ that maximizes it.

\section{Further details of simulation studies}
\label{appendix:simul}
\subsection{Robustness of MLE to distributional misspecification and beyond}
\label{appendix:simul_consistency}

\begin{figure}
    \centering
    \includegraphics[width=\linewidth]{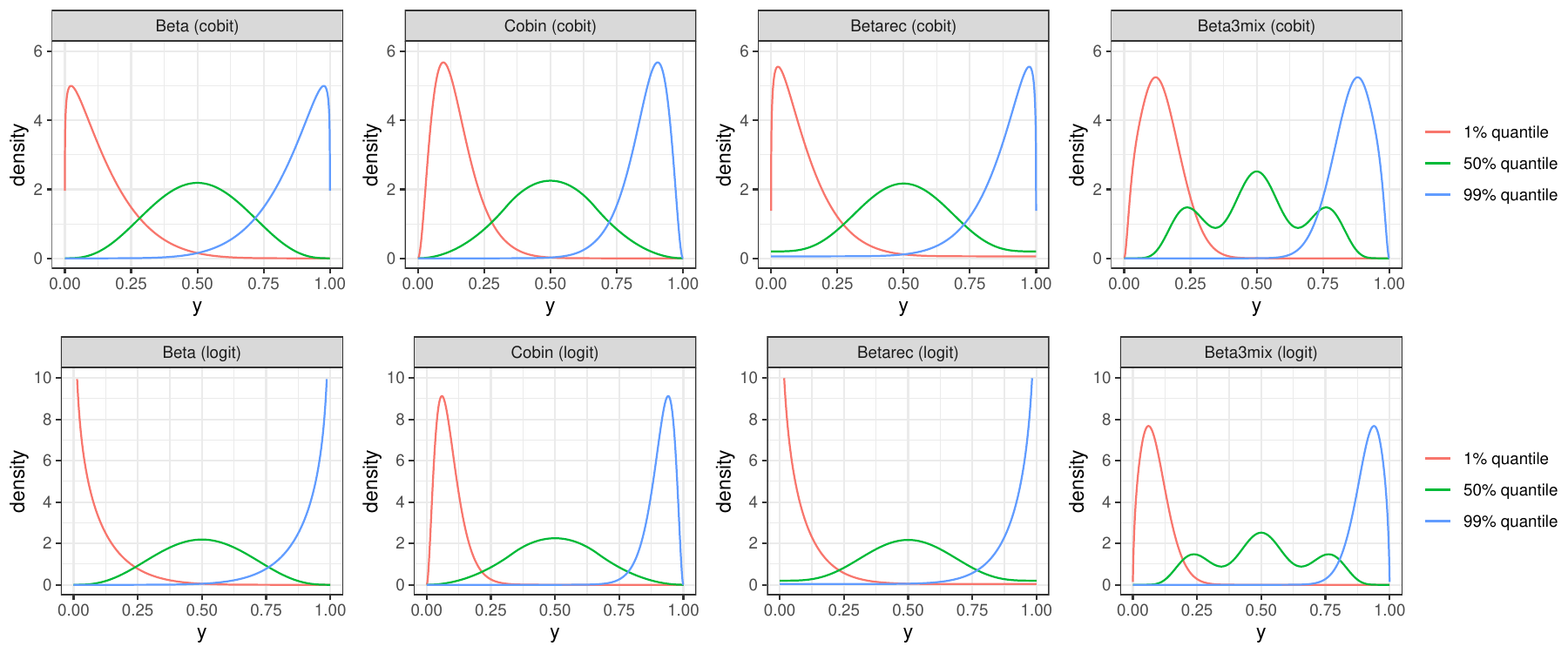}
    \caption{Data-generating distributions used in the MLE robustness simulation study. Density plots are shown for the 1\% and 99\% quantiles of the covariate distribution, illustrating variation of the response distribution across the covariate range.}
    \label{fig:simul1_data}
\end{figure}
Figure~\ref{fig:simul1_data} depicts four different data-generating distributions (beta, cobin, beta rectangular, beta mixture), where the true mean $\mu_i = g^{-1}(\bmx_i^\T\bm\beta) = g^{-1}(x_{i1})$ with a choice of link $g$ (cobit, logit) varies from the 1\%  quantile of $x_{i1}$ to the 99\% quantile of $x_{i1}$. 
In addition to bias and RMSE, Table~\ref{table:simul_consistency_supp1} summarizes the results in terms of length of the 95\% Wald confidence interval, its empirical coverage, and negative test log-likelihood (NTLL) across 1000 replicates, where NTLL is calculated based on separate held-out data of size $n/4$. The results show that, except when the true data-generating distribution is beta, cobin regression outperforms beta regression in both empirical coverage and NTLL. These differences become more prominent as $n$ increases, since the bias in the beta regression estimates has a larger impact on inference and prediction.

\begin{table}
\scriptsize
\caption{Analysis of beta and cobin regression MLE under four different data generating distributions: beta, cobin, beta rectangular (betarec), and mixture of beta (beta3mix). Results are based on 1000 replicates under the correct mean structure but possibly misspecified response distributions. CI: 95\% Wald confidence interval, cover: empirical coverage probability, NTLL: negative test log likelihood}
    \label{table:simul_consistency_supp1}
    \centering
    \begin{tabular}{c c p{0.6cm}p{0.6cm}p{0.8cm}p{0.6cm}p{0.6cm}p{0.8cm}p{0.6cm}p{0.6cm}p{0.8cm}p{0.6cm}p{0.6cm}p{0.8cm}}
    \toprule
 \multirow{3}{*}{\makecell{Method,\\link\\(data\&fit)}}  &  & \multicolumn{3}{c}{Beta data}  & \multicolumn{3}{c}{Cobin data} & \multicolumn{3}{c}{Betarec data}  & \multicolumn{3}{c}{Beta3mix data}\\
     \cmidrule(lr){3-5}\cmidrule(lr){6-8} \cmidrule(lr){9-11}\cmidrule(lr){12-14}
  &  & \multirow{2}{*}{\makecell{CI$(\beta_1)$\\length}} & \multirow{2}{*}{\makecell{CI$(\beta_1)$\\cover}} & NTLL & \multirow{2}{*}{\makecell{CI$(\beta_1)$\\length}} & \multirow{2}{*}{\makecell{CI$(\beta_1)$\\cover}} & NTLL & \multirow{2}{*}{\makecell{CI$(\beta_1)$\\length}} & \multirow{2}{*}{\makecell{CI$(\beta_1)$\\cover}} & NTLL & \multirow{2}{*}{\makecell{CI$(\beta_1)$\\length}} & \multirow{2}{*}{\makecell{CI$(\beta_1)$\\cover}} & NTLL\\ 
  & $n$ & \\
     \midrule
 \multirow{3}{*}{\makecell{beta\\regression,\\cobit}} & $100$ &0.430 & 95.4\% &  -0.501 & 0.394 & 94.1\%& -0.551 & 0.472 & 85.6\% & -0.363 & 0.384 & 95.6\% & -0.578\\
   & $400$ & 0.214 & 94.3\% & -0.515 & 0.197 & 91.3\% & -0.559 & 0.237 & 75.2\%& -0.367 & 0.192 & 89.5\% & -0.581 \\
   & $1600$ &  0.108 & 95.0\% & -0.517 & 0.099 & 79.0\% & -0.566 & 0.118 & 32.5\% & -0.378 & 0.096 & 70.7\% & -0.586\\
    \cmidrule(lr){1-14}
  \multirow{3}{*}{\makecell{cobin\\regression,\\cobit}} & $001$ & 0.422 & 94.1\% & -0.473 &0.384 & 94.0\% & -0.563 & 0.484 & 93.0\% & -0.386 & 0.370 & 95.8\% & -0.594 \\
    & $400$ & 0.209 & 93.8\% & -0.481 &0.191 & 94.6\% &-0.574 & 0.243 & 91.5\% & -0.393 & 0.183 & 96.5\% & -0.598\\
    & $1600$ & 0.105 & 94.3\% & -0.483 & 0.095 & 95.6\%& -0.580 &0.122 & 92.0\% & -0.402 & 0.092 & 96.3\% & -0.603\\
    \midrule
 \multirow{3}{*}{\makecell{beta\\regression,\\logit}} & $100$ & 0.322 & 95.0\% & -0.555 & 0.297 & 94.2\% & -0.599 & 0.360 & 87.9\% & -0.415 & 0.291 & 94.3\% & -0.617\\
    & $400$ & 0.161 & 94.7\% & -0.568 & 0.148 & 78.9\% & -0.606 & 0.180 & 72.8\% & -0.427 & 0.145 & 80.3\% & -0.623\\
    & $1600$ & 0.080 & 94.0\% & -0.570 &0.074 & 32.0\% & -0.607 &0.090 & 28.2\% & -0.431 & 0.073 & 26.9\% & -0.628\\
    \cmidrule(lr){1-14}
 \multirow{3}{*}{\makecell{cobin\\regression,\\logit}} & $100$ & 0.298 & 85.6\% & -0.483 & 0.263 & 96.1\% & -0.625 & 0.344 & 85.3\% & -0.397 & 0.254 & 92.0\% & -0.635\\
    & $400$ & 0.149 & 86.7\% & -0.498 & 0.130 & 92.8\% & -0.635 & 0.174 & 83.9\% & -0.414 & 0.126 & 94.1\% & -0.645\\
    & $1600$ & 0.074 & 84.7\% & -0.505 & 0.065 & 95.7\% & -0.635 & 0.088 & 82.5\% & -0.418 & 0.063 & 94.2\% & -0.648\\
    \bottomrule
    \end{tabular}
    \begin{flushleft} 
{\footnotesize Monte Carlo standard errors are all less than 0.002 for CI length, 1.5\% for cover, and 0.008 for NTLL across all settings.}  
\end{flushleft}
\end{table}

We conduct an additional simulation study to evaluate the predictive performance of cobin regression when even the mean structure is misspecified. We fitted two different logit link models (beta, cobin) to four different datasets (beta, cobin, beta rectangular, beta mixture) generated from a cobit link model. We also performed the reverse, by fitting two different cobit link models (beta, cobin) to four different datasets (beta, cobin, beta rectangular, beta mixture) generated from a logit link model. 

Furthermore, we compare with the data transformation approach by fitting a robust linear model to the transformed data on the real line and then back-transforming the prediction, which also represents the case when the mean structure is misspecified. Specifically, we applied the cobit transformation $g_{\mathrm{cobit}}:(0,1)\to \mathbb{R}$ to the data generated under a logit link (across four different response distributions) and fitted a robust linear model, and also applied the logit transformation to data generated under a cobit link (also across four different response distributions). On the transformed responses $\tilde{y}_i$, robust linear model is fit using the \texttt{GAMLSS} R package \citep{Stasinopoulos2007-th} assuming Student-t errors (with 3 degrees of freedom, fixed), to obtain $\hat{\bm\beta}_{\mathrm{trans}}$ and $\hat{\sigma}_{\mathrm{trans}}$ from the model $\tilde{y}_i\mid \bmx_i \indsim t_3(\bmx_i^\T \bm\beta_{\mathrm{trans}}, \sigma_{\mathrm{trans}})$ for $i=1,\dots,n$, where $t_3(\mu,\sigma)$ denotes a location-scale parameterization of t distribution. Predictive performance was evaluated on the original response scale. Specifically, negative test log likelihood (NTLL) was computed using the inverse transformed $t_3$ density supported on $(0,1)$ evaluated at the test data, and the calculation of MSPE was also based on the inverse-transformed prediction. For example, when the logit transformation is used for data transformation, MSPE was computed using $\hat{y}_{\mathrm{test}} = 1/(1 + \exp(-\bmx_{\mathrm{test}}^\T \hat{\bm\beta}_{\mathrm{trans}})) \in (0,1)$, following standard practice in the transformation-based approach.

Table~\ref{table:simul_consistency_supp2} summarizes the results in terms of negative test log-likelihood (NTLL) and mean squared prediction error (MSPE) on held-out data of size $n/4$ across 500 replicates. Overall, except when the true data-generating distribution is beta, cobin regression exhibits comparable predictive performance relative to beta regression, even under link misspecification. In particular, cobin regression provides better predictive accuracy than beta regression in terms of NTLL when the data are generated from the three-point beta mixture model, which represents a more complex and multimodal scenario.

From Table~\ref{table:simul_consistency_supp2}, the transformation-based approach generally performs worse than both beta and cobin regression. In particular, predictive accuracy notably deteriorates in two settings: (1) data from a logit link with a beta response and (2) data from a logit link with a beta rectangular response. These scenarios correspond to situations where boundary-proximate responses are more common (see Figure~\ref{fig:simul1_data}), highlighting the well-known sensitivity of the transformation-based approach to boundary-proximate observations, a limitation that persists even when a heavy-tailed error distribution is employed.

\begin{table}
\scriptsize
\caption{Predictive performance of beta regression, cobin regression, and a data transformation approach under misspecified mean settings. Results are based on 500 simulation replicates.}
    \label{table:simul_consistency_supp2}
    \centering
    \begin{tabular}{c c c c c c c c c c c}
    \toprule
 &  &  & \multicolumn{2}{c}{Beta data}  & \multicolumn{2}{c}{Cobin data} & \multicolumn{2}{c}{Betarec data}  & \multicolumn{2}{c}{Beta3mix data}\\
     \cmidrule(lr){4-5}\cmidrule(lr){6-7} \cmidrule(lr){8-9}\cmidrule(lr){10-11}
    & Method & $n$ & NTLL & MSPE$^\dagger$ & NTLL & MSPE$^\dagger$ & NTLL & MSPE$^\dagger$ & NTLL & MSPE$^\dagger$\\ 
     \midrule
 \multirow{9}{*}{\makecell{Data from\\cobit link}}   & \multirow{3}{*}{\makecell{beta regression,\\logit link}} & $100$ & -0.492 & 0.053& -0.538 & 0.055 & -0.360 & 0.093 & -0.570 & 0.051\\
   & & $400$ & -0.509 & 0.024 & -0.555 & 0.029 & -0.360 & 0.046 & -0.576 & 0.029\\
   & & $1600$ & -0.513 & 0.016 & -0.557 & 0.021 & -0.372 & 0.036& -0.579 & 0.022\\
    \cmidrule(lr){2-11}
   & \multirow{3}{*}{\makecell{cobin regression,\\logit link}} & $100$ & -0.462 & 0.064  & -0.550 & 0.055 & -0.377 & 0.076 & -0.585 & 0.051\\
   & & $400$ & -0.471 & 0.031 & -0.568 & 0.029 & -0.383 & 0.037 & -0.592 & 0.027\\
   & & $1600$ & -0.476 & 0.022 & -0.572 & 0.021 & -0.392 & 0.024 & -0.596 & 0.020\\
       \cmidrule(lr){2-11}
   & \multirow{3}{*}{\makecell{logit transform,\\linear regression \\ with $t_3$ error}} & $100$ & -0.451 & 0.087 & -0.522 & 0.062 &  -0.426 & 0.118 & -0.544 & 0.057\\
   & & $400$ & -0.465 & 0.040 & -0.522 & 0.062 &-0.428&  0.076& -0.550 & 0.023\\
   & & $1600$ & -0.471 & 0.029 & -0.538 & 0.025 & -0.436 & 0.066 & -0.551 & 0.015 \\
    \midrule
 \multirow{9}{*}{\makecell{Data from\\logit link}}  &\multirow{3}{*}{\makecell{beta regression,\\cobit}} & $100$ & -0.552 & 0.059 & -0.599 & 0.054 & -0.409 & 0.080 & -0.609 & 0.052\\
   & & $400$ & -0.557 & 0.028 & -0.608 & 0.026 & -0.422 & 0.035 & -0.620 & 0.025\\
   & & $1600$ & -0.561 & 0.020 & -0.608 & 0.018 & -0.426 & 0.023 & -0.628 & 0.018\\
    \cmidrule(lr){2-11}
   & \multirow{3}{*}{\makecell{cobin regression,\\cobit}} & $100$ & -0.485 & 0.065 & -0.612 & 0.058 & -0.397 & 0.081 & -0.618 & 0.056\\
   & & $400$ &  -0.488 & 0.031 & -0.622 & 0.029 & -0.408 & 0.037 & -0.632 & 0.028\\
   & & $1600$ & -0.497 & 0.023 & -0.622 & 0.023 & -0.409 & 0.024 & -0.636 & 0.022\\
    \cmidrule(lr){2-11}
      & \multirow{3}{*}{\makecell{cobit transform,\\linear regression \\ with $t_3$ error}} & $100$ & -0.233 & 0.125 & -0.488 & 0.124 &-0.186 & 0.216 & -0.491 & 0.099\\
   & & $400$ & -0.236 & 0.069&-0.502 & 0.084 & -0.191 & 0.152 & -0.503 & 0.055\\
   & & $1600$ & -0.247 & 0.059 & -0.502 & 0.078& -0.196 & 0.137 & -0.506 & 0.047\\
    \bottomrule
    \end{tabular}
    \begin{flushleft} 
{\footnotesize $^\dagger$MSPE are scaled by 100. Monte Carlo standard errors of NTLL are all less than 0.012 for $n=100$, 0.006 for $n=400$, and 0.003 for $n=1600$. Monte Carlo standard errors of MSPE scaled by 100 are all less than 0.007.}  
\end{flushleft}
\end{table}

\subsection{Robustness and scalability under spatial regression models}
\label{appendix:simul_spatial}

We first describe the prior specification of dispersion parameters under three different regression models. 
First, for micobin regression, we consider the $\psi\sim \mathrm{Beta}(2,2)$ prior, reflecting the prior belief that $E\{\var(Y)\} = 0.5V(\mu)$ for a given mean $\mu = E(Y)$. For the prior of $\lambda$ for cobin regression, we consider $p(\lambda) = 36\lambda\Gamma(\lambda+1)/\Gamma(\lambda+5)$ for $\lambda = 1,2,\dots$, which is derived from the marginal distribution of $\lambda$ when $(\lambda-1)\mid \psi \sim \mathrm{Negbin}(2,\psi)$ and $\psi\sim \mathrm{Beta}(2,2)$. In practice, we truncate the support of $\lambda$ at some large upper bound $L$, where we choose $L=70$ throughout the paper. 

The precision parameter $\phi$ in the beta regression model (the sum of two beta shape parameters) satisfies $\var(Y) = \mu(1-\mu)/(1+\phi)$. While there are several different choices of $p(\phi)$ available, we choose squared uniform distribution $\phi\stackrel{d}{=} U^2, U\sim \mathrm{Unif}(0,A)$ for some $A>0$, following the suggestion of \citet{Figueroa-Zuniga2013-pd}. To match the prior belief of $E(\psi) = 1/2$ so that $E(\var(Y)) = 0.5V(\mu)$, since $\mu(1-\mu)/3$ and $V(\mu)$ operates on a similar scale (see Figure~\ref{fig:linkft}), we choose $A = 8.74$ so that the prior of $\phi$ satisfies $E(3/(1+\phi)) \approx 1/2$. The beta regression model is fitted with \texttt{Stan} version 2.32.2 along with rstan version 2.32.6, with all default options using No-U-Turn sampler. 

Next, we describe how negative test log-likelihood (negtestLL) and mean squared prediction error (MSPE) are calculated. Let $\{\bm\theta^{(m)}\}_{m=1}^M$ be a set of parameter samples from MCMC output. For each parameter sample $\bm\theta^{(m)}$, posterior predictive samples at new locations $\{s_i^*\}$ can be generated as $\eta(s_i^*)^{(m)} = \bmx(s_i^*)^\T\bm\beta^{(m)} + u(s_i^*)^{(m)}$ for $i=1,\dots,n_{\mathrm{test}}$, where $(u(s_1^*), \dots, u(s_{n_{\mathrm{test}}}^*))^{(m)}$ is drawn from a multivariate normal conditioned on the $m$th sample of the spatial random effects at $n_{\mathrm{train}}$ locations. Based on the true mean $\mu(s^*_i)$ and held-out realizations $y(s_i^*)$ for $i=1,\dots,n_{\mathrm{test}}$, the negative test log-likelihood based on the assumed model $p(y(s_i^*)\mid \bm\theta^{(m)})$ (beta, cobin, micobin) is calculated as
\[
\text{negtestLL} = -\frac{1}{n_{\mathrm{test}}}\sum_{i=1}^{n_{\mathrm{test}}} \log \left\{\frac{1}{M}\sum_{m=1}^Mp(y(s_i^*)\mid \bm\theta^{(m)})\right\}.
\]
The MSPE is calculated as
\[
\text{MSPE} = \frac{1}{n_{\mathrm{test}}}\sum_{i=1}^{n_{\mathrm{test}}}\{\mu(s_i^*) - \hat\mu(s_i^*)\}^2
\]
where $\hat\mu(s_i^*)$ is a posterior mean estimate of the conditional mean at a new location $s_i^*$, i.e. $M^{-1}\sum_{m=1}^M B'(\eta(s_i^*)^{(m)})$. The reported quantities in Table~\ref{table:sim_spatial} are averaged over 100 replicates. Computations are carried out under the Intel(R) Xeon(R) Gold 6336Y 2.40GHz CPU environment.

\begin{table}
\footnotesize
\caption{Bayesian spatial regression simulation results based on 100 replicates, under the correct mean structure but misspecified distributions.}
    \label{table:sim_spatial_supp}
    \centering
    \begin{tabular}{c c c c c c}
    \toprule
  & & \multicolumn{2}{c}{$\rho = 0.1$ (moderate)} & \multicolumn{2}{c}{$\rho = 0.2$ (strong)}   \\
    \cmidrule(lr){3-4}\cmidrule(lr){5-6} 
    Method  & $(n_{\mathrm{train}},n_{\mathrm{test}})$ & CI$_{.95}(\beta_1)$ length  & CI$_{.95}(\beta_1)$ coverage &   CI$_{.95}(\beta_1)$ length  & CI$_{.95}(\beta_1)$ coverage\\ 
     \midrule
     \multirow{2}{*}{\makecell{beta\\
    regression}} & $(200,50)$  & 0.388 & 89.0\% & 0.366 & 85.0\% \\
     & $(400,100)$ & 0.264 & 85.0\% & 0.252 & 80.0\% \\
    \cmidrule(lr){1-6}
    \multirow{2}{*}{\makecell{cobin\\
    regression}} & $(200,50)$  & 0.353 & 93.0\% & 0.348 & 95.0\%\\
     & $(400,100)$ & 0.246 & 94.0\% & 0.242 & 90.0\% \\
     \cmidrule(lr){1-6}
     \multirow{2}{*}{\makecell{micobin\\
    regression}} & $(200,50)$  & 0.352 & 91.0\% & 0.348 & 97.0\% \\
     & $(400,100)$ & 0.244 & 89.0\% & 0.240 & 81.0\%\\
    \bottomrule
    \end{tabular}
\begin{flushleft} 
{\footnotesize Monte Carlo standard errors are all less than 0.004 for 95\% CI length and 4\% for coverage.}  
\end{flushleft}
\end{table}

Table~\ref{table:sim_spatial_supp} provides additional summaries of the spatial regression simulation results in terms of 95\% credible interval lengths and empirical coverage. Among the beta, cobin, and micobin models, cobin regression achieves empirical coverage closest to the nominal 0.95 level, also suggesting that it is better suited for parameter inference under potential misspecification.

\subsection{Robustness to the presence of outliers}
\label{appendix:simul_outlier}

Figure~\ref{fig:simul_outlier_data} depicts two different data-generating distributions (beta, cobin), where the true mean $\mu_i = g^{-1}(\bmx_i^\T\bm\beta) = g_{\mathrm{cobit}}^{-1}(x_{i1})$ varies from the 1\%  quantile of $x_{i1}$ to the 99\% quantile of $x_{i1}$. It also describes an outlier $y^\circ = 10^{-3}$ and the conditional distribution of the response based on the  value of $\bmx^\circ = (6,6)^\T$, which corresponds to the conditional mean of 0.5. It also describes representative quantile residual plots \citep{Dunn1996-gb} for four different models fitted to the beta data with $y^\circ = 10^{-3}$, illustrating that an outlier exhibits poor goodness of fit under the beta and cobin regression models. 

For prior specification of parameters not related to the mean, we followed the same settings as in the spatial regression simulation. Specifically, we used $p(\lambda) \propto \lambda \Gamma(\lambda + 1)/\Gamma(\lambda + 5)$ for $\lambda = 1, 2, \dots, L = 70$ in the cobin model; a $\mathrm{Beta}(2, 2)$ prior for $\psi$ in the micobin model; $\phi \stackrel{d}{=} U^2$ with $U \sim \mathrm{Unif}(0, 8.74)$ for the beta model; and for the beta rectangular model, $\phi \stackrel{d}{=} U^2$ with $U \sim \mathrm{Unif}(0, 8.74)$ and $\alpha \sim \mathrm{Unif}(0,1)$ independently.  Computations are also carried out under the Intel(R) Xeon(R) Gold 6336Y 2.40GHz CPU environment.

\begin{figure}
    \centering
    \includegraphics[width=.9\linewidth]{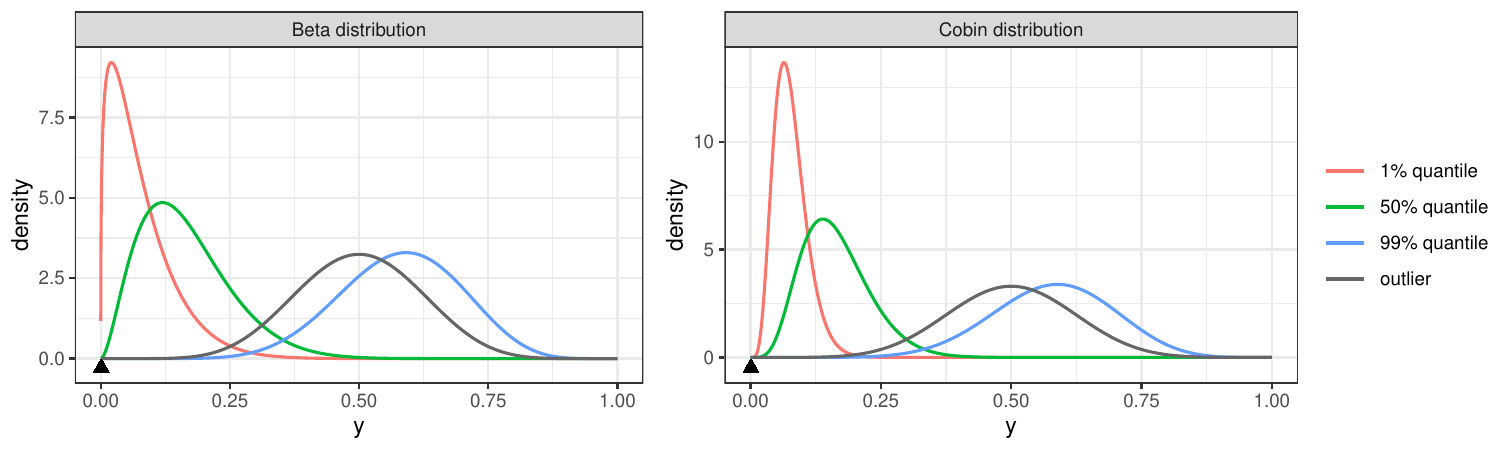}
    \includegraphics[width=.9\linewidth]{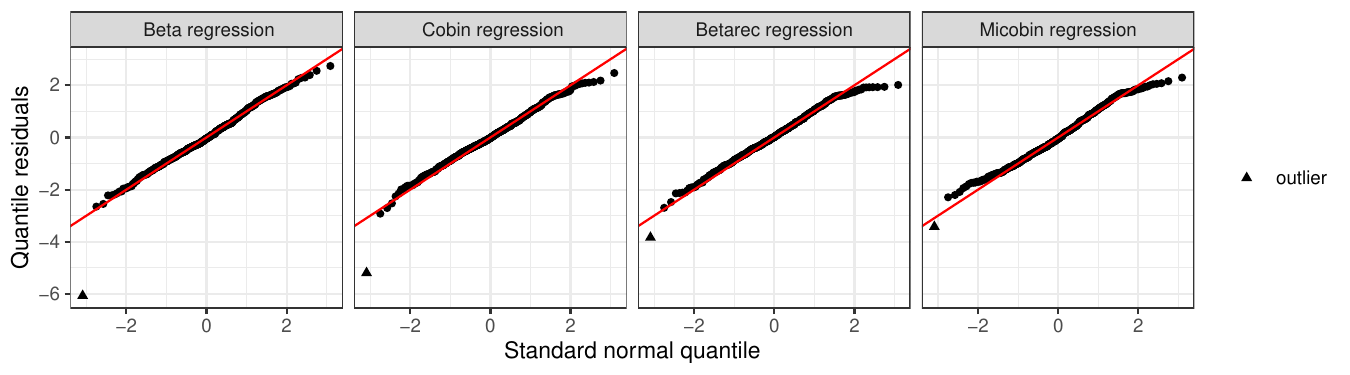}
    \caption{(Top) Data-generating distributions used in the outlier robustness simulation study. Density plots are shown for the 1\% and 99\% quantiles of the covariate distribution. Outlier density corresponds to the conditional distribution of response given $\bmx^\circ=(6,6)^\T$, and a triangle represents $y^\circ = 10^{-3}$.   
    (Bottom) Quantile residual plots from four different models fitted on beta data with an outlier $y^\circ = 10^{-3}$, the quantile residual corresponding to the outlier is marked as a triangle. }
    \label{fig:simul_outlier_data}
\end{figure}

\begin{table}
\footnotesize
\caption{Outlier robustness simulation results averaged over 100 replicates, summarized in terms of posterior mean estimates and sampling efficiency.}
    \label{tab:outliersim_supp}
    \centering
    \begin{tabular}{c c c c c c c c c c}
    \toprule
\multirow{2}{*}{Method} &  \multirow{2}{*}{Setting}&  \multicolumn{2}{c}{$\beta_0^{\mathrm{true}}=-6$} & \multicolumn{2}{c}{$\beta_1^{\mathrm{true}}=1$}   & \multicolumn{2}{c}{$\beta_2^{\mathrm{true}}=0$} & \multicolumn{2}{c}{ Sampling $(\bm\beta)$} \\
    \cmidrule(lr){3-4} \cmidrule(lr){5-6}\cmidrule(lr){7-8}\cmidrule(lr){9-10}
     & & $\hat\beta_0^\circ$ & $\hat\beta_0$ & $\hat\beta_1^\circ$ & $\hat\beta_1$  & $\hat\beta_2^\circ$ & $\hat\beta_2$ & mESS & time (sec)  \\  
 \midrule
\multirow{2}{*}{\makecell{beta\\regression}} 
  & A1 & -5.899 & \multirow{2}{*}{-5.995} & 0.938 & \multirow{2}{*}{0.999} & -0.061 & \multirow{2}{*}{-0.004} & 4629.6 & 9.5 \\
  & A2 & -5.850 &  & 0.908 &  & -0.087 &  & 4650.4 & 9.1 \\
\cmidrule(lr){1-10}
\multirow{2}{*}{\makecell{cobin\\regression}} 
  & A1 & -5.993 & \multirow{2}{*}{-5.993} & 0.982 & \multirow{2}{*}{0.999} & -0.025 & \multirow{2}{*}{-0.004} & 1847.2 & 25.0 \\
  & A2 & -5.994 &  & 0.982 &  & -0.025 &  & 1846.3 & 23.8 \\
\cmidrule(lr){1-10}
\multirow{2}{*}{\makecell{betarec\\regression}} 
  & A1 & -5.916 & \multirow{2}{*}{-5.955} & 0.986 & \multirow{2}{*}{0.992} & -0.004 & \multirow{2}{*}{-0.004} & 4727.1 & 22.7 \\
  & A2 & -5.916 &  & 0.986 &  & -0.004 &  & 4732.8 & 21.8 \\
\cmidrule(lr){1-10}
\multirow{2}{*}{\makecell{micobin\\regression}} 
  & A1 & -5.929 & \multirow{2}{*}{-5.928} & 0.981 & \multirow{2}{*}{0.986} & -0.011 & \multirow{2}{*}{-0.004} & 1282.4 & 73.9 \\
  & A2 & -5.926 &  & 0.981 &  & -0.010 &  & 1302.9 & 72.8 \\
\midrule  
\multirow{2}{*}{\makecell{beta\\regression}} 
  & B1 & -5.835 & \multirow{2}{*}{-5.934} & 0.915 & \multirow{2}{*}{0.977} & -0.056 & \multirow{2}{*}{0.002} & 4656.7 & 9.6 \\
  & B2 & -5.785 &  & 0.884 &  & -0.083 &  & 4747.7 & 9.3 \\
\cmidrule(lr){1-10}
\multirow{2}{*}{\makecell{cobin\\regression}} 
  & B1 & -6.010 & \multirow{2}{*}{-6.010} & 0.990 & \multirow{2}{*}{1.006} & -0.018 & \multirow{2}{*}{0.003} & 1849.2 & 35.4 \\
  & B2 & -6.010 &  & 0.989 &  & -0.018 &  & 1849.4 & 33.9 \\
\cmidrule(lr){1-10}
\multirow{2}{*}{\makecell{betarec\\regression}} 
  & B1 & -5.822 & \multirow{2}{*}{-5.866} & 0.958 & \multirow{2}{*}{0.965} & 0.001 & \multirow{2}{*}{0.001} & 4408.8 & 24.3 \\
  & B2 & -5.821 &  & 0.958 &  & 0.001 &  & 4346.7 & 23.0 \\
\cmidrule(lr){1-10}
\multirow{2}{*}{\makecell{micobin\\regression}} 
  & B1 & -6.003 & \multirow{2}{*}{-6.003} & 1.002 & \multirow{2}{*}{1.005} & -0.002 & \multirow{2}{*}{0.002} & 1139.2 & 89.4 \\
  & B2 & -6.003 &  & 1.002 &  & -0.002 &  & 1154.0 & 86.9 \\
\bottomrule
    \end{tabular}
\begin{flushleft} 
{\footnotesize Monte Carlo standard errors are all less than 0.019 for  $\hat{\beta}^\circ_0$ and $\hat{\beta}_0$, 0.005 for $\hat{\beta}^\circ_0$ and $\hat{\beta}_0$, 0.004 for $\hat{\beta}^\circ_0$ and $\hat{\beta}_0$, 43.5 for mESS, and 1.9 sec for runtime.}  
\end{flushleft}
\end{table}

In addition to parameter stability and empirical coverage, Table~\ref{tab:outliersim_supp} summarizes the results in terms of individual parameter estimates (with and without $y^\circ$) and sampling efficiency of $\bm\beta$. These results further support the robustness of cobin and micobin regression methods in the presence of outliers, since beta regression exhibits substantial bias. 
Notably, the results suggest that the poor goodness of fit of the model to the outlying observations, which is evident in both beta and cobin regression, does not necessarily lead to degraded parameter stability, as illustrated by the stable cobin regression estimates across settings. 
Although the gains in sampling efficiency through Kolmogorov-Gamma augmentation are less pronounced in simpler fixed-effects models, where the runtime of beta and beta rectangular regression models implemented in \texttt{Stan} (excluding compilation time) is relatively short, they become more substantial in hierarchical settings, as demonstrated in the previous spatial regression simulation study. 

\section{Additional information for MMI data analysis}
\label{appendix:mmi}
\subsection{Data processing and detailed settings}
\label{appendix:mmi_data}
The benthic macroinvertebrate multivariate index (MMI) data from the 2017 National Lakes Assessment Survey \citep{US-Environmental-Protection-Agency2022-dp} and lake watershed covariates from LakeCat data \citep{Hill2018-eq} are joined based on the unique identifier of the lake (COMID). Table~\ref{table:mmicovariate} outlines the details of the 9 selected LakeCat covariates.

\begin{table}
    \caption{Description of covariates from LakeCat data \citep{Hill2018-eq}.  NLCD refers to the National Land Cover Dataset. Watershed refers to a set of hydrologically aggregated catchments (including lake waterbody surface area) that represent the full contributing landscape area to a downslope lake. All variables are $\log_2(x+1)$ transformed. }
\footnotesize
    \centering
    \begin{tabular}{l l l}
    \toprule
    Variable (unit) & Brief and detailed description\\
    \midrule
    \makecell[tl]{agkffact \\ (unitless)} & \makecell[tl]{\textit{Ag soil erodibility Kf factor}\\ Mean of state soil geographic Kf factor raster on agricultural land (NLCD 2006) within \\ the
    watershed. Kf factor is a soil erodibility factor which quantifies the susceptibility \\
    of soil particles to detachment and movement by water. This factor is used in the \\ universal soil loss equation to calculate soil loss by water}\\
    \midrule
    \makecell[tl]{bfi\\(\%)} & \makecell[tl]{\textit{Base flow index} \\ The component of streamflow that can be attributed to ground-water discharge into \\
    streams. bfi is the ratio of base flow to total flow within the watershed}\\
    \midrule
    \makecell[tl]{cbnf\\(kgN$\cdot$ha$^{-1}\cdot$yr$^{-1}$)} & \makecell[tl]{\textit{Cultivated Biological Nitrogen Fixation Mean Rate}\\ 
   Mean rate of biological nitrogen fixation from the cultivation of crops \\
   within the watershed} \\
   \midrule
    \makecell[tl]{conif\\ (\%)} & \makecell[tl]{\textit{Evergreen forest percentage} \\
    \% of watershed area classified as evergreen forest land cover (NLCD class 42) in 2016}\\
    \midrule
    \makecell[tl]{crophay\\ (\%)} & \makecell[tl]{\textit{Row crop and pasture/hay percentage} \\ \% of watershed area classified as crop and hay land use (NLCD classes 81, 82) in 2016}\\
    \midrule
    \makecell[tl]{fert \\ (kgN$\cdot$ha$^{-1}\cdot$yr$^{-1}$)} & \makecell[tl]{\textit{Synthetic nitrogen fertilizer application mean rate}\\
    Mean rate of synthetic nitrogen fertilizer application to agricultural land \\ within the watershed}\\
    \midrule
    \makecell[tl]{manure \\ (kgN$\cdot$ha$^{-1}\cdot$yr$^{-1}$) } & \makecell[tl]{ \textit{Mean manure application rate} \\Mean rate of manure application to agricultural land from confined animal feeding \\operations within the watershed}\\
    \midrule
    \makecell[tl]{pestic97 \\ (kg/km$^2$)} & \makecell[tl]{\textit{Mean pesticide use} \\ Mean pesticide use in year 1997 within the watershed}\\
    \midrule
    \makecell[tl]{urbmdhi\\(\%)}& \makecell[tl]{\textit{Developed, medium and high intensity land use percentage} \\\% of watershed area classified as developed, medium and high intensity land use \\(NLCD classes 23, 24) in 2016}\\
    \bottomrule
    \end{tabular}
    \label{table:mmicovariate}
\end{table}

We standardized all (log-transformed) covariates to have a mean of 0 and a variance of 1 prior to running MCMC, and transformed back to the original scale after sampling. 
Based on these standardized covariates, we used the same prior as in the simulation study. That is, $\bm\beta\sim N_p(\bm{0}, 100^2I_p)$ for regression coefficients, a standard half-Cauchy prior on the spatial random effect standard deviation $\sigma_u$, beta dispersion $\phi\stackrel{d}{=} U^2, U\sim \mathrm{Unif}(0,8.74)$, cobin dispersion $p(\lambda) \propto \lambda\Gamma(\lambda+1)/\Gamma(\lambda+5)$ for $\lambda = 1,2,\dots$, and micobin dispersion $\psi\sim \mathrm{Beta}(2,2)$. We set the upper bound of $\lambda$ to be $L=70$. All computations for the MMI data analysis were performed on an Apple M1 CPU.

The MMI data analysis involves an NNGP prior for the spatial random effects. To enable nearest neighbor calculation using the \texttt{spNNGP} \texttt{R} package \citep{Finley2022-pg}, we used the WGS 84 / UTM zone 15N (EPSG:32615) coordinate system. The NNGP prior requires the specification of an ordering and a number of neighborhoods. Following the default setting of \texttt{spNNGP} \texttt{R} package, we use coordinate-based ordering and use 15 nearest neighbors. We use an exponential covariance kernel $\cov\{u(s), u(s')\} = \sigma_u^2\exp(-\|s-s'\|/\rho)$ in the construction of the NNGP prior. We fixed the spatial range parameter at $\rho = 200$km, leading to an effective range of approximately 600km, the distance where the spatial correlation is below 0.05 (before NNGP approximation). This specification is consistent with the analysis of \citet{Fox2020-xt}, where the empirical spatial autocorrelation of stream benthic macroinvertebrate MMI in the US was found to be close to zero for distances beyond 580km.

\subsection{Quantile residual analysis for identifying potential outliers}
\label{appendix:qresidual}

\begin{figure}
    \centering
    \includegraphics[width=\linewidth]{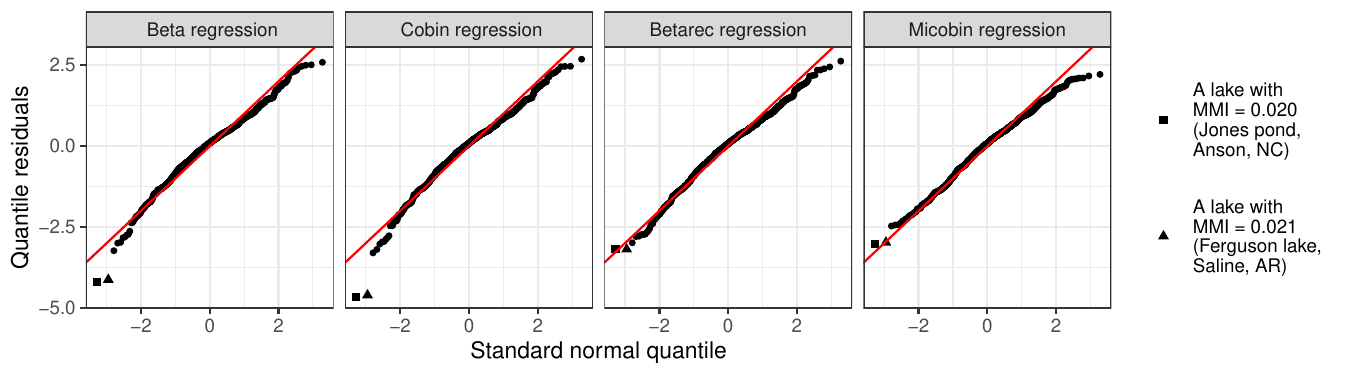}
    \caption{Comparison of quantile residuals for goodness-of-fit assessment, along with two observations corresponding to the lowest quantiles. The red line corresponds to the $y=x$ line.}
    \label{fig:qresidplot}
\end{figure}

We first present the quantile residual \citep{Dunn1996-gb} analysis results from four different models based on the dataset $\calD^\circ$, where quantile residuals are computed conditional on the spatial random effects, with model parameters fixed at the posterior mean (posterior median of $\lambda$ for cobin). 
Figure~\ref{fig:qresidplot} shows that beta and cobin regression exhibit a lack of fit in the left tail, whereas beta rectangular and micobin regression capture the left tail accurately but show slightly degraded fit in the right tail.

Further inspection reveals that the two observations with the lowest quantile residuals are identical for all four models, with MMI values of 0.02 and 0.021. This suggests that these lakes are not well explained by the beta and cobin models, as their low MMI values fall outside the expected range of residual variation under those models. In contrast, beta rectangular and micobin regression accommodate these low values more naturally, but at the same time imply greater dispersion in the right tail than what is observed in the data. This reflects the fundamental characteristic of the beta rectangular and micobin models based on mixture formulation, which allows for improved robustness to extreme values, but at the cost of less sharp probabilistic predictions. See also Figure~\ref{fig:simul_outlier_data} for similar behavior of quantile residuals. 

In summary, the quantile residual analysis identifies two low MMI lakes as potential outliers under the beta and cobin regression models. One of the key goals of the MMI data analysis is to evaluate and compare the robustness of the models in the presence of such outliers.

\subsection{Additional summary and figures}
\label{appendix:mmi_results}
\begin{table}[t]
\scriptsize
    \caption{Estimated fixed-effect coefficients $(\hat{\bm\beta}^\star)$ from two spatial regression models (beta rectangular, micobin) fitted on $\calD^\star (n = 950)$ and 95\% credible intervals. Bold entries indicate non-intercept coefficients whose 95\% credible intervals exclude zero.}
    \label{table:mmiresult_n950}
    \centering
\begin{tabular}{l c c c c  }
\toprule
 & \multicolumn{2}{c}{Betarec regression ($\|\hat{\bm\beta}^\star - \hat{\bm\beta}\|_2 = 0.103$ )}  & \multicolumn{2}{c}{Micobin regression ($\|\hat{\bm\beta}^\star - \hat{\bm\beta}\|_2 = 0.048$ )} \\
\cmidrule(lr){2-3} \cmidrule(lr){4-5}
Variable & $\hat\beta_j^\star$ ($n=950$) & $\hat\beta_j$  ($n=947$) & $\hat\beta_j^\star$ ($n=950$) & $\hat\beta_j$  ($n=947$)\\
\midrule
(Intercept) & -1.596 (-3.385, 0.156) & -1.696 (-3.543, 0.092) & -1.758 (-3.517, -0.040) & -1.741 (-3.520, 0.018)\\
agkffact & \textbf{-3.072} (-5.916, -0.288) & \textbf{-3.082} (-5.961, -0.200) &\textbf{-3.456} (-6.175, -0.794) & \textbf{-3.494} (-6.220, -0.822) \\
bfi & 0.202 (-0.113, 0.530) & 0.220 (-0.101, 0.551) & 0.219 (-0.100, 0.537) & 0.219 (-0.097, 0.540) \\
cbnf & 0.200 (-0.033, 0.437) & 0.200 (-0.036, 0.437) & 0.187 (-0.040, 0.415) & 0.197 (-0.034, 0.424) \\
conif & \textbf{0.107} (0.026, 0.186) & \textbf{0.103} (0.023, 0.184) & \textbf{0.128} (0.048, 0.208)& \textbf{0.125} (0.045, 0.204) \\
crophay & -0.036 (-0.202, 0.132) & -0.042 (-0.209, 0.124) & -0.060 (-0.222, 0.101) & -0.050 (-0.210, 0.110) \\
fert & -0.127 (-0.357, 0.100) & -0.120 (-0.348, 0.105) & -0.071 (-0.296, 0.130) & -0.089 (-0.316, 0.135) \\
manure & -0.018 (-0.166, 0.130) & -0.013 (-0.165, 0.137) & -0.031 (-0.178, 0.115) & -0.022 (-0.167, 0.122) \\
pestic97 & -0.034 (-0.121, 0.053) & -0.034 (-0.121, 0.054) & -0.023 (-0.106, 0.059) & -0.027 (-0.110, 0.055) \\
urbmdhi & \textbf{-0.162} (-0.266, -0.058) & \textbf{-0.172} (-0.278, -0.067) & \textbf{-0.141} (-0.243, -0.038) & \textbf{-0.143} (-0.242, -0.043) \\
\midrule
mESS$(\bm\beta)$/t & 3711.9/117 mins & 3211.1/96 mins & 2995.2/5 mins & 3012.6/5 mins\\
PSIS-LOO& -1101.6$(\calD^\star)$ &-1124.8$(\calD)$ & -1107.2$(\calD^\star)$ &-1126.9$(\calD)$\\
WAIC &-1106.4$(\calD^\star)$ & -1131.3$(\calD)$ & -1111.0$(\calD^\star)$ & -1131.1$(\calD)$\\
\bottomrule
\end{tabular}
    \vspace{-3mm}
\begin{flushleft} 
{\scriptsize 
agkffact, soil erodibility factor; bfi, base flow index; cbnf, cultivated biological N fixation; conif, coniferous forest cover; crophay, crop/hay land cover; fert, synthetic N fertilizer use; manure, manure application; pestic97, 1997 pesticide use; urbmdhi, medium/high-density urban land cover; mESS$(\bm\beta)$, t, multivariate effective sample size and wall-clock fitting time (in mins), averaged over 3 chains; PSIS-LOO, leave-one-out cross-validation estimate using Pareto smoothed importance sampling; WAIC, widely applicable information criterion. All variables are $\log_2(x+1)$ transformed. 
}
\end{flushleft}
\end{table}

\begin{table}[t]
\scriptsize
\caption{Estimated fixed-effect coefficients $(\hat{\bm\beta})$ from four spatial regression models fitted on $\calD (n = 947)$ and 95\% credible intervals. Bold entries indicate non-intercept coefficients whose 95\% credible intervals exclude zero.}
\label{table:mmiresult_n947}
\centering
\begin{tabular}{lcccc}
\toprule
 & Beta & Cobin & Beta rectangular & Micobin \\
\midrule
(Intercept) & -1.829 (-3.621, -0.070) & -1.756 (-3.531, 0.006) & -1.696 (-3.543, 0.092) & -1.741 (-3.520, 0.018) \\
agkffact    & \textbf{-3.088} (-5.982, -0.206)& \textbf{-3.150} (-6.019, 0.258) & \textbf{-3.082} (-5.961, -0.200) & \textbf{-3.494} (-6.220, -0.822) \\
bfi         & 0.244 (-0.075, 0.568) & 0.228 (-0.088, 0.552) & 0.220 (-0.101, 0.551) & 0.219 (-0.097, 0.540) \\
cbnf        & 0.191 (-0.044, 0.430) & 0.200 (-0.035, 0.437) & 0.200 (-0.036, 0.437) & 0.197 (-0.034, 0.424) \\
conif       & \textbf{0.096} (0.014, 0.175)& \textbf{0.103} (0.021, 0.183) & \textbf{0.103} (0.023, 0.184) & \textbf{0.125} (0.045, 0.204)\\
crophay     & -0.057 (-0.223, 0.110) & -0.053 (-0.218, 0.114) & -0.042 (-0.209, 0.124) & -0.050 (-0.210, 0.110) \\
fert        & -0.096 (-0.327, 0.135) & -0.104 (-0.329, 0.122) & -0.120 (-0.348, 0.105) & -0.089 (-0.316, 0.135) \\
manure      & -0.001 (-0.148, 0.148) & -0.009 (-0.157, 0.138) & -0.013 (-0.165, 0.137) & -0.022 (-0.167, 0.122) \\
pestic97    & -0.031 (-0.118, 0.057) & -0.030 (-0.119, 0.057) & -0.034 (-0.121, 0.054) & -0.027 (-0.110, 0.055) \\
urbmdhi     & \textbf{-0.180} (-0.283, -0.076) & \textbf{-0.169} (-0.275, 0.064)& \textbf{-0.172} (-0.278, -0.067)& \textbf{-0.143} (-0.242, -0.043) \\
\midrule
mESS$(\bm\beta)$/t & 3392.4/103 mins & 4748.7/4 mins & 3211.1/96 mins & 3012.6/5 mins \\
PSIS-LOO & -1125.2$(\calD)$ & -1134.7$(\calD)$ & -1124.8$(\calD)$ & -1126.9$(\calD)$ \\
WAIC     & -1134.7$(\calD)$ & -1142.4$(\calD)$ & -1131.3$(\calD)$ & -1131.1$(\calD)$ \\
\bottomrule
\end{tabular}
\vspace{-2mm}
\begin{flushleft}
{\scriptsize
agkffact, soil erodibility factor; bfi, base flow index; cbnf, cultivated biological N fixation; conif, coniferous forest cover; crophay, crop/hay land cover; fert, synthetic N fertilizer use; manure, manure application; pestic97, 1997 pesticide use; urbmdhi, medium/high-density urban land cover; mESS$(\bm\beta)$, t, multivariate effective sample size and wall-clock fitting time (in mins), averaged over 3 chains; PSIS-LOO, leave-one-out cross-validation estimate using Pareto smoothed importance sampling; WAIC, widely applicable information criterion. 
All variables are $\log_2(x+1)$ transformed. 
}
\end{flushleft}
\end{table}

\begin{table}[t]
\scriptsize
    \caption{Estimated average slopes (average marginal effects) from four spatial regression models fitted on $\calD^\circ (n = 949)$ and 95\% credible intervals. Bold entries indicate the average slopes whose 95\% credible intervals exclude zero.}
    \label{table:mmiresult_avgslope}
    \centering
\begin{tabular}{l c c c c  }
\toprule
 & Beta regression  & Cobin regression & Betarec regression & Micobin regression\\
\midrule
agkffact & -0.196 (-0.422, 0.025) & \textbf{-0.219} (-0.433, -0.000) & \textbf{-0.231} (-0.449, -0.015) & \textbf{-0.262} (-0.463, -0.061)  \\
bfi & \textbf{0.026} (0.001, 0.051) & 0.022 (-0.002, 0.046) & 0.016 (-0.008, 0.040) & 0.017 (-0.006, 0.041) \\
cbnf & 0.012 (-0.006, 0.031) & 0.014 (-0.004, 0.032) & 0.015 (-0.003, 0.033) & 0.014 (-0.003, 0.032)  \\
conif & 0.006 (-0.000, 0.012) & \textbf{0.007} (0.001, 0.013) & \textbf{0.008} (0.002, 0.014)& \textbf{0.009} (0.003, 0.015) \\
crophay & -0.006 (-0.019, 0.007) & -0.005 (-0.018, 0.008) & -0.003 (-0.016, 0.010) & -0.004 (-0.016, 0.008) \\
fert & -0.006 (-0.023, 0.012) & -0.007 (-0.024, 0.010) & -0.010 (-0.027, 0.007) & -0.006 (-0.023, 0.010)  \\
manure & -0.004 (-0.015, 0.008) & -0.003 (-0.014, 0.009) & -0.001 (-0.013, 0.010) & -0.002 (-0.013, 0.009) \\
pestic97 & -0.001 (-0.008, 0.006) & -0.002 (-0.008, 0.005) & -0.003 (-0.009, 0.004) & -0.002 (-0.008, 0.004)  \\
urbmdhi & \textbf{-0.014} (-0.022, -0.006) & \textbf{-0.013} (-0.021, -0.005) & \textbf{-0.013} (-0.021, -0.005) & \textbf{-0.011} (-0.018, -0.003) \\
\bottomrule
\end{tabular}
    \vspace{-3mm}
\begin{flushleft} 
{\scriptsize 
agkffact, soil erodibility factor; bfi, base flow index; cbnf, cultivated biological N fixation; conif, coniferous forest cover; crophay, crop/hay land cover; fert, synthetic N fertilizer use; manure, manure application; pestic97, 1997 pesticide use; urbmdhi, medium/high-density urban land cover.
}
\end{flushleft}
\end{table}

Tables~\ref{table:mmiresult_n950} and~\ref{table:mmiresult_n947} summarize results based on the full dataset $\calD^\star$ ($n = 950$) and the reduced dataset $\calD$ ($n = 947$), respectively. Based on the full dataset results in Table~\ref{table:mmiresult_n950}, compared to the beta rectangular regression model, the micobin regression model exhibits greater stability of regression coefficients, improved predictive performance in terms of PSIS-LOO and WAIC on full dataset, and substantial computational advantages. 

Table~\ref{table:mmiresult_avgslope} reports results based on the dataset $\calD^\circ$ ($n = 949$) in terms of average slopes, also known as average marginal effects \citep{Williams2012-jz, Arel-Bundock2024-ee}. These quantities are computed as the empirical average of the partial derivative $\frac{\partial E(y \mid \bm{x})}{\partial x_j} \big|_{\bm{x} = \bm{x}_i}$ across the observed data points $i = 1, \dots, n$ for a given covariate $x_j$. Average slopes are often interpreted as an approximation of an average change of $E(y\mid \bmx)$ associated with a unit change of $x_j$, holding other covariates constant \citep{Norton2019-ur}. As an example, the estimated average marginal effect of urbmdhi from the micobin regression is -0.011. This can be interpreted as doubling the percentage of medium/high-density urban land cover is associated with a decrease of MMI by 1.1 on average (on a 0-100 scale), holding the other covariates constant.

Figure~\ref{fig:mmipredict_comparison} compares spatial MMI predictions at 55,215 unsampled lakes from each model against those from the spatial micobin regression. The results show that the predictions of the spatial beta regression model are more influenced by the two lakes with low MMI, leading to a substantial underestimation of MMI of almost 14 units (on the original scale of 0-100), compared to the beta-rectangular and micobin. In contrast, the spatial cobin regression model is noticeably less influenced by these two lakes. Figure~\ref{fig:trace} shows trace plots of log-likelihood conditional on random effects from four models based on $\calD^\circ$, which form the basis for computing PSIS-LOO and WAIC.

\begin{figure}
    \centering
    \includegraphics[width=\linewidth]{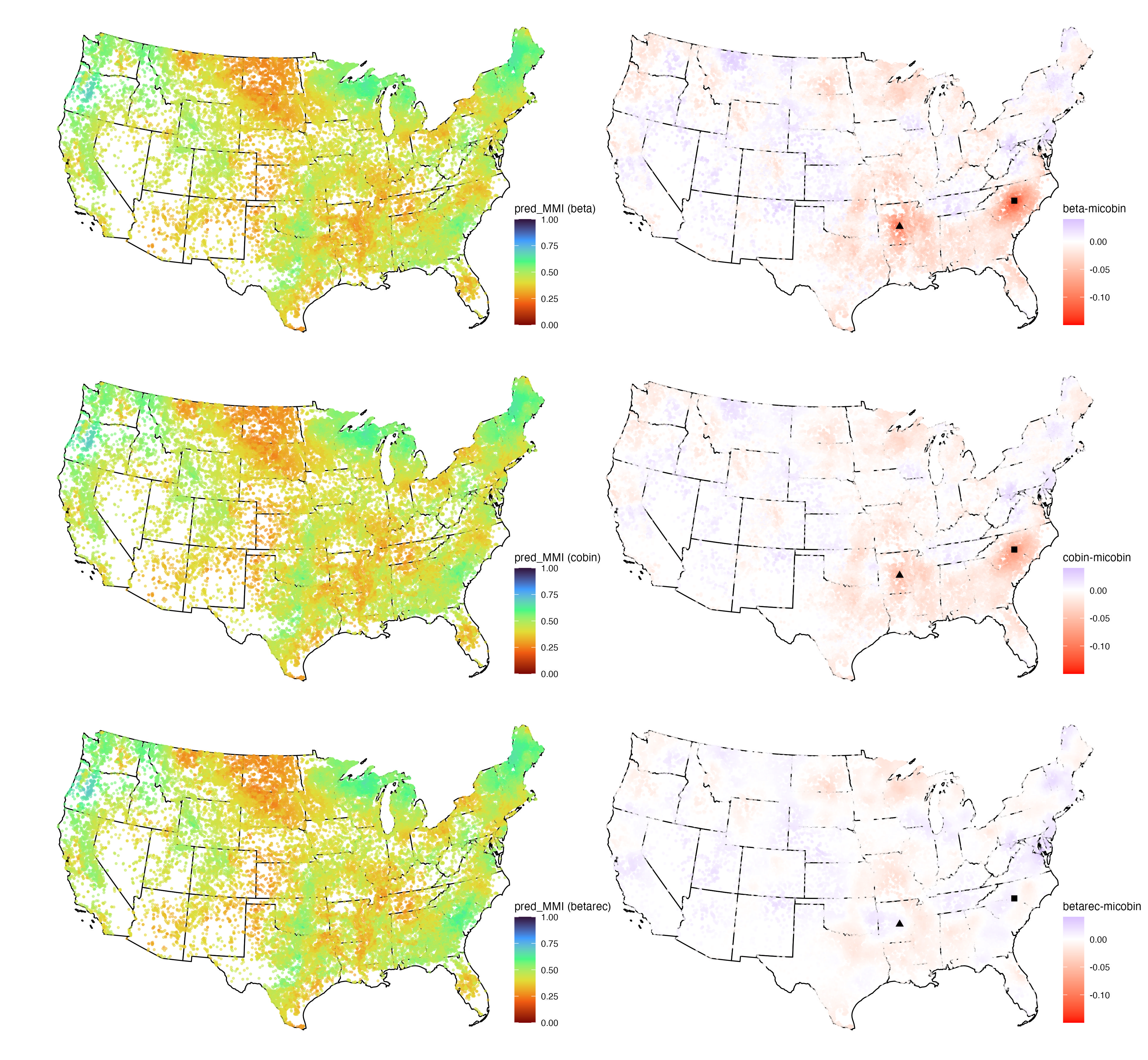}
    \caption{(Left) Predicted MMI at 55,215 lakes from the spatial beta (left top), spatial cobin (left middle), and spatial beta rectangular (left bottom) regression model based on $\calD^\circ$ in terms of posterior predictive mean of $E\{Y(s^*) \mid X(s^*)\}$ at unsampled location $s^*$. (Right) Differences in predicted MMI between spatial micobin regression and each of the three models on the left, respectively. See Figure~\ref{fig:qresidplot} for the legend corresponding to triangle and square markers, indicating the location of two low MMI lakes.}
    \label{fig:mmipredict_comparison}
\end{figure}

\begin{figure}
    \centering
    \includegraphics[width=\linewidth]{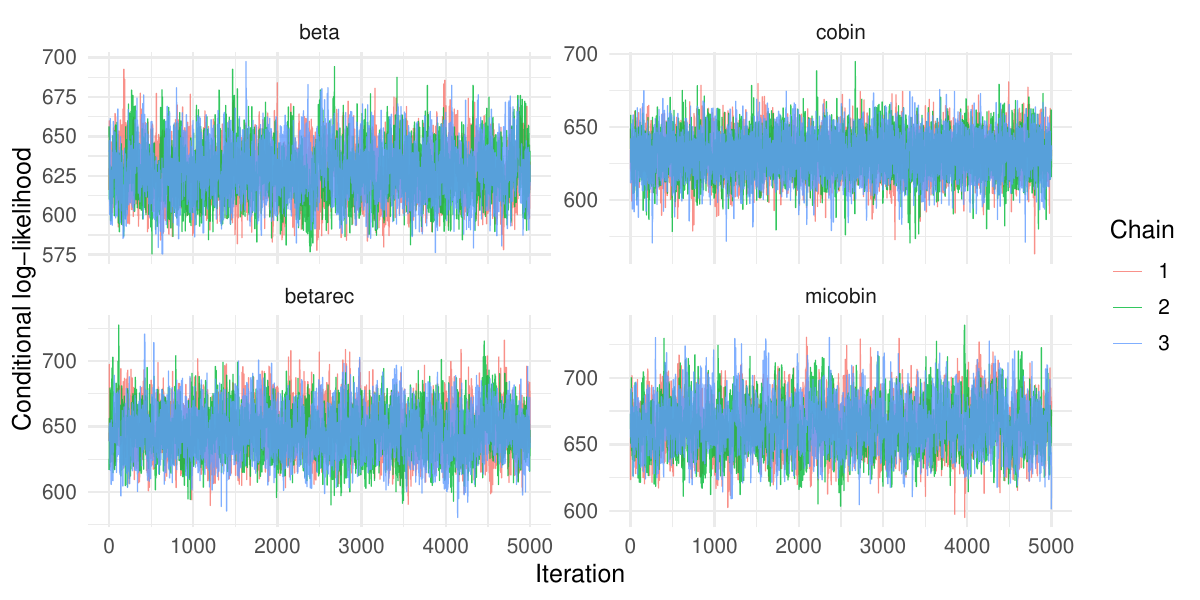}
\caption{Trace plots of the log-likelihood (conditional on spatial random effects) from the spatial beta, cobin, beta rectangular, and micobin regression models, based on dataset $\calD^\circ$. Each panel shows results from three MCMC chains.}
    \label{fig:trace}
\end{figure}

\section{Data augmentation and boundary-concentrated data}
\label{appendix:imbalanced}

\subsection{Bayesian cobin regression under boundary-concentrated settings}
The Kolmogorov-Gamma augmentation scheme proposed in this article shares many similarities with the P\'olya-Gamma augmentation scheme \citep{Polson2013-gb}, including its limitations.  
One important limitation of the P\'olya-Gamma augmentation for logistic models arises in highly unbalanced binary data settings, where the corresponding Markov chain often exhibits poor mixing. This issue is formally investigated in \citet{Johndrow2019-os}, where an ``infinitely imbalanced'' data setting \citep{Owen2007-lk} is considered, defined as $(y_1,y_2,\dots,y_n)\in\{0,1\}^n$ such that $\sum_{i=1}^ny_i = 1$ (1 success, $n-1$ failure) with growing $n$. 
In this setting, \citet{Johndrow2019-os} considered an intercept-only logistic model $\mathrm{logit}\{\pr(y_i=1)\} = \beta$ for $i=1,\dots,n$, and investigated the Markov chain arising from Polya-Gamma augmentation, targeting the posterior of $\beta$.

We illustrate through an example that the Markov chain based on the Kolmogorov-Gamma augmentation may exhibit poor mixing behavior when most of the data are concentrated on either 0 or 1. Without loss of generality, analogous to ``infinitely imbalanced'' binary data setting, we consider ``infinitely boundary-concentrated'' setting, defined as $(y_1,y_2,\dots,y_n)\in[0,1]^n$ satisfying $\sum_{i=1}^ny_i = 1$, so that the empirical average of response is $1/n$ with $n$ datapoints. 
In this setting, we consider the intercept-only cobin regression model $Y_i\iidsim \mathrm{cobin}(\beta, \lambda^{-1})$ with cobit link and fixed nuisance parameter $\lambda$. Under the flat prior $p(\beta)\propto 1$, the posterior of $\beta$ given $y_{1:n}$ (such that $\sum_{i=1}^ny_i = 1$, where $\sum_{i=1}^ny_i$ is a sufficient statistic of $\beta$) is
\begin{equation}
\label{eq:post_boundary}
    p(\beta\mid y_{1:n}) = C\left[\frac{e^\beta}{\{(e^\beta-1)/\beta\}^n}\right]^\lambda, \quad \beta\in \bbR
\end{equation}
with $C^{-1} = \int_{-\infty}^\infty\left[\frac{e^\beta}{\{(e^\beta-1)/\beta\}^n}\right]^\lambda \rmd\beta$, which is a proper posterior as long as $n\ge 2$. Note that for fixed $n$, the posterior distribution becomes more concentrated as $\lambda$ increases, approximately by a factor of $\lambda^{-0.5}$, since increasing $\lambda$ by a factor of 2 is equivalent to observing two independent replicates of $y_{1:n}$, based on the form of the likelihood.

\begin{figure}[t]
    \centering
    \includegraphics[width=\linewidth]{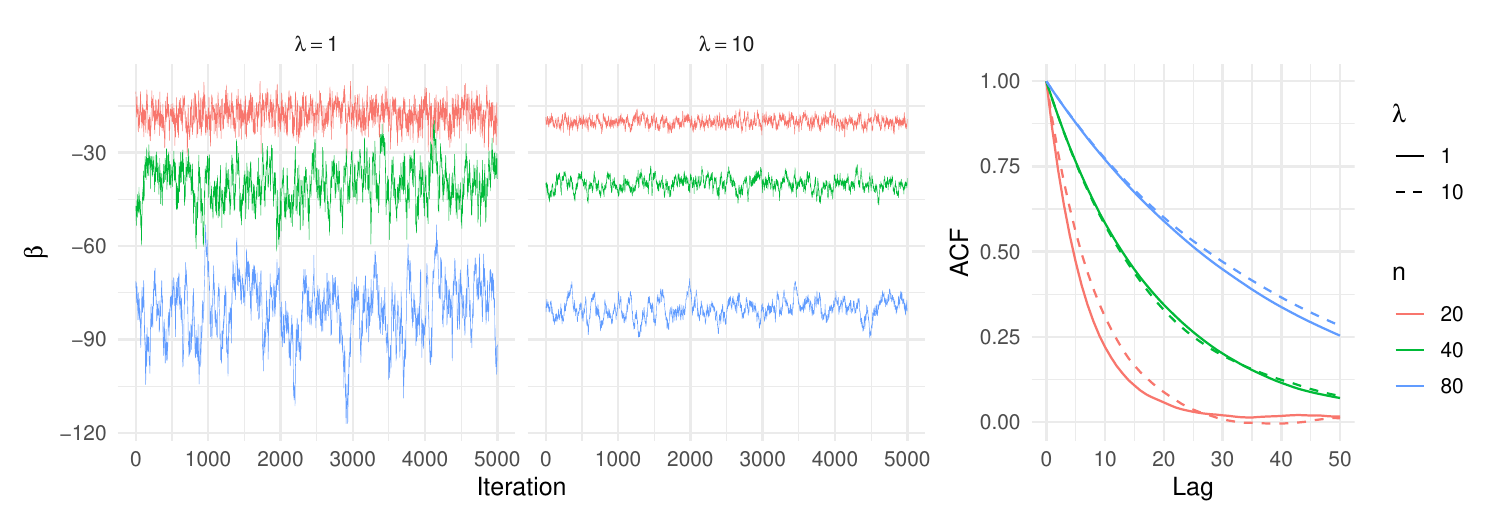}
    \caption{(Left) Examples of MCMC trace plots (only first 5,000 iterations shown) of $\beta$ for $p(\beta \mid y_{1:n})$ in \eqref{eq:post_boundary} with Kolmogorov-Gamma augmentation (Right) Autocorrelation function plots  with different $n$ and $\lambda$ settings.}
    \label{fig:boundaryproximate_trace}
\end{figure}

To sample $\beta$ from $p(\beta\mid y_{1:n})$ in \eqref{eq:post_boundary}, we apply Algorithm~\ref{alg:cobin} with Kolmogorov-Gamma augmentation, where $\lambda$ is fixed. One cycle of the MCMC algorithm from the state $(\beta^{(t)},\{\kappa_i^{(t)}\})$, targeting the posterior \eqref{eq:post_boundary}, consists of 
\begin{enumerate}
    \item $\kappa_i^{(t+1)}\mid \beta^{(t)}\iidsim \mathrm{KG}(\lambda, \beta^{(t)})$ for $i=1,\dots,n$, 
    \item $\beta^{(t+1)}\mid \{\kappa_i^{(t+1)}\}_{i=1}^n \sim N(\lambda(1-0.5n)/K, 1/K)$ where  $K = \sum_{i=1}^n\kappa_i^{(t+1)}$.
\end{enumerate}
For each setting of $n \in \{20, 40, 80\}$ and $\lambda \in \{1, 10\}$, we generate $y_{1:n}$ from $Y_i\iidsim \mathrm{cobin}(\beta, \lambda^{-1})$ and run the MCMC algorithm with a total of 101,000 iterations, discarding the first 1,000 samples as burn-in. 
Figure~\ref{fig:boundaryproximate_trace} presents trace plots of $\beta$ (showing only the first 5,000 iterations after burn-in for clearer visualization of mixing behavior) and the corresponding autocorrelation functions (ACF). Both the trace plots and ACF indicate that the mixing of $\beta$ deteriorates as $n$ increases. For a fixed $n$, increasing $\lambda$ results in a more concentrated posterior; however, it has a negligible effect on the mixing behavior of $\beta$ as shown in the ACF.

These results suggest that in intercept-only models, when the empirical average of data $n^{-1}\sum_{i=1}^n y_i$ becomes increasingly concentrated near the boundary (which leads to more extreme intercept estimates), the Kolmogorov-Gamma augmentation scheme leads to poor mixing of $\beta$. This behavior appears to be primarily driven by the proximity of the empirical average of the data to the boundary, regardless of the empirical variance of the data that affects the nuisance parameter $\lambda$.

\subsection{Key insights on poor mixing behavior and open problems}
We illustrate that the reason for poor mixing under the Kolmogorov-Gamma augmentation with highly boundary-concentrated data is analogous to that of the P\'olya-Gamma augmentation with highly imbalanced data. For the P\'olya-Gamma case, the main reason of the poor mixing is that the posterior distribution of $\beta$ covers a relatively wide region, while the typical step sizes of $\beta$ induced by the P\'olya-Gamma augmentation become increasingly narrower as $n$ grows \citep{Johndrow2019-os}. This scale mismatch makes the typical MCMC move size of $\beta$ too small relative to the bulk of the posterior, leading to slow mixing. We show that a similar phenomenon happens for the Kolmogorov-Gamma augmentation. First, we have the following results on the posterior distribution $p(\beta \mid y_{1:n})$ \eqref{eq:post_boundary} under an infinitely boundary-concentrated setting. We defer technical lemmas to the end of this section for better reading.

\begin{theorem}
\label{thm:mledensheight}
    Denoting $\hat\beta = \argmax_\beta p(\beta\mid y_{1:n})$, we have $p(\hat\beta\mid y_{1:n}) = \calO(1/\log n)$.   
\end{theorem}
\begin{proof}
    Lemma~\ref{lemma:mledensheight1} and 
    Lemma~\ref{lemma:mledensheight2} shows that posterior density $p(\beta\mid y_{1:n})$ is essentially flat around the interval of length $\Theta(\log n)$ that contains $\hat\beta$. This implies $p(\hat\beta \mid y_{1:n}) = \calO(1/\log n)$ since the posterior density must integrate to 1.
\end{proof}
\begin{figure}[t]
    \centering
    \includegraphics[width=\linewidth]{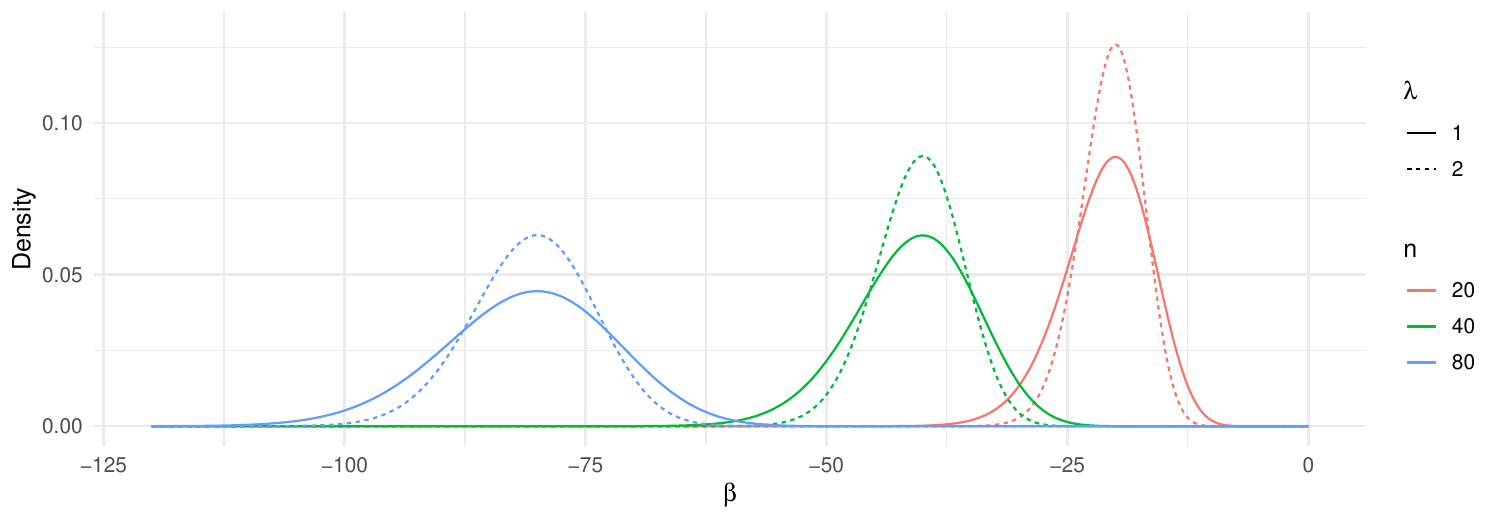}
    \caption{Posterior densities $p(\beta \mid y_{1:n})$ in \eqref{eq:post_boundary} under infinitely boundary-concentrated scenarios ($\sum_{i=1}^ny_i = 1$) for $n\in\{20,40,80$\} and $\lambda \in \{1,2\}$. Normalizing constants are calculated numerically.}
    \label{fig:boundaryproximate_target}
\end{figure}

Theorem~\ref{thm:mledensheight} implies that the bulk of the posterior covers the interval of length $\Omega(\log n)$. See Figure~\ref{fig:boundaryproximate_target} for visualization of $p(\beta\mid y_{1:n})$ with $\sum_{i=1}^n y_i = 1$ when $n\in\{20,40,80\}$ and $\lambda = 1$ or $\lambda = 2$.

Recall that one cycle of the MCMC algorithm for the model $Y_i\iidsim \mathrm{cobin}(\beta, \lambda^{-1})$ with $\sum_{i=1}^n y_i = 1$ targeting the posterior \eqref{eq:post_boundary}, consists of steps 
\begin{enumerate}
    \item $\kappa_i^{(t+1)}\mid \beta^{(t)}\iidsim \mathrm{KG}(\lambda, \beta^{(t)})$ for $i=1,\dots,n$, 
    \item $\beta^{(t+1)}\mid \{\kappa_i^{(t+1)}\}_{i=1}^n \sim N(\lambda(1-0.5n)/K, 1/K)$ where  $K = \sum_{i=1}^n\kappa_i^{(t+1)}$.
\end{enumerate}

When $\beta^{(t)}$ is close to the posterior mode (e.g. $\beta^{(t)} = -n$), the mean and variance of $K = \sum_{i=1}^n\kappa_i^{(t+1)}$ satisfy $
E(K\mid \beta^{(t)} = -n) = \lambda(0.5 - 1/n) + \calO(ne^{-n})\approx \lambda(0.5-1/n)$ and $\var(K\mid \beta^{(t)} = -n) = \lambda/(2n^2) - 2\lambda/n^3 + \calO(n^{-1}e^{-n})$; see Lemma~\ref{lemma:mledensheight3}. 
Note that $\var(K\mid \beta^{(t)} = -n)$ quickly shrinks to 0 as $n$ grows. Conditioning on $K=\lambda(0.5-1/n)$, we have 
\[
\beta^{(t+1)}\mid K=\lambda(0.5-1/n)\sim N\left(-n, \frac{2}{\lambda}\frac{n}{n-2} \right).
\]
This means that, approximately for large $n$, the marginal transition kernel $k(\beta^{(t+1)}\mid\beta^{(t)})$ around the posterior mode ($\beta^{(t)}=-n$) involves step sizes proportional to $\lambda^{-0.5}$. This is in contrast to the fact that, as shown in Theorem~\ref{thm:mledensheight}, the bulk of the posterior $p(\beta \mid y_{1:n})$ has increasing width with at least a $\log n$ factor.
This creates a mismatch between the typical step size of the Markov chain of $\beta$ and the target distribution’s asymptotic spread. For fixed $n$, this mismatch does not appear to vanish even when the posterior becomes more concentrated with increasing $\lambda$, since larger $\lambda$ simultaneously reduces the step size at approximately the same rate, i.e., by a factor of $\lambda^{-0.5}$, as the posterior contracts. In fact, although the autocorrelation of $\beta$ remain similar, it further deteriorates the efficiency of the Markov chain, as the time required to sample a $\mathrm{KG}(\lambda, c)$ random variable based on the sum of $\lambda$ i.i.d. $\mathrm{KG}(1, c)$ variables increases linearly with $\lambda$. 
This phenomenon suggests several imporant and promising directions for future research, analogous to recent advances in improving data augmentation MCMC for imbalanced binary and categorical data \citep{Duan2018-zl, Johndrow2019-os, Zens2024-nb}.

\subsection{Technical lemmas}

\begin{lemma} 
\label{lemma:mledensheight1}
For $n\ge 2$, $p(\beta \mid y_{1:n})$ is unimodal and attains maximum at $\hat{\beta} = - n + \calO(1)$.
\end{lemma}

\begin{proof} First note that $p(\beta \mid y_{1:n})>0$ for any $\beta$, i.e. the posterior is supported on $\bbR$.  
    By direct calculation, the derivative of $p(\beta \mid y_{1:n})$ with respect to $\beta$ satisfies $p'(\beta \mid y_{1:n}) = \lambda f(\beta)p(\beta \mid y_{1:n})$ with $f(\beta) = 1-n\{1-\beta^{-1}+1/(e^\beta-1)\}=1-ng_{\mathrm{cobit}}^{-1}(\beta)$. Since $f'(\beta) = -n/\beta^2+ (ne^\beta)/(e^\beta-1)^2 <0$ for all $\beta$, $f(\beta) = 0$ has the unique solution $\hat{\beta}$, which satisfies $g_{\mathrm{cobit}}^{-1}(\hat{\beta})=1/n$. Since $f(-n-1)>0$ and $f(-n+2.1)<0$ for $n\ge 2$, $p(\beta \mid y_{1:n})$ is unimodal and $\hat{\beta} = -n + \calO(1)$.   
\end{proof}

\begin{lemma} 
\label{lemma:mledensheight2}
Let $\beta_1 = -n + r \log(n)$ and $\beta_2 = -n - r\log(n)$ where $r$ is some arbitrary constant satisfying $\hat{\beta}\in (\min(\beta_1,\beta_2), \max(\beta_1,\beta_2))$. Then $p(\beta_1\mid y_{1:n})/p(\beta_2 \mid y_{1:n}) \to 1$ as $n\to\infty$. 
\end{lemma}
\begin{proof} We have 
    \begin{align}
        \frac{p(\beta_1\mid  y_{1:n})}{p(\beta_2\mid  y_{1:n})} &= \left\{e^{\lambda(\beta_1-\beta_2)}\left(\frac{\beta_1}{\beta_2}\right)^{\lambda n} \right\}\left\{\left(\frac{e^{\beta_2}-1}{e^{\beta_1}-1}\right)^{\lambda n}\right\}
    \end{align}
We prove the statement by showing that the two curly bracket terms above both converge to 1. For the first term, using $\beta_1-\beta_2 = 2r \log(n)$, and  
\[
\log(\beta_1/\beta_2) = \log(1-rn^{-1}\log n) - \log ( 1+ rn^{-1}\log n) = -\frac{2r\log n}{n} - \calO\left(\frac{\log^3 n}{n^3}\right)
\]
for sufficiently large $n$ such that $|rn^{-1}\log n|< 1$ using Taylor expansion $\log(1+x) = x-x^2/2+x^3/3-\dots$ for $|x|<1$. Thus
\begin{align*}
\left\{e^{\beta_1-\beta_2}\left(\frac{\beta_1}{\beta_2}\right)^n\right\}^\lambda = \left\{n^{2r}\times \exp\left(n \log \frac{\beta_1}{\beta_2}\right)\right\}^\lambda =n^{2\lambda r}n^{-2\lambda r} \exp\left( \calO(\frac{\log^3 n }{n^2})\right)\to 1 
\end{align*}

For the second term, using $e^{\beta_1} = n^re^{-n}$ and $e^{\beta_2} = n^{-r}e^{-n}$ so that $|(e^{\beta_1}-e^{\beta_2})/(1-e^{\beta_1})| = e^{-n}|n^r - n^{-r}| / |1-e^{\beta_1}|= \calO(e^{-n}n^{|r|})$. Since 
\[
\left(\frac{e^{\beta_2}-1}{e^{\beta_1}-1}\right)^n = \exp\left( n \log \left(\frac{e^{\beta_2}-1}{e^{\beta_1}-1}\right) \right) = \exp\left( n \log\left(1+\frac{e^{\beta_1}-e^{\beta_2}}{1-e^{\beta_1}}\right) \right),
\]
using $|\log(1+x)|\le 2|x|$ for $|x|<1/2$, we have $n |\log\left(1+\frac{e^{\beta_1}-e^{\beta_2}}{1-e^{\beta_1}}\right)|\le 2n |\frac{e^{\beta_1}-e^{\beta_2}}{1-e^{\beta_1}}|\to 0$, which implies $\left(\frac{e^{\beta_2}-1}{e^{\beta_1}-1}\right)^n \to 1$, thus $\left(\frac{e^{\beta_2}-1}{e^{\beta_1}-1}\right)^{\lambda n} \to 1$. 
\end{proof}

\begin{lemma}
\label{lemma:mledensheight3}
    If $\kappa \sim \mathrm{KG}(b,c)$ and $c\neq 0$, we have 
\begin{align*}
 E(\kappa) &=b c^{-2}\{(c/2)\coth(c/2) - 1\}\\
\var(\kappa) &= \frac{b}{4c^4}\mathrm{csch}^2(c/2) (c^2 + c \sinh(c) - 4 \cosh(c) + 4)
\end{align*}
If $c=0$, $E(\kappa) = b/12$ and $\var(\kappa) = b/360$. For large $c>0$, recalling $\mathrm{KG}(b,c)\stackrel{d}{=}\mathrm{KG}(b,-c)$,
\begin{align*}
 E(\kappa) &=b (0.5c^{-1}-c^{-2}) + \calO(c^{-1}e^{-c})\\
\var(\kappa) &= b(0.5c^{-3}-2c^{-4})+\calO(c^{-2}e^{-c})
\end{align*}
\end{lemma}
\begin{proof}
These are derived straightforwardly from the definition of the Kolmogorov-Gamma as an infinite convolution of gammas, using $\sum_{k=1}^\infty 1/\{k^2+c^2/(4\pi^2)\} = (\pi^2/c)\coth(c/2) - 2\pi^2/c^2$ \citep[][\S 1.421.4]{Gradshteyn2014-ym}. Asymptotic derivations are straightforwardly follow from $(c/2)\coth(c/2) = c/2 + ce^{-c}/(1-e^{-c})$, $\mathrm{csch}^2(c/2) = 4e^c/(e^c-1)^2 = 4e^{-c}(1+2e^{-c}+3e^{-2c}+\cdots)$.
\end{proof}

\end{document}